\newtheorem{definition}{Definition}
\newtheorem{example}{Example}
\newtheorem{corollary}{Corollary}
\newtheorem{note}{Note}
\newtheorem{theorem}{Theorem}
\newtheorem{lemma}{Lemma}
\newtheorem{observation}{Observation}
\renewcommand{\vec}[1]{\mathbf{#1}}
\DeclareMathOperator{\Beta}{\bm{\beta}}
\DeclareMathOperator{\Ex}{\mathbb{E}}
\newcommand{\explain}[1]{\tag{\textcolor{gray}{#1}}}
\definecolor{ForestGreen}{rgb}{.05,.50,.05}
\definecolor{darkmagenta}{rgb}{0.55, 0.0, 0.55}
\newcommand{\simina}[1]{{\color{blue}\noindent\textbf{Simina says: }\marginpar{****}\textit{{#1}}}}%
\newcommand{\m}{K}
\newcommand{\allbids}[0]{\ensuremath{\mathcal{S}}}
\newcommand{\stepa}[1]{\overset{\rm (a)}{#1}}
\newcommand{\stepb}[1]{\overset{\rm (b)}{#1}}
\title{Learning and Collusion in Multi-unit Auctions\footnote{Authors in alphabetical order. This work was done in part  while the authors were visiting the Simons Institute for the Theory of Computing. Simina was supported in part by the  US National Science Foundation under grant CCF-2238372.}}
\author{Simina Br\^anzei\footnote{Purdue University. E-mail: simina.branzei@gmail.com.}
	\and 
	Mahsa Derakhshan \footnote{Northeastern University. E-mail: m.derakhshan@northeastern.edu.}
	\and 
	Negin Golrezaei \footnote{MIT. E-mail: golrezae@mit.edu.}
	\and 
	Yanjun Han\footnote{MIT. E-mail: yjhan@mit.edu.}
}
\date{\today}
\begin{document}
	\maketitle

	\begin{abstract}
		In a carbon auction, licenses for CO2 emissions are allocated among multiple interested players. Inspired by this setting, we consider repeated multi-unit auctions with uniform pricing, which are widely used in practice. Our contribution is to analyze these auctions in both the offline and online settings, by designing efficient bidding algorithms with low regret and giving regret lower bounds. We also analyze the quality of the equilibria  in  two  main variants of the auction, finding that one  variant is susceptible to collusion among the bidders while the other is not.
	\end{abstract}
	
	\section{Introduction}
	
	In a multi-unit auction,  a seller brings multiple identical units of a good to a group of players (buyers)  interested in purchasing the items. 
	Due to the complexity and depth of the model, 
	the multi-unit auction has been  explored in a significant volume of  research~\cite{Maskin_multiunit,multiunit_uniform_wiggans,dobzinski2007mechanisms,dobzinski2012multi,feldman2012revenue,ausubel2014demand,dobzinski2015multi,borgs2005multi,bartal2003incentive,GMP13,DHP17,markakis_multiunit}  and  recommended as a lens to the entire field of algorithmic mechanism design \cite{Nisan14}. The prior literature has  often focused on understanding the equilibria  of the auction under various solution concepts  and designing revenue or welfare maximizing one-shot mechanisms. 
	
	We consider the setting where each player $i$  has a value $v_{i,j}$  for receiving a $j$-th unit in their bundle, which is reported to the auctioneer in the form of a bid $b_{i,j}$. After receiving the bids, the auctioneer computes a  price $p$ per unit, then sorts the bids in descending order and allocates the $j$-th unit to the player that submitted  the $j$-th highest bid, charging them  $p$ for the unit. The price $p$ is set so that all units are sold and the resulting allocation has the property that each player gets their favorite bundle given their preferences.
	
	This allocation rule represents the well-known  \emph{Walrasian mechanism} in the multi-unit setting:  compute the market (aka  competitive or Walrasian) equilibrium with respect to the reported valuations. The template for the mechanism is: elicit the valuations from the buyers and then assign a set of  prices  to the goods such that when each buyer  takes their  favorite bundle of goods at the given prices, the market clears, i.e.  all goods
	are sold and there is no excessive demand for any  good (see, e.g., \cite{BLNL14}).
	Due to the strong fairness and efficiency properties of the market  equilibrium, the  resulting allocation  is envy-free with respect to the reported valuations.  
	A strand of research  concentrated on  achieving fair pricing in auctions, which often involves setting uniform prices for identical units~\cite{guruswami2005profit}.   
	

	Repeated auctions  take place in many real world settings, such as when  allocating   licenses for carbon (CO2 emissions)  ~\cite{CK02}, ads on online platforms~\cite{Balseiro15, golrezaei2018dynamic},    U.S. Treasury notes to investors or  trade exchanges over the internet~\cite{MarkakisT15}.
	Repeated  mechanisms  often give rise to complex patterns of behavior. For example,  the bidders may use the past transaction prices as a way of discovering the valuations of the competing bidders, 
	or  engage in collusion using bid rotation schemes since the history of bids can serve as a  communication device ~\cite{AOYAGI200379}. 
	In fact, \cite{Algorithmic_colusion} show that in markets such as Cournot oligopolies with stochastic demand, collusion is observed even when  the bidders are not explicitly trying to conspire. Instead, collusion emerges   naturally when the agents use AI algorithms based on Q-learning to update their  strategies. 
	

	Since repeated multi-unit auctions are used in high-stakes environments such as allocating licenses for carbon ~\cite{goldner2020reducing}, it is paramount to understand: $(i)$  what features should the auction  have in the first place?; and 
	$(ii)$  given an auction format, how should the players bid to maximize their utility?

	A good bidding strategy generally  guarantees  the player using it will not ``regret'' its strategy in hindsight,  regardless of the strategies of others. There has been extensive work on understanding dynamics in auctions and games under
	various behavioral models of the players (see, e.g., \cite{MAL_077,RST17,DS16,MarkakisT15,HKMN11,bhawalkar2011welfare,lucier2010price, golrezaei2018dynamic, golrezaei2021bidding, golrezaei2019IC,BF19}), but due to the complexity of understanding dynamical systems many questions remain open.

	
	
	In this paper, we   design efficient learning algorithms that players can use for  bidding, in both the offline and online settings. We also give regret lower bounds and then   analyze the quality of the equilibria in  two  main variants of the auction, finding that one  variant is susceptible to collusion among the bidders while the other is not.

	\subsection{Model}
	Consider a seller with $\m$ identical units of a good and a set of players $[n] = \{1, \ldots, n\}$ that have quasi-linear valuations with decreasing marginal returns. 
	Formally, let  $\vec{v}_i = (v_{i,1}, \ldots, v_{i,\m}) $ be the valuation of player $i$, where $v_{i,j} \geq {0}$ is the marginal value  for getting a $j$-th unit in their bundle. The value of player $i$ for getting  $\ell$ units is $V_i(\ell) = \sum_{j=1}^{\ell} v_{i,j}$.  
	Diminishing marginal returns require that $v_{i,1} \geq v_{i,2} \geq \ldots \geq v_{i,\m} $ $\forall i \in [n]$. A player $i$ is said to be \emph{hungry} if $v_{i,j} > 0$ $\forall j \in [\m]$.
	
	\smallskip 
	
	\noindent \textbf{Auction format: uniform pricing.}
	The auctioneer announces $\m$ units of a good for sale. 
	Each player $i$ submits bids $\vec{b}_{i} = (b_{i,1}, \ldots, b_{i,\m})$, where $b_{i,j}$ 
	is player $i$'s bid for $j$-th unit.
	The auctioneer sorts  the bids in decreasing order and computes a price $p$. Then   for each $j = 1, \ldots, \m$,  the auctioneer allocates the $j$-th unit to the player that submitted the $j$-th highest bid, charging them $p$ for the unit. If multiple bids are equal to each other,  ties are broken in lexicographic order of the player names. \footnote{In other words, after the players submit their bids, the auctioneer arranges these bids in descending order   $(c_1, \ldots, c_j, \ldots, c_{n \cdot \m})$ and  then allocates unit $1$ to the bidder with bid $c_1$, unit $2$ to the bidder with bid $c_2$, and so on.}
	
	We consider two variants of the auction: 
	\begin{enumerate}[(i)]
		\item \emph{the {$\m$-th price auction}}, where the price per unit is set to the $\m$-th highest bid; 
		\item \emph{the  {$(\m+1)$-st price auction}}, where the price  is set to the $(\m+1)$-st highest bid. 
	\end{enumerate}
	
	Both of these variants implement the market equilibrium with respect to the reported valuations of the players, which are represented by their bid vectors. The $\m$-th price format selects the  maximum possible  market equilibrium  price, namely the highest possible price per unit at which all the goods are sold and each player purchases their favorite bundle given their utility. The $(\m+1)$-st format selects the minimum possible market equilibrium price.
	
	\paragraph{\it Allocation, Price, Utility.}  An allocation $\vec{z}$ is an assignment of units to the players such that $z_i \geq 0 $ is the number of units received by player $i$ and $\sum_{j=1}^n z_j = \m$.
	
	Given bid profile $\vec{b}$, let $p(\vec{b})$ be the \emph{price} per unit and  $\vec{x}(\vec{b}) = (x_1(\vec{b}), \ldots, x_n(\vec{b}))$ the \emph{allocation} obtained when the auction is run  with bids  $\vec{b}$, where  $x_i(\vec{b}) \in \{0, \ldots, \m\}$ is the number of units received by player $i$.
	The \emph{utility} of player $i$ at  $\vec{b}$  is: 
	\[ u_i(\vec{b}) = V_i(x_i(\vec{b})) - p(\vec{b}) \cdot x_i(\vec{b})\,.
	\]

	
	\paragraph{\it Revenue and Welfare.} The revenue obtained by the auctioneer at a   bid profile  $\vec{b}$ equals $\m \cdot p(\vec{b})$. The social welfare of the bidders at $\vec{b}$ is $SW(\vec{b}) = \sum_{i=1}^n V_i(x_i(\vec{b}))$.

	\paragraph{\it Notation.} W.l.o.g., we have  $b_{i,1} \geq \ldots \geq b_{i,\m}$ for each   bid profile $\vec{b}$ is such that and  $i \in [n]$. Given a bid profile $\vec{b} = (\vec{b}_1, \ldots, \vec{b}_n)$, let $\vec{b}_{-i}$ denote the bid profile of everyone except player $i$. The bid profile where player $i$ uses a bid vector $\Beta$ and the other players bid $\vec{b}_{-i}$ is denoted  $(\Beta, \vec{b}_{-i})$. 
	
	An  illustration of how the auction works is given in  Example~\ref{ex:model}.
	\begin{example} \label{ex:model}
		Let $\m = 3$  and $n=2$. Suppose the valuations are $\vec{v}_1 = (5,2)$ and $\vec{v}_2 = (4,1)$. If the players submit bids $\vec{b}_1 = (2,1)$ and $\vec{b}_2 = (3,2)$,  the bid are sorted in the order $(b_{2,1}, b_{1,1}, b_{2,2}, b_{1,2})$. Then the allocation is $x_1 = 1$ and $x_2 = 2$. 
		\begin{itemize} 
			\item Under the $\m$-th price auction, the price is set to $p = b_{2,2} = 2$. The utilities of the players are   $u_1(\vec{b}) = V_1(1) - p = 5 - 2 = 3$ and  $u_2(\vec{b}) = V_2(2) - 2 \cdot p = (4 + 1) - 2 \cdot 2 = 1$. 
			\item Under the $(\m+1)$-st price auction, the price is set to $p = b_{1,2} =  1$. The utilities of the players are  $u_1(\vec{b}) = V_1(1) - p = 5 - 1 = 4 $ and   $u_2(\vec{b}) = V_2(2) - 2 \cdot p = (4 + 1) - 2 \cdot 1 = 3$.
		\end{itemize}
	\end{example}
	
	\noindent \textbf{Repeated setting.} 
	We consider a repeated setting, where the auction is run  multiple times among the same set of players, who  can adjust their bids based on the outcomes from previous rounds. 
	
	Formally, in each round $t=1,2,\ldots, T$, the next steps take place: 
	
	\begin{itemize}
		\item  The auctioneer announces $\m$ units for sale. Each player $i \in [n]$  submits  bid vector\\ $\vec{b}_i^t = (b_{i,1}^t, \ldots, b_{i,\m}^t)$ privately to the auctioneer, where $b_{i,j}^t$ is player $i$'s bid for a $j$-th unit at time $t$. Then the auctioneer runs the auction with bids  $\vec{b}^t = (\vec{b}_1^t, \ldots, \vec{b}_n^t)$ and reveals information (feedback) about the outcome to the players. 
		We consider two feedback models for the information revealed at the end of round $t$:
		\begin{itemize}
			\item \textit{Full information:} All the  bids $\vec{b}^t$ are public knowledge.
			\item \textit{Bandit feedback:} The price $p(\vec{b}^t)$ is the only public knowledge;  additionally, each player $i$  privately learns their own allocation ${x}_i(\vec{b}^t)$. 
		\end{itemize}
	\end{itemize} 
	Thus the allocation at time $t$ solely depends on the bids submitted by players in round $t$, excluding any prior rounds. \footnote{For example, consider $\m = 2$ units and $n=2$  players. Suppose that in round $t=5$, player $1$ submits the bid vector $b_{1}^5 = (4, 2)$ and player $2$ submits $b_{2}^{5} = (5, 3)$. The auctioneer sorts these bids  in decreasing order, obtaining the vector $(5, 4, 3, 2)$. Then the allocation in round $5$ is to give  the first unit  to player $1$ and the second unit  to player $2$;  the allocation disregards any events from preceding rounds.}

	
	
	\subsection{Our Results}

	The next sections summarize our results for both the $\m$-th and $(\m+1)$-st  price  variants.
	
	\subsubsection{Offline Setting} 
	In the offline setting,  we are given a  player $i$ with valuation $\vec{v}_i$ and  a history $H_{-i} = (\vec{b}_{-i}^1, \ldots, \vec{b}_{-i}^T)$  containing the bids submitted by the  other players in  past auctions. Here we  restrict the bidding space to a discrete domain $\mathbb{D} = \{ k \cdot \varepsilon \mid k \in \mathbb{N} \}$, for some $\varepsilon > 0$.

	The goal is to find a fixed bid vector $\vec{b}_i^* = (b_{i,1}^*, \ldots, b_{i,\m}^*)$ that  maximizes player $i$'s cumulative utility on the data set given by the history $H_{-i}$, that is, $\vec{b}_i^*  =  \arg\max_{\Beta \in \mathbb{D}^{\m}}
	\; \sum_{t=1}^{T} u_i(\Beta, \vec{b}_{-i}^t)\,.$



	The offline problem is challenging because the decision (bid)   space is exponentially large and hence naively experimenting with every possible bid vector leads to  impractical  algorithms. 
	
	We overcome this  challenge by relying on the structural properties of the offline problem.
	At a high level, we  carefully design a  weighted directed acyclic graph (DAG)  and identify a bijective map between paths from source to sink in the graph and bid profiles of player $i$. 
	Since a maximum weight path in a DAG can be found in polynomial time, this yields a polynomial time algorithm for computing an optimum bid vector.

	\smallskip 
	
	\begin{theorem}[informal] \label{thm:opt_offline}
		Computing an optimum bid vector for one player in the offline setting  is equivalent to finding  a maximum-weight path in a   DAG and can be solved in polynomial time.
	\end{theorem}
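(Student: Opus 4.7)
The plan is to construct a weighted directed acyclic graph (DAG) whose source-to-sink paths are in bijection with non-increasing bid vectors $\Beta \in V^{\m}$ (where $V$ is a finite relevant subset of $\mathbb{D}$), and in which the weight of the path encoding $\Beta$ equals the cumulative utility $\sum_{t=1}^{T} u_i(\Beta, \vec{b}_{-i}^t)$. Once this DAG is in hand, a maximum-weight path, obtained in linear time via topological sort and dynamic programming, yields an optimum bid vector $\vec{b}_i^*$.

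For each round $t$, sort the opposing bids in decreasing order as $c_1^t \geq c_2^t \geq \cdots \geq c_{(n-1)\m}^t$, and define the threshold $D_j^t := c_{\m-j+1}^t$; this is the bid of the others that player $i$'s $j$-th-highest bid must strictly exceed in order to win at least $j$ units. Since $D_j^t$ is non-decreasing in $j$ while $\Beta_j$ is non-increasing in $j$, the number of units won in round $t$ is the unique index
\[
k_t \;=\; \max\bigl\{\, j \in \{0,1,\ldots,\m\} : \Beta_j > D_j^t \,\bigr\}, \qquad \text{with } \Beta_0 = +\infty,\ \Beta_{\m+1} = 0,
\]
characterized by $\Beta_{k_t} > D_{k_t}^t$ and $\Beta_{k_t+1} \leq D_{k_t+1}^t$. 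Crucially, the uniform price can be expressed purely in terms of the adjacent bids and thresholds at the crossing point: $p^t = \min(\Beta_{k_t}, D_{k_t+1}^t)$ under the $\m$-th price rule and $p^t = \max(\Beta_{k_t+1}, D_{k_t}^t)$ under the $(\m+1)$-st price rule. This locality is the key structural observation that enables the DAG construction.

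Build a layered DAG with layers $0,1,\ldots,\m,\m+1$: layer $0$ is a source $s$, layer $\m+1$ is a sink $\tau$, and layer $j \in \{1,\ldots,\m\}$ contains a node $(j,v)$ for each $v \in V$. Add edges $s \to (1,v)$, $(j,v) \to (j+1,v')$ whenever $v' \leq v$ (enforcing non-increasing bids), and $(\m,v) \to \tau$. Every $s$-$\tau$ path corresponds to a bid vector $\Beta = (v_1,\ldots,v_\m)$ and vice versa. Associate each round $t$ with a unique \emph{boundary edge} of this path, namely the edge leaving layer $k_t$: the entire round-$t$ utility $u_i^t(\Beta) = V_i(k_t) - k_t \cdot p^t$ is attributed to that edge, and $0$ to every other edge. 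Concretely, for $1 \leq j \leq \m-1$ set
\[
w\bigl((j,v) \to (j+1,v')\bigr) \;=\; \sum_{t=1}^{T} \mathbbm{1}\!\left[v > D_j^t \text{ and } v' \leq D_{j+1}^t\right] \cdot \bigl(V_i(j) - j \cdot p^t(v,v',j)\bigr),
\]
with $p^t(v,v',j) = \min(v, D_{j+1}^t)$ for the $\m$-th price rule and $p^t(v,v',j) = \max(v', D_j^t)$ for the $(\m+1)$-st price rule; the source edges $s\to(1,v)$ receive weight $0$ (since $k_t = 0$ yields zero utility) and the sink edges $(\m,v)\to\tau$ carry the analogous weight for the $k_t = \m$ case.

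Summing along the path encoding $\Beta$ telescopes into $\sum_{t=1}^T u_i(\Beta, \vec{b}_{-i}^t)$, establishing the claimed equivalence. The DAG has $O(\m|V|)$ vertices and $O(\m|V|^2)$ edges; once edge weights are precomputed in $O(T\m|V|^2)$ time, a longest $s$-$\tau$ path takes linear time in the graph size, for a total polynomial-time algorithm. The main obstacle I anticipate is verifying the round-by-round decomposition — namely, that the boundary-edge formula for $p^t(v,v',j)$ reproduces the true uniform price exactly when the indicator fires and that no round's utility is ever double-counted — which reduces to the monotonicity statement about $\Beta_j$ and $D_j^t$ above. Ties (the case $\Beta_j = D_j^t$) and choosing the polynomially sized candidate set $V \subseteq \mathbb{D}$ (it suffices to include values just above each $D_j^t$, together with $0$) are then routine to handle.
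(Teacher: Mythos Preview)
Your proposal is correct and builds essentially the same layered DAG as the paper, but the way you assign edge weights is genuinely different. The paper rewrites the per-round utility via a telescoping identity (their Lemma~\ref{lem:rewrite_u_i}):
\[
u_i(\vec{h}^t)=\sum_{j=1}^{\m}\Bigl[\mathbbm{1}_{\{x_i\ge j\}}(v_{i,j}-\beta_j)+j\bigl(\mathbbm{1}_{\{x_i>j\}}(\beta_j-\beta_{j+1})+\mathbbm{1}_{\{x_i=j\}}(\beta_j-p)\bigr)\Bigr],
\]
so that \emph{every} edge $(z_{\beta_j,j},z_{\beta_{j+1},j+1})$ along the path carries a nonzero contribution whenever $x_i\ge j$. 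You instead put the \emph{entire} round-$t$ utility $V_i(k_t)-k_t\,p^t$ on the single ``boundary'' edge where the monotone sequences $\beta_j$ and $D_j^t$ cross, and zero on all other edges of the path. Your argument that exactly one edge per round has its indicator fire (via the opposite monotonicities of $\beta_j$ and $D_j^t$) is clean, and your closed-form price formulas $p^t=\min(v,D_{j+1}^t)$ and $p^t=\max(v',D_j^t)$ are correct and make locality transparent. What your route buys is simplicity: no telescoping lemma is needed, and the correctness of the path-weight identity is immediate once the crossing point is identified. The paper's telescoping decomposition distributes weight more evenly across edges, which is not needed for the offline theorem itself but dovetails with how they later set up per-edge upper bounds $\overline{w}(e)$ in the bandit analysis; for the offline statement alone, your decomposition is at least as natural. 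The two loose ends you flag---tie-breaking (which the paper handles via a priority-counting function $\Gamma^t(x)$ rather than a bare threshold $D_j^t$) and the choice of the finite candidate set $V$ (the paper's $\mathcal{S}_i$ is exactly ``$0$, opponents' bids, and opponents' bids $+\varepsilon$'')---are indeed routine and match the paper's treatment.
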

	Theorem~\ref{thm:opt_offline} applies to both versions of the auction, with $\m$-th  and $(\m+1)$-st highest price.
	
	\subsubsection{Online Setting} In the online setting, the players run learning algorithms   to update their bids as they participate in the auction over time. 
	The first scenario we consider is that of full information feedback, where at the end of each round $t$ the auctioneer makes public all the bids  $\vec{b}^t$ submitted in that round. 
	
	Building on the   offline analysis, we design  an efficient online learning algorithm for bidding, which guarantees  low regret for each player using it.
	The main idea is to run multiplicative weight updates, where the experts are paths in the DAG   from the offline setting. Each such path  corresponds to a bid profile for the learner.
	A challenge is that the number of paths in the graph (and so experts) is exponential. Nevertheless, we can achieve   a polynomial  runtime by representing the experts implicitly, using  
	a type of  weight pushing algorithm based on the method of path kernels \cite{takimoto2003path}. 

	
	\smallskip 
	
	\begin{theorem}[Full information feedback, upper bound] \label{thm:intro_MWU_ub}
		For each player $i$ and time horizon $T$, there is an algorithm for bidding  in the repeated auction with full information feedback that runs in  time $O(T^2)$ and  guarantees the player's regret  is  bounded by 
		$O(v_{i,1} \cdot \sqrt{TK^3\log T})$.
	\end{theorem}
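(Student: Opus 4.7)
The plan is to reduce online bidding to a structured experts problem where each expert is a discretized bid vector corresponding, via Theorem~\ref{thm:opt_offline}, to a source-to-sink path in a DAG $G$, and then run Hedge over this (exponentially large) expert class, exploiting the path structure for computational efficiency. The final regret will combine three contributions: the Hedge regret against the best discretized expert, the discretization error, and the per-round utility range.

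First, I would restrict bids to a grid $\mathbb{D}_\varepsilon = \{0,\varepsilon,2\varepsilon,\ldots,v_{i,1}\}$, since bidding above $v_{i,1}$ is weakly dominated round by round, with $\varepsilon$ tuned at the end. The valid sorted bids on this grid are in bijection with the paths of the DAG from the offline result, and I would inherit from that construction the key structural feature that a single-round utility $u_i(\Beta,\vec{b}_{-i}^t)$ decomposes additively along the edges of the path representing $\Beta$. Hence each round $t$ induces an assignment of rewards to edges of $G$.

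Next, I would run Hedge with learning rate $\eta$ over the paths of $G$. Because path rewards are additive in edge rewards, the Hedge weight on a path factors as a product of per-edge contributions $\exp(\eta\cdot(\text{cumulative edge reward}))$. A path can therefore be sampled from this distribution in time $O(|E(G)|)$ using the path-kernel / weight-pushing procedure of~\cite{takimoto2003path}, and updating edge weights after observing $\vec{b}_{-i}^t$ costs the same. Taking $\varepsilon \asymp v_{i,1}/T$ so that $|\mathbb{D}_\varepsilon|=O(T)$, the graph has $\mathrm{poly}(K)\cdot O(T)$ edges, yielding total runtime $O(T^2)$ after absorbing $\mathrm{poly}(K)$ factors. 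The standard Hedge analysis then gives regret $O(M\sqrt{T\log N})$ against the best path, where $M=O(Kv_{i,1})$ bounds a single-round utility (since $V_i(x_i)\le Kv_{i,1}$ and the price is nonnegative) and $\log N\le K\log|\mathbb{D}_\varepsilon|=O(K\log T)$, producing $O(Kv_{i,1}\sqrt{TK\log T}) = O(v_{i,1}\sqrt{TK^3\log T})$. A routine Lipschitz argument shows the discretization loss is $O(\varepsilon K T) = O(v_{i,1}K)$ at the chosen $\varepsilon$, which is absorbed by the Hedge term.

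The main obstacle I anticipate is a clean verification of the decomposition used above, namely that the offline DAG of Theorem~\ref{thm:opt_offline} supports \emph{round-by-round} additive edge rewards (not merely edge weights aggregated over all $T$ rounds), so that maintaining a single vector of running edge weights faithfully reproduces Hedge over the exponentially many path experts. Once this per-round decomposition is made explicit, the runtime analysis follows directly from the path-kernel sampler and the regret bound is a direct invocation of the standard Hedge inequality together with the discretization estimate.
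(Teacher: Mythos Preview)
Your approach is essentially the paper's: reduce to Hedge over source-to-sink paths in the offline DAG, use the per-round additive edge decomposition (which the paper makes explicit via Lemma~\ref{lem:rewrite_u_i} and Definition~\ref{def:graph_G_t}), and implement Hedge with the weight-pushing algorithm of~\cite{takimoto2003path}. The regret calculation---$L=Kv_{i,1}$, $\log N = O(K\log T)$, hence $O(v_{i,1}\sqrt{TK^3\log T})$---is correct and matches the paper's.

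There is, however, a concrete error in your runtime accounting. You claim the DAG has $\mathrm{poly}(K)\cdot O(T)$ edges when $|\mathbb{D}_\varepsilon|=O(T)$, but between two consecutive layers the graph has an edge $(z_{r,j},z_{s,j+1})$ for \emph{every} pair $r\ge s$ in the grid, i.e.\ $\Theta(|\mathbb{D}_\varepsilon|^2)$ edges per layer. With your choice $\varepsilon\asymp v_{i,1}/T$ this gives $|E|=\Theta(KT^2)$ and a total running time of $\Theta(KT^3)$, not $O(T^2)$. The paper avoids this by taking the coarser grid $\varepsilon = v_{i,1}\sqrt{K/T}$, so that $|\mathcal{S}_\varepsilon|=O(\sqrt{T/K})$, $|E|=O(T)$, and the per-round cost is $O(T)$; the discretization error then becomes $TK\varepsilon = v_{i,1}\sqrt{TK^3}$, which is of the same order as the Hedge term and still absorbed into the final bound. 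Once you adjust $\varepsilon$ in this way, your argument goes through exactly as in the paper.
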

	
	To the best of our knowledge, our paper is the first to connect the path kernel setting with auctions and obtain efficient algorithms for learning in auctions via this connection.

	We also consider bandit feedback, which limits the amount of information the players learn about each other: at the end of each round  the auctioneer announces publicly only the  resulting price, and then privately tells each player their allocation. Bandit feedback could be relevant for reducing the amount of collusion among the players in repeated auction environments~\cite{collusion}.
	
	\smallskip

	\begin{theorem}[Bandit feedback, upper bound] \label{thm:intro_bandit_ub}
		For each player $i$ and time horizon $T$, there is an algorithm for bidding in the repeated auction with bandit feedback that runs in  time \\ $O(KT + K^{-5/4}T^{7/4})$ and guarantees the player's regret  is bounded by  
		$O(v_{i,1} \cdot \min \{ \sqrt[4]{T^3K^7\log T}, KT\})$. 
	\end{theorem}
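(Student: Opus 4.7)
The plan is to lift the full-information algorithm of Theorem~\ref{thm:intro_MWU_ub} to the bandit setting through an EXP3-style mixture of exploration and exploitation. In each round the player flips a biased coin: with probability $\gamma$ they perform a carefully designed exploration action (a simple probing bid), and with the remaining probability they play the bid vector currently recommended by the MWU-over-paths algorithm run on unbiased estimates of past utilities. The $\min\{\,\cdot\,,KT\}$ in the statement is the trivial bound, since any player loses at most $v_{i,1}K$ per round by bidding arbitrarily, so the task reduces to establishing the $\sqrt[4]{T^3K^7\log T}$ branch.

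The first step is to build an unbiased estimator $\widehat{u}_i^t(\vec{b}_i)$ of $u_i(\vec{b}_i,\vec{b}_{-i}^t)$ for every candidate bid vector from the bandit signal $(p(\vec{b}^t),x_i(\vec{b}^t))$. The key structural observation is that this utility depends on $\vec{b}_{-i}^t$ only through its top-$K$ sorted components. I would design exploration rounds to probe this sorted vector --- e.g.\ submit a single non-zero coordinate at a uniformly random level $b\in\mathbb{D}$ and use inverse-probability weighting to reconstruct unbiased estimates of the quantities needed to evaluate $u_i(\vec{b}_i,\vec{b}_{-i}^t)$ for every $\vec{b}_i$. Because utilities decompose additively across coordinates of a bid vector via the DAG structure from Theorem~\ref{thm:opt_offline}, these coordinate-level estimators aggregate naturally into an estimator compatible with the path-kernel implementation underlying Theorem~\ref{thm:intro_MWU_ub}.

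The second step is to substitute these estimators into the MWU update and carry out the standard EXP3 analysis. Letting $N=|\mathbb{D}|^K$ denote the number of experts (so $\log N=O(K\log T)$ for the customary discretization $\varepsilon=\Theta(1/T)$), the regret decomposes as
\begin{equation*}
\mathrm{Regret} \;\le\; v_{i,1}\bigl(K\gamma T \;+\; \mathrm{poly}(K)\sqrt{T\log N}\,/\,\gamma\bigr),
\end{equation*}
with the first term capturing the exploration overhead and the second the inflated variance of the loss estimates. Optimizing $\gamma$ at the balance point between these two terms yields the claimed $v_{i,1}\sqrt[4]{T^3K^7\log T}$ bound, with the powers $T^{3/4}$ and $K^{7/4}$ coming out precisely when the polynomial-in-$K$ variance factor scales as $\mathrm{poly}(K)/\gamma$. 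Runtime-wise, each round requires only an $O(K)$ exploration/estimation step plus a path-kernel update, producing the $O(KT)$ term; the additional $O(K^{-5/4}T^{7/4})$ term reflects the $\gamma^{-1}$-rescaled cost of aggregating samples into the estimator.

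The main obstacle is controlling the variance of the utility estimator uniformly across all exponentially many bid vectors: a naive importance-weighted estimator has variance scaling with $|\mathbb{D}|^K$, which would destroy the regret bound. I expect the fix to exploit the same DAG decomposition that powered Theorem~\ref{thm:opt_offline}: writing the utility of a path as a sum of edge-weight contributions reduces variance control to the edge level, where only $O(K|\mathbb{D}|)$ elementary events need be estimated. Threading this estimator cleanly through the path-kernel MWU update is the most delicate technical point, since the exploration distribution must be chosen so that every edge of the DAG receives adequate sampling while simultaneously keeping the per-round probing cost at $O(K)$.
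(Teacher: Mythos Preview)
Your approach diverges substantially from the paper's, and the divergence is where the gap lies. You propose explicit exploration rounds (probing with a single non-zero coordinate, then inverse-probability weighting), but a single-coordinate probe reveals only one bit about the opponents' sorted bid vector (whether your level $b$ clears the $K$-th highest) plus a single price, which is not enough to build an unbiased estimator of $u_i(\vec{b}_i,\vec{b}_{-i}^t)$ for \emph{every} $\vec{b}_i$: evaluating all edge weights in the DAG requires the full top-$K$ order statistics of $\vec{b}_{-i}^t$, which determine the indicators $\mathbbm{1}\{x_i\ge j\}$ for every $(r,j)$ pair. You acknowledge this variance obstacle but do not resolve it; the ``$\mathrm{poly}(K)/\gamma$'' scaling you need is asserted rather than derived, and the runtime accounting for the $K^{-5/4}T^{7/4}$ term is hand-waved.

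The paper's route avoids explicit exploration entirely. The key structural observation you are missing is that the bandit feedback $(p(\vec{b}^t),x_i(\vec{b}^t))$ already suffices to compute $w^t(e)$ for \emph{every edge on the path actually played}, because each such edge weight depends on $\vec{b}_{-i}^t$ only through $x_i$ and the realized price. This lets the paper run EXP3 directly over paths with the edge-level importance-weighted estimator
\[
\widehat{w}^t(e) \;=\; \overline{w}(e) - \frac{\overline{w}(e)-w^t(e)}{p^t(e)}\,\mathbbm{1}\{e\in\mathfrak{p}^t\},
\]
where $p^t(e)$ is the marginal probability of edge $e$ under the current path distribution. The technical wrinkle (and the paper's main departure from \cite{gyorgy2007line}) is that edge weights are \emph{heterogeneous}: one cannot use a uniform bound $w^t(e)\le v_{i,1}$, so instead an edge-specific ceiling $\overline{w}(e)$ is used in the estimator, together with the path-level bound $\sum_{e\in\mathfrak{p}}\overline{w}(e)\le Kv_{i,1}$. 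The $K^{-5/4}T^{7/4}$ runtime term comes from computing all edge marginals $p^t(e)$ by message passing in the DAG, at cost $O(K|\mathcal{S}_\varepsilon|^3)$ per round, with $\varepsilon$ set to $v_{i,1}(K^3\log T/T)^{1/4}$.
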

	Although our bidding policy is similar to the EXP3-type algorithm in \cite{gyorgy2007line}, and our computational efficiency relies on the path kernel methods in \cite{takimoto2003path}, a major difference is that our edge weights are \emph{heterogeneous}, i.e. the weights of different edges could be of different scales. 
	To circumvent this issue, we need to use a different estimator for the weight under bandit feedback which in turn also changes the technical analysis. 
	
	We complement the upper bounds with the following regret lower bound, where the construction of the hard instance relies heavily on the specific structure of our auction format.

	\begin{theorem} [Lower bound, full information and bandit feedback] \label{thm:intro_lb_regret} 
		Let  $ \m\ge 2$. Suppose the auction is run for $T$ rounds. Then for  each strategy of player $i$, under both full information and bandit feedback, there exists a bid sequence $\{\vec{b}_{-i}^t\}_{t=1}^T$ by the other players such that the expected regret of player $i$ is at least $c \cdot  v_{i,1} \m \sqrt{T}$, 
		where $c>0$ is an absolute constant. 
	\end{theorem}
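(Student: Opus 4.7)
The plan is to apply Yao's minimax principle together with Le Cam's two-point method. Fix an arbitrary (possibly randomized) strategy $\mathcal{A}$ of player $i$. I will construct two distributions $P_1, P_2$ over bid sequences $\{\vec{b}_{-i}^t\}_{t=1}^T$ of the other players such that: $(i)$ the best fixed bid in hindsight has expected utility differing by $\Omega(K v_{i,1} \sqrt{T})$ between the two worlds, and $(ii)$ under both full information and bandit feedback the total variation distance between the observation sequences produced by $\mathcal{A}$ in $P_1$ versus $P_2$ is bounded by a constant strictly less than $1$. Any $\mathcal{A}$ must then incur expected regret $\Omega(K v_{i,1} \sqrt{T})$ in at least one of the two worlds, and Yao's minimax principle extracts a single deterministic bid sequence witnessing the bound.

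For the construction I specialize to a two-player instance $(n=2)$ with player $i$ having additive valuation $v_{i,j} = v := v_{i,1}$ for all $j \in [K]$. In each round the opponent plays one of two flat bid vectors, $(v+\epsilon)^K$ or $(v-\epsilon)^K$, with $\epsilon = \Theta(v)$; the two worlds differ only in the Bernoulli bias of this choice, with $\mathbb{P}[\text{high}] = 1/2+\delta$ under $P_1$ and $1/2-\delta$ under $P_2$, for $\delta = c_0/\sqrt{T}$ and a small absolute constant $c_0$. The two canonical bid vectors I compare are $\Beta^{\text{lo}} = (0,\ldots,0)$ and $\Beta^{\text{hi}} = (v+2\epsilon,\ldots,v+2\epsilon)$. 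In the $(K+1)$-st price variant, $\Beta^{\text{lo}}$ loses every unit in both opponent types and yields utility $0$, while $\Beta^{\text{hi}}$ wins all $K$ units at the opponent's bid level, giving utility $-K\epsilon$ against high and $+K\epsilon$ against low; their per-round utility gap is thus $\Theta(K v_{i,1})$ with opposite signs in the two worlds. For the $K$-th price variant a parallel construction is needed in which the opponent submits $K-1$ rather than $K$ positive bids, so that the learner's winning bid does not set its own price and the same $\pm K\epsilon$ asymmetry is preserved.

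The indistinguishability step follows from a per-round KL bound $\mathrm{KL}\bigl(\mathrm{Ber}(\tfrac12+\delta) \,\|\, \mathrm{Ber}(\tfrac12-\delta)\bigr) = O(\delta^2)$. Since the opponent's randomness is oblivious and the observation in each round---the full bid vector under full information, or the price and own allocation under bandit---is a deterministic function of the Bernoulli type and the learner's current bid, the data-processing inequality gives the same $O(\delta^2)$ KL bound for both feedback models. The chain rule across $T$ rounds yields a total KL of $O(T\delta^2) = O(1)$ for $\delta = c_0/\sqrt{T}$, so Pinsker's inequality bounds the $T$-round total variation distance by a constant below $1$. Combined with the per-round utility gap of $\Theta(K v_{i,1}\delta)$ and summed over $T$ rounds, this gives expected regret $\Omega(T \cdot K v_{i,1} \delta) = \Omega(K v_{i,1}\sqrt{T})$ in the worse of the two worlds.

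The main obstacle I anticipate is verifying in step $(i)$ that within the full bid space $\mathbb{D}^K$ no ``intermediate'' bid vector (for instance one at a grid point between $v-\epsilon$ and $v+\epsilon$) weakly dominates both canonical bids simultaneously in both worlds, which would collapse the best-fixed-bid gap. Handling this requires exploiting the specific structure of the auction format: the construction must be refined by choosing $\epsilon$ so that no grid point in $\mathbb{D}$ lies strictly inside $(v-\epsilon, v+\epsilon)$, or by perturbing the opponent's two bid levels across rounds so that the optimal ``safe'' threshold is itself randomized. Once this is secured, a short case analysis on the number of learner bids beating each opponent level shows that the best fixed bid in $P_1$ is essentially $\Beta^{\text{lo}}$ and the best in $P_2$ is essentially $\Beta^{\text{hi}}$, and the remainder of the argument follows the Le Cam template above.
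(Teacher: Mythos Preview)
Your high-level template---Yao's minimax plus Le Cam's two-point method, with two Bernoulli-mixed opponent distributions separated by $\delta=\Theta(1/\sqrt{T})$---matches the paper's approach. The failure is in the specific two-point instance.

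In your construction the opponent plays either $(v+\epsilon)^K$ or $(v-\epsilon)^K$, and you compare $\Beta^{\text{lo}}=\mathbf{0}$ with $\Beta^{\text{hi}}=(v+2\epsilon)^K$. But consider the ``medium'' bid that beats $v-\epsilon$ yet loses to $v+\epsilon$ on every coordinate (e.g.\ bid $v$ everywhere, or, if you force a coarse grid, bid at whichever of $v\pm\epsilon$ wins the tie against low and loses the tie against high). Against the low opponent this wins all $K$ units at price $v-\epsilon$, giving utility $K\epsilon$; against the high opponent it wins nothing and gets $0$. Its expected utility is therefore $(1/2\mp\delta)K\epsilon$ in the two worlds. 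A short computation over all $(m,\ell)$ with $m$ units won against high and $\ell\ge m$ won against low shows the expected utility is $\epsilon[(1/2\pm\delta)\ell-(1/2\mp\delta)m]$, maximized at $\ell=K,\ m=0$ in \emph{both} worlds. Hence the \emph{same} bid is optimal under $P_1$ and $P_2$; a learner who plays it every round has zero regret in both, and the Le Cam separation collapses. Your anticipated fixes (forcing no grid point strictly inside $(v-\epsilon,v+\epsilon)$, or perturbing levels) do not help: whatever the grid and tie-breaking, there are always three effective bid levels---lose to both, beat low only, beat both---and the middle one dominates.

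The paper's hard instance avoids this by varying the \emph{number} of nonzero opponent bids rather than their \emph{level}: with $K=2k$, the opponent bids $2/3$ on either $k$ or all $2k$ coordinates. Now the learner's real choice is how many units to compete for: bidding high on $k$ coordinates earns $k$ when the opponent is ``half'' (price $0$) but only $k/3$ when the opponent is ``full'' (price $2/3$), whereas bidding high on all $2k$ coordinates earns $2k/3$ in both states. The paper then checks directly that under the $50$--$50$ mixture \emph{no} bid vector exceeds $2k/3$, so the two candidate bids are genuinely undominated and the $\Theta(\delta k)$ separation survives. You should replace your level-based instance with this count-based one (or an analogous construction) before the rest of your argument can go through.
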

	Theorems~\ref{thm:intro_MWU_ub}, \ref{thm:intro_bandit_ub}, and \ref{thm:intro_lb_regret}  apply to both variants of the auction (with $\m$-th and $(\m+1)$-st highest price).
	
	\subsubsection{Equilibrium Analysis}

	We also analyze the quality of the Nash equilibria reached in the worst case in the static version of the auction,   as they naturally apply to the empirical distribution
	of joint actions when the players use sub-linear regret learning algorithms. 
	
	The $(\m+1)$-st price auction has been observed to have low or even zero revenue equilibria \cite{Wilson79,Kremer_Nyborg,ausubel2014demand,BW20}. 
	With $n > \m$ hungry players, the next phenomena occur:
	\begin{itemize} 
		\item \emph{in the $(\m+1)$-st price auction:} every allocation $\vec{z}$ 
		that allocates all the units can be implemented in a pure Nash equilibrium $\vec{b}$ with price $\varepsilon$, for every small enough $\varepsilon \geq 0$.
		\item \emph{in the $\m$-th price auction:}  no pure Nash equilibrium  with  arbitrarily low price exists.
	\end{itemize}
	

	Our contribution is to show that the zero-price equilibria of the $(\m+1)$-st highest price auction are very stable. We do this by considering the core solution concept, which models how groups of players can coordinate.
	
	In our setting, a strategy profile is \emph{core-stable}   if no group $S$ of players can deviate by simultaneously changing their strategies such that each player in $S$ weakly improves their utility \emph{and}  the improvement is strict for at least one player in $S$. The players outside $S$ are assumed to have neutral reactions to the deviation, that is, they keep their strategy profiles unchanged. {This is consistent with the Nash equilibrium solution concept, where only the deviating player changes their strategy.} 
	
	A body of literature  studied the core in auctions when the auctioneer  can collude together with the bidders (see, e.g., \cite{DM08}). However, as is standard in mechanism design, it is also meaningful to consider the scenario where the auctioneer sets the auction format and then the bidders  can strategize and potentially collude among themselves, without the auctioneer; this setting is our focus.

	First, we consider  the core without transfers, where a strategy profile is a bid profile $\vec{b}$. 
	
	\begin{theorem}[Core without transfers] \label{thm:core_NTU_intro}
		Consider $\m$ units and $n > \m$  hungry players. The  core without transfers  of the $(\m+1)$-st auction  can be characterized as follows: 
		\begin{itemize} 
			\item every bid profile $\vec{b}$ that is core-stable has price zero (i.e. $p(\vec{b}) = 0$);    
			\item for every  allocation  $\vec{z}$, there is  a core-stable  bid profile $\vec{b} $ with  price  zero that implements $\vec{z}$ (i.e.  $\vec{x}(\vec{b}) = \vec{z}$ and $p(\vec{b}) = 0$).
		\end{itemize}
	\end{theorem}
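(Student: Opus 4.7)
The two parts of the theorem call for different techniques. For Part 1, I would proceed by contradiction: suppose $\vec{b}$ is core-stable with $p(\vec{b}) = p > 0$. I would take $S = [n]$ to be the grand coalition, and have each player $i$ deviate to $\vec{b}'_i = (\varepsilon, \ldots, \varepsilon, 0, \ldots, 0)$ with exactly $\ell_i = x_i(\vec{b})$ entries equal to some small $\varepsilon > 0$. Since $\sum_i \ell_i = \m$, the combined sorted bids consist of exactly $\m$ copies of $\varepsilon$ followed by zeros, so the $(\m+1)$-st highest bid is $0$ and each player $i$ still wins $\ell_i$ units. Every winner ($\ell_i > 0$) strictly gains, as their utility jumps from $V_i(\ell_i) - p\ell_i$ to $V_i(\ell_i)$, while every loser stays at utility $0$. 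This is a Pareto improvement with at least one strict inequality, contradicting core-stability.

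For Part 2, given an arbitrary allocation $\vec{z}$, I would pick $H > \max_{i,k} v_{i,k}$ and construct $\vec{b}$ by letting each player $i$ bid $H$ on coordinates $j \leq z_i$ and $0$ on $j > z_i$. Since exactly $\m$ of the $n\m$ bids equal $H$ (and the rest equal $0$), the $(\m+1)$-st highest is $0$ and the auction realizes $\vec{z}$. What remains is to verify that $\vec{b}$ is core-stable.

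To do this, fix a coalition $S$ with deviation $\vec{b}'_S$, let $k_{-S} = \sum_{i \notin S} z_i$, $k_S = \m - k_{-S}$, and let $\vec{w}$, $p'$ denote the new allocation on $S$ and the new price. Non-members contribute exactly $k_{-S}$ bids of value $H$ and zero otherwise. I would split on the value of $p'$. \textbf{Case 1 ($p' = 0$):} the total number of positive bids is at most $\m$, so $S$ contributes at most $k_S$ positive bids, giving $\sum_{i \in S} w_i \leq k_S$; weak Pareto improvement forces $w_i \geq z_i$ for each winner $i \in S$ by hungriness, and summing yields $\sum_{i \in S} w_i \geq k_S$, so the inequalities collapse to $w_i = z_i$ on $S$ and no change in utilities. \textbf{Case 2 ($p' \geq H$):} any coalition member who wins a positive number of units has utility at most $v_{i,1} w_i - H w_i < 0$, and since every winner in $S$ originally had utility $V_i(z_i) > 0$, at least one member is strictly worse off. \textbf{Case 3 ($0 < p' < H$):} a careful sorted-bid count (based on how many $S$-bids lie above $H$, at $H$, and in $(0,H)$) shows $\sum_{i \in S} w_i = k_S$ exactly; weak improvement again forces $w_i \geq z_i$ for each winner in $S$, combining with the sum constraint to give $w_i = z_i$ and $w_j = 0$, and then each winner's new utility $V_i(z_i) - p' z_i < V_i(z_i)$ contradicts weak improvement.

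The main obstacle is Case 3, the intermediate-price regime, where one might imagine coalition members coordinating a non-trivial redistribution in which winners cede units to losers in exchange for improved aggregate welfare. The resolution relies on two features of the construction. First, the threshold $H > \max v_{i,k}$ ensures no coalition member wants to pay $H$ or more per unit, which rules out any profitable displacement of non-member $H$-bids and is what pins $\sum_{i \in S} w_i \leq k_S$. Second, the no-transfers requirement means any winner who loses units is strictly hurt by hungriness, with no way to be compensated by losers who gain; so the only consistent allocation on $S$ is $\vec{z}|_S$ itself, at which point any positive $(\m+1)$-st bid strictly punishes the winners, preventing strict Pareto improvement.
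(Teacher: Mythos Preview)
Your proof is correct and follows essentially the same approach as the paper: Part~1 is identical (grand-coalition deviation preserving the allocation at price zero), and Part~2 uses the same high-bid construction and the same style of case analysis on the post-deviation price. The only notable difference is that the paper merges your Cases~2 and~3 into a single case $p'>0$: it observes directly that each winner $i\in S\cap W$ must have $w_i>z_i$ (since $V_i(w_i)-p'w_i\ge V_i(z_i)$ with $p'>0$ and $z_i>0$ is impossible for $w_i\le z_i$), while every non-deviating winner still gets $w_i\ge z_i$ (otherwise one of their $H$-bids is displaced, forcing $p'\ge H$ and negative utility for everyone), and summing gives $>\m$ units. One small wrinkle in your Case~1: the step ``$S$ contributes at most $k_S$ positive bids, giving $\sum_{i\in S} w_i\le k_S$'' is not literally valid, since zero bids can win by tie-breaking; the correct justification is that all $k_{-S}$ non-member $H$-bids lie strictly above the $(\m{+}1)$-st bid and therefore win, so $S$ can receive at most $\m-k_{-S}=k_S$ units.
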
 
	
	If a bid profile $\vec{b}$ is core-stable, then it is also a Nash equilibrium, for the simple reason that if no group of players can deviate simultaneously from $\vec{b}$, then the group consisting of one player cannot find an improving deviation either. 
	Thus Theorem~\ref{thm:core_NTU_intro} says that the  equilibria with price zero are the only ones that remain  stable when also allowing deviations by groups of players.  Moreover, the theorem also says there are many inefficient core-stable equilibria, all with price zero.

	Second, we also consider the core with transfers, where the players can make monetary transfers to each other. A strategy profile in this setting is given by a pair $(\vec{b}, \vec{t})$, where $\vec{b}$ is a bid profile and $\vec{t}$ is a profile of monetary transfers, such that $t_{i,j} \geq 0$ is the monetary transfer (payment) of player $i$ to player $j$ (e.g., player $i$ could pay  player $j$ so that $j$ keeps its bids low).
	
	\begin{theorem}[Core with transfers]\label{thm:core_TU_intro}
		Consider $\m$ units and $n > \m$ hungry players.
		The core with transfers of the $(\m+1)$-st auction can be characterized as follows.  
		\begin{itemize}
			\item 
			Let $(\vec{b}, \vec{t})$ be an arbitrary tuple of bids and transfers that is core stable. Then the   allocation $\vec{x}(\vec{b})$ maximizes  social welfare, the  price is zero (i.e. $p(\vec{b}) = 0$), and there are no transfers between the players (i.e. $\vec{t} = 0$).
		\end{itemize}
	\end{theorem}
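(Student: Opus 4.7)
The plan is to establish three properties of any core-stable tuple $(\vec{b},\vec{t})$: (i) $\vec{t}=0$; (ii) $\vec{x}(\vec{b})$ maximizes social welfare; and (iii) $p(\vec{b})=0$. Property (i) will follow from a singleton-coalition deviation, and properties (ii) and (iii) will be proved jointly by exhibiting a grand-coalition deviation whenever either fails.

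First, I would dispose of transfers. Suppose $t_{i,j}>0$ for some $i,j$. The singleton coalition $S=\{i\}$ deviates by keeping $\vec{b}_i$ unchanged and setting all of $i$'s outgoing transfers to zero. Since the strategies of the players outside $S$ remain fixed (in particular, their incoming transfers to $i$ are preserved), player $i$'s utility strictly increases by $\sum_{k} t_{i,k}\ge t_{i,j}>0$, blocking $(\vec{b},\vec{t})$. Hence $\vec{t}=0$ in any core-stable profile.

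Next, with $\vec{t}=0$ in hand, the total utility of the players equals $\sum_i u_i(\vec{b}) = SW(\vec{b}) - \m\cdot p(\vec{b})$. Let $SW^\star := \max_{\vec{z}} \sum_i V_i(z_i)$ denote the maximum social welfare subject to $\sum_i z_i = \m$. If either (ii) or (iii) fails, then $\delta := SW^\star - SW(\vec{b}) + \m\cdot p(\vec{b})$ is strictly positive. I would then have the grand coalition $[n]$ deviate to $(\vec{b}', \vec{t}')$, where $\vec{b}'$ implements a welfare-maximizing allocation at price exactly zero; such a $\vec{b}'$ exists by the characterization (quoted just before the statement) of price-$\varepsilon$ equilibria of the $(\m+1)$-st price auction with $n>\m$ hungry bidders, specialized to $\varepsilon=0$. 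The post-deviation auction utilities $V_i(x_i(\vec{b}'))$ sum to exactly $SW^\star$, so the coalition can pick a zero-sum net-transfer vector whose $i$-th coordinate is $u_i(\vec{b}) - V_i(x_i(\vec{b}')) + \delta/n$, leaving each player with final utility $u_i(\vec{b}) + \delta/n$ — a strict improvement for every player, hence a blocking deviation. The contradiction forces $\delta=0$, which in turn gives both $SW(\vec{b})=SW^\star$ and $p(\vec{b})=0$.

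The main technical obstacle I foresee is the bookkeeping step of realizing an arbitrary zero-sum vector $(\Delta_i)_{i\in[n]}$ as a matrix of non-negative pairwise transfers $(t'_{i,j})$ with $\Delta_i = \sum_{j}t'_{j,i}-\sum_{j}t'_{i,j}$. This is a routine flow-type construction — pair each net-payer with the net-receivers until all balances clear — but it warrants its own short lemma to keep the main argument clean. A secondary point to verify is that the cited existence result for the $(\m+1)$-st price auction admits price \emph{exactly} zero and not merely arbitrarily small positive price, which is explicit in the earlier statement since $\varepsilon=0$ is permitted.
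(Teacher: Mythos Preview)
Your grand-coalition argument for (ii) and (iii) is correct and is more streamlined than the paper's route. The paper establishes welfare-maximality and price-zero in two separate steps, each by an explicit combinatorial construction (a bipartite matching between the truthful and current allocations for Step~II, and a winner--loser pairing for Step~III), whereas you collapse both into a single aggregate-surplus deviation with a zero-sum redistribution. Both approaches are valid; yours is shorter and avoids the case analysis, while the paper's exhibits concrete blocking profiles.

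Your argument for (i), however, hinges on a reading of ``neutral reaction'' that is more permissive than the paper's formal definition. You assume that when the singleton $\{i\}$ deviates, the other players' transfers \emph{to} $i$ are preserved, so $i$'s net gain is $\sum_k t_{i,k}>0$. But the paper's definition of a blocking coalition (in the appendix) sets $\widetilde t_{k,j}=0$ for every $k\notin S$ and $j\in S$; under this convention, the deviating player $i$ loses all incoming transfers as well, and the net change in $i$'s money is $\sum_k t_{i,k}-\sum_k t_{k,i}$, which need not be positive (take $i$ a net receiver). The paper handles this by first cancelling cycles in the transfer graph to obtain an acyclic structure with the same net amounts, and then selecting a \emph{source} vertex $i_j$ --- a player with no incoming edges and at least one outgoing edge --- for whom $m_{i_j}(\vec t)<0$; this player strictly improves by zeroing its outgoing transfers even when all incoming transfers are simultaneously zeroed. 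You should either adopt this refinement (pick a net payer rather than an arbitrary payer) or make explicit that you are working under the ``strategies fully fixed'' convention from the main-body text rather than the stricter appendix definition.
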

	
	To see why Theorem~\ref{thm:core_TU_intro} holds, if the price is  positive at some strategy profile,  the players with ``highest values'' can pay the other players to lower their bids. When the price reaches zero, they  can start withdrawing the payments.  
	Thus the only  strategy profiles in the core with transfers are those where the players with ``highest values'' win and  pay zero. The difference from Theorem~\ref{thm:core_NTU_intro} is that transfers allow the players with highest values to force getting their desired items at price zero, which they cannot enforce without transfers. Nevertheless, even without transfers, the price remains zero. 

	\paragraph{\emph{Remarks on collusion.}} Given that the zero price equilibria of the $(\m+1)$-st price auction are very stable and easy to find, one can expect  they can be  determined by the bidders, for example in settings with  few  bidders who know each other. These equilibria can be easily implemented in repeated settings once the bidders agree on their strategies.
	
	Strikingly, these zero price equilibria have in fact been observed empirically in repeated auctions for fishing quotas in the Faroe Islands, where even inexperienced bidders were able to collude and enforce  them; see details in \cite{epic_fail_collusion_fishing}.
	In the carbon setting, the Northeast US auction is based on the  $(\m+1)$-st price format \cite{goldner2020reducing} and has very low prices as well.
	
	In contrast, the $\m$-th price auction format does not have any Nash equilibria with price zero, let alone core-stable ones.
	Thus  the $\m$-th price auction format may be  preferable  to the $(\m+1)$-st price auction in settings with high stakes, such as selling rights for carbon emissions. Another remedy, also discussed in \cite{epic_fail_collusion_fishing}, is to make the number of units a random variable. 
	
	
	
	\section{Related Work}

	Our work is related to the rich literature on combinatorial auctions (e.g., \cite{de2003combinatorial, blumrosen2007combinatorial, cramton2007overview}) and more specifically auctions for identical goods, which are widely used in practice in various settings: Treasury Auctions \cite{TreasuryAuction2005, TreasuryUnifOrDisc2000}, Procurement Auctions \cite{Procurement2006}, Wholesale Electricity Markets \cite{WholesaleElectricityMarkets2008, federico2003bidding, DesigningElectricityAuctions2006},  Carbon Auctions \cite{CarbonTaxVsCapTrade2013, LessonsLearned2017, goldner2020reducing}. The two most prevalent multi-unit auctions are uniform price auctions and pay-as-bid auctions that are widely studied in the literature either empirically (e.g., \cite{back1993auctions,malvey1998uniform, bower2001experimental,heim2013pay}) or theoretically under the classical Bayesian setting (e.g., \cite{anderson2013mixed, goldner2020reducing, babaioff2023making}). In particular, there have been several studies that aim to compare uniform price auctions versus pay-as-bid auctions (e.g., \cite{ kahn2001uniform, federico2003bidding, son2004short, lopomo2011carbon, ausubel2014demand}) in terms of revenue and welfare under Nash equilibria, and the possibility of collusion. 
	Our work contributes to this literature by studying uniform price auctions the angle of designing bidding  algorithms for the players. 
	
	
	Designing learning algorithms in auctions has attracted significant attention in recent years. Several papers study the problem of designing learning/bandit algorithms to optimize   reserve prices in auctions over time (e.g., \cite{kanoria2014dynamic, MorgensternR16, mohri2016learning, CaiD17, DudikHLSSV17, golrezaei2019IC, golrezaei2018dynamic, golrezaei2021bidding}). There are papers that have focused on designing bandit algorithms to optimize bidding strategies in single-unit second price and first price auctions (e.g., \cite{weed2016online, braverman2017selling,  feng2018learning, LearningBidOptimallyAdversarialFPA2020, HZW20, ContextBanditsCrossLearning2019}). 
	Prior to our work, such problem was not studied for multi-unit uniform-price auctions.
	
	Leveraging structural information in designing bandit algorithms has been the topic of the rich literature, which includes 
	linear reward structures \cite{dani2008stochastic, rusmevichientong2010linearly, mersereau2009structured, pmlr-v54-lattimore17a},  Lipschitz reward structures  \cite{pmlr-v35-magureanu14, mao2018contextual}, convex reward distribution structure \cite{van2020optimal}, and combinatorial structures \cite{
		abernethy2008competing, uchiya2010algorithms, cesa2012combinatorial, audibert2014regret, chen2013combinatorial, combes2015combinatorial,zimmert2019beating,
		kalai2005efficient, niazadeh2021online}.
	Our work is closest to the works that aim to exploit combinatorial structures. Within this literature, several papers leverage DP-based or MDP-based structures  (e.g., \cite{takimoto2003path, gyorgy2007line, zimin2013online,rosenberg2021stochastic}).

	The multi-unit auction setting was studied in a large body of literature on auctions \cite{dobzinski2012multi,borgs2005multi,dobzinski2007mechanisms,dobzinski2014efficiency,dobzinski2015multi}, where the focus has been on designing truthful auctions with good approximations to some desired objective, such as the social welfare or the revenue.
	
	Uniform pricing represents a type of fair pricing, since when everyone gets charged the same price per unit, no bidder envies another bidder's allocation. Fairness is an important property that is at odds with maximizing revenue, since for the purpose of improving revenue the auctioneer is better off charging higher prices to the buyers that show  more interest in the goods. 
	However, there is evidence that buyers are unhappy with  discriminatory prices (see, e.g., \cite{anderson2008}), which led to a body of literature focused on designing   fair auction mechanisms~\cite{guruswami2005profit, feldman2015combinatorial, feldman2012revenue,cohen2016invisible,sekar2016posted,VGN17,BFMZ17,FlamminiMTV20, deng2022fairness}. Collusion in auctions was also studied both experimentally and theoretically in various models (e.g., \cite{GN92,vonderFehrNils_HenrikMorch1993SMCi,uniform_experiments_treasury}).

	
	
	
	
	\section{Offline Setting} \label{sec:offline}
	
	In the offline setting, we are given a   player $i$  with valuation  $\vec{v}_i$ and a history  $H_{-i} = (\vec{b}_{-i}^1, \ldots, \vec{b}_{-i}^T)$ with the bids of  other players in rounds $1$ through $T$. We assume the bids are restricted to a discrete domain $\mathbb{D} = \{k \cdot \varepsilon \mid k \in \mathbb{N}\}$, for some $\varepsilon > 0$. 
	The goal is  to find an optimum fixed  bid vector in hindsight:
	\[ \vec{b}_i^*  =  \arg \max_{\Beta \in \mathbb{D}^{\m}} \; \sum\limits_{t=1}^{T} u_i(\Beta, \vec{b}_{-i}^t),
	\]
	where we recall that $u_i(\Beta, \vec{b}_{-i}^t)$ is the utility of player $i$ at the bid profile $(\Beta, \vec{b}_{-i}^t)$.  We will see the maximum in the definition of $\vec{b}_i^*$ is well defined since there is an optimum bid vector $\Beta$ with entries at most $\varepsilon$ above the maximum historical bid.
	
	
	For any $\varepsilon > 0$, the optimal solution on the discrete domain $\mathbb{D}$ also yields an $(\varepsilon \cdot \m)$-optimal strategy  on the continuous domain.
	{\color{black}The offline problem we study here has  some resemblance to the problems studied in \cite{RW16, derakhshan2019lp, derakhshan2021beating}. There, the goal is to optimize reserve prices in VCG auctions while having access to a dataset of historical bids.}\footnote{\cite{RW16} also presents an online version of their offline algorithms under  full information feedback; see also \cite{niazadeh2021online} for an online version of \cite{RW16}'s algorithm for both full information and bandit settings using Blackwell approachability \cite{blackwell1956analog}. }

	To get a polynomial time algorithm for the offline problem,  we  define a set  $\allbids{}_i$ of  ``candidate''  bids for player $i$, which  has polynomial size in the history, player $i$'s valuation, and $1/\varepsilon$:
	\begin{align}\label{eq:si}
		\allbids{}_i = \bigl\{0 \bigr\} \cup  \Bigl\{ b_{j,k}^t \mid j \in [n] \setminus \{i\}, k \in [\m] \Bigr\} \cup \Bigl\{ b_{j,k}^t + \varepsilon \mid j \in [n] \setminus \{i\}, k \in [\m]\Bigr\}  \,.
	\end{align}
	

	
	\begin{observation} \label{obs:opt_set_one_player}
		Player $i$ has an optimum bid vector  $\Beta = (\beta_1, \ldots, \beta_{\m}) \in \mathbb{D}^{\m}$ with  $\beta_j \in \allbids{}_i$ for all $ j\in [\m]$, where $\allbids{}_i$ is defined in equation \eqref{eq:si}. 
	\end{observation}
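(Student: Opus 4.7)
The plan is a coordinate-by-coordinate exchange argument. Take any optimal $\Beta^* \in \mathbb{D}^{\m}$ and any coordinate $j$. I fix all other coordinates of $\Beta^*$ and study $U(\beta) := \sum_{t=1}^T u_i((\beta_1^*, \ldots, \beta_{j-1}^*, \beta, \beta_{j+1}^*, \ldots, \beta_{\m}^*), \vec{b}_{-i}^t)$ as a function of $\beta \in \mathbb{D}$, aiming to show that $\arg\max_\beta U(\beta)$ contains at least one value in $\allbids{}_i$. Iterating over $j$ then yields the claim.

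First I would analyze the piecewise structure of $U$. For each round $t$, both the allocation $x_i$ and the price $p$ are determined by how $\beta$ ranks against the other players' bids $\{b_{k,\ell}^t : k \neq i, \ell \in [\m]\}$; the relative order of $\beta$ with the other fixed coordinates $\beta_{j'}^*$ is irrelevant for player $i$'s utility, because the multiset of player $i$'s bids (and hence its number of wins and the clearing price) is unchanged when two of $i$'s own bids swap positions. Therefore the breakpoints of $U$ occur only at other players' bid values, and, depending on the lexicographic tie-breaking, each boundary sits at either $b_{k,\ell}^t$ or $b_{k,\ell}^t + \varepsilon$ — both of which lie in $\allbids{}_i$ by construction.

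Within any single piece I would verify that the allocation $x_i$ is constant and that $U$ is either constant in $\beta$ (when $\beta$ is not the pricing bid, so $p$ is determined by a fixed other bid) or linear with slope $-x_i \leq 0$ in $\beta$ (when $\beta$ is itself the $\m$-th highest bid under the $\m$-th price rule, or the $(\m+1)$-st highest under the $(\m+1)$-st price rule, so $u_i = V_i(x_i) - \beta \cdot x_i$). In either case the maximum of $U$ on the piece is attained at its lower endpoint, which lies in $\allbids{}_i$. The two extreme pieces are handled separately: the topmost piece $[b^{\max} + \varepsilon, \infty) \cap \mathbb{D}$, where $b^{\max} = \max_{k,\ell,t} b_{k,\ell}^t$, has lower endpoint $b^{\max} + \varepsilon \in \allbids{}_i$, while the bottommost piece contains $0 \in \allbids{}_i$.

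The main subtlety is the regime in which player $i$'s own bid $\beta$ is itself the clearing price: there $U$ is strictly linearly decreasing inside the piece, so the optimum is forced to the piece's lower endpoint rather than being free to sit anywhere in the interior — this is precisely why $\allbids{}_i$ must contain values of the form $b_{k,\ell}^t + \varepsilon$ sitting just above other players' bids, not merely the bid values themselves. Handling lexicographic tie-breaking requires careful bookkeeping to decide whether a boundary falls at $b_{k,\ell}^t$ or $b_{k,\ell}^t + \varepsilon$, but the overall argument is identical for the $\m$-th and $(\m+1)$-st price variants, since the only difference between them is which rank of the bid vector is selected as the price.
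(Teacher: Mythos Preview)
Your coordinate-by-coordinate exchange argument is essentially correct, but it differs substantially from the paper's proof, which is much shorter. The paper simply takes any optimal $\vec{c}$ and rounds \emph{every} coordinate down simultaneously to the nearest element of $\allbids{}_i$. Because $\allbids{}_i$ contains both $b_{k,\ell}^t$ and $b_{k,\ell}^t+\varepsilon$ for every opponent bid, this rounding preserves the relative order of each $c_j$ against every $b_{k,\ell}^t$ (strict inequalities and ties alike), so the allocation is unchanged in every round while the price can only weakly decrease. No piecewise analysis is needed.

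Your route does work, but one step is stated too strongly. Within a piece defined only by the ordering of $\beta$ against \emph{opponent} bids, $U$ need not be a single linear or constant function: as $\beta$ crosses one of player $i$'s own fixed coordinates $\beta_{j'}^*$, the identity of the price-setting bid can switch between $\beta$ and $\beta_{j'}^*$, producing a further kink. What remains true is that the $K$-th (or $(K+1)$-st) order statistic is nondecreasing in $\beta$ while $x_i$ is constant on the piece, so $U$ is still nonincreasing there and its maximum sits at the lower endpoint, which lies in $\allbids{}_i$. So your conclusion survives, but you should replace ``constant or linear'' with ``nonincreasing'' and justify it via monotonicity of the order statistic rather than a case split on who sets the price. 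The paper's simultaneous-rounding argument sidesteps this entirely and handles all coordinates in one stroke.
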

	The proof of Observation \ref{obs:opt_set_one_player} is simple and deferred to Appendix \ref{append:offline}. 
	Next we construct a directed acyclic graph (DAG) $G_i$  that will be used to compute an optimum bid vector for the given player $i$.

	
	\begin{definition}[The graph $G_i$]
		\label{def:graph_G}
		Given  valuation $\vec{v}_i$ of a player $i$, history $H_{-i} = (\vec{b}_{-i}^1, \ldots, \vec{b}_{-i}^T)$, $\varepsilon > 0$, and set $\allbids{}_i$ from \eqref{eq:si}, define a graph $G_i$ with the {following} features.
		\begin{description}
			\item[\emph{Vertices.}] Create a vertex $z_{s,j}$ for each  $s\in \allbids{}_i$  and  $j \in [\m]$.  We say vertex $z_{s,j}$ is in layer $j$. Add source $z_{-}$ and sink $z_{+}$. 
			\item[\emph{Edges.}] For each  index $j \in [\m-1]$ and pair of bids $r,s \in  \allbids{}_i$ with $r \geq  s$, create a directed edge from vertex $z_{r,j}$ to vertex  $z_{s,j+1}$. Moreover, add edges   from source $z_{-}$ to each node in layer $1$  and   from each node in layer $\m$ to the sink $z_{+}$.
			\item[\emph{Edge weights.}]
			%
			
			For each edge $e=(z_{r,j}, z_{s,j+1})$ or $e=(z_{r,K}, z_+)$, let  $\vec{\Beta} = (\beta_1, \ldots, \beta_{\m}) \in \mathcal{S}_{i}^{\m}$ be a  bid vector with $\beta_j = r$ and $\beta_{j+1} = s$ (we define $s=0$ if $j=K$). 
			For each $t \in [T]$, let  $\vec{h}^t = (\Beta, \vec{b}_{-i}^t)$. Define the weight of edge $e$ as 
			\begin{small}
				\begin{align}  \label{eq:edge_weight_G}
					w_e = \sum_{t=1}^{T} \mathbbm{1}_{\{x_{i}(\vec{h}^t)\geq j\}} \left(v_{i,j} - r \right) +  j   \Bigl[\mathbbm{1}_{\{x_{i}(\vec{h}^t)>j\}} \left(r-s \right) + \mathbbm{1}_{\{x_{i}(\vec{h}^t)=j\}}\left(r- p(\vec{h}^t) \right)\Bigr],
				\end{align}
			\end{small}
			recalling that $p(\vec{h}^t)$ and  $\vec{x}(\vec{h}^t)$ are  the price and allocation at $\vec{h}^t$,  respectively \footnote{That is, $x_k(\vec{h}^t)$ is the number of units received by  player $k$ at allocation  $\vec{x}(\vec{h}^t)$.}.
			The   edges incoming from $z_-$  have weight  zero.
		\end{description}
	\end{definition}
	A pictorial illustration of the graph $G_i$ is displayed in Figure \ref{fig:DAG_with_source}. 
	
	
	\begin{figure}[h!]
		\centering
		\includegraphics[scale=0.62]{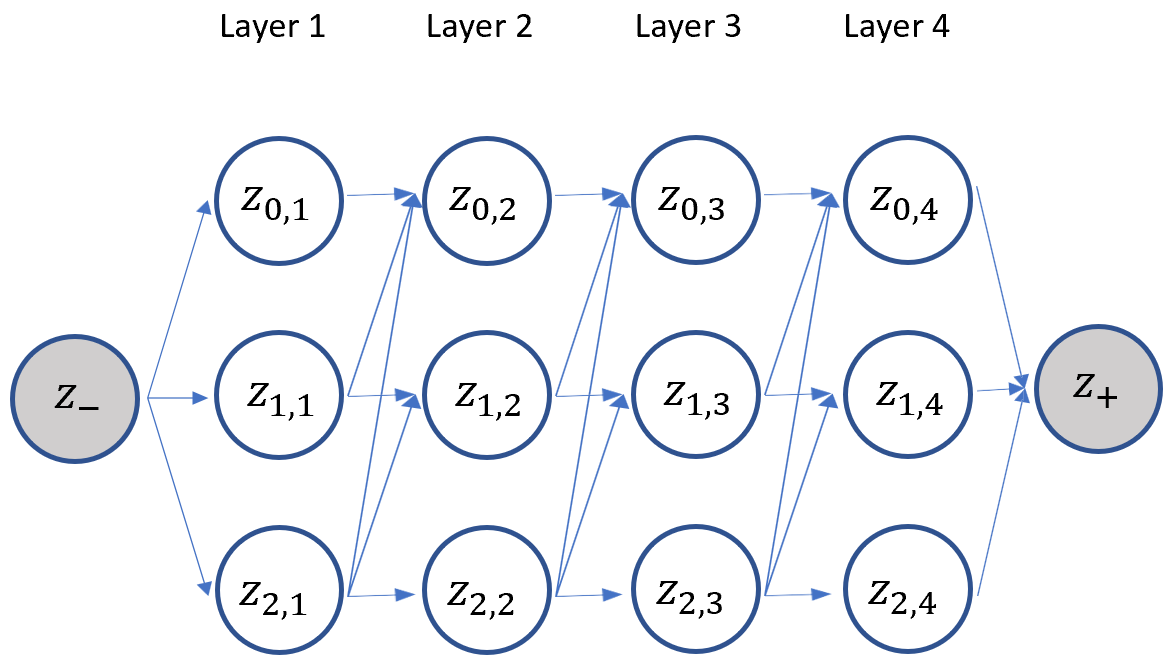}
		\caption{Example of DAG as in Definition \ref{def:graph_G}. We have $\m=4$ units and the set of candidate bids for player $i$  is 
			$\allbids{}_i =\{0,1, 2\}$. The graph has a source $z_{-}$,  a sink $z_{+}$, and nodes  $z_{r,j}$ for all $ r \in \allbids{}_i$  and $j \in [4]$. 
			Edges are of the form $(z_{r,j}, z_{s,j+1})$ for all $j < 4$ and $r,s \in \allbids{}_i$ with $r \geq s$. There are  also edges from the source $z_{-}$ to nodes $z_{j,1}$ and from nodes $z_{j,4}$ to the sink  $z_{+}$ $\forall$ $j \in \allbids{}_i$.}
		\label{fig:DAG_with_source}
	\end{figure}
	\textcolor{black}{At a first glance, it may  seem as though the weight $w_e$ of an edge $e$ depends on the whole bid vector $\Beta$ since we use the value of  $x_{i}(\vec{h}^t)$ 
		in its definition. However, this is not the case. Meaning that, this formula gives us the same value for any choice of $\Beta$ that satisfies $\beta_j = r$ and $\beta_{j+1} = s$. This is due to the fact that the values of variables $\mathbbm{1}_{\{x_{i}(\vec{h}^t)\geq\}}$, $\mathbbm{1}_{\{x_{i}(\vec{h}^t)>j\}}$, and $\mathbbm{1}_{\{x_{i}(\vec{h}^t)=j\}}$ can be determined only knowing the values of  $\beta_{j+1}$ and $\beta_{j}$ (without knowing  the rest of the vector $\Beta$).
	}

	
	We will show that computing an optimum strategy  for the player  is equivalent to finding a maximum-weight path in the DAG $G_i$ from Definition~\ref{def:graph_G}.

	\bigskip  
	
	\noindent \textbf{Theorem~\ref{thm:opt_offline} }(formal version).
	\emph{Suppose we are given a number $n$ of players, number $\m$ of units, valuation $\vec{v}_i$ of player $i$, discretization level $\varepsilon > 0$, and  bid history  $H_{-i} = (\vec{b}_{-i}^1, \ldots, \vec{b}_{-i}^T)$ by players other than $i$. Then  
		an optimum bid vector for  player $i$ can be computed in polynomial time in the input parameters.}
	
	\bigskip 
	
	The proof of Theorem~\ref{thm:opt_offline} is in Appendix~\ref{append:offline}, together with the  other omitted proofs of this section. 
	
	At a high level, for any bid vector $\Beta$, we first write the cumulative utility of  player $i$ when playing $\Beta$ while the others play according to $\vec{b}_{-i}^t$. Then observe the cummulative utility can be rewritten as a sum of $\m$ terms, so that each term only depends on $j$, $\beta_j$, and $\beta_{j+1}$. Create a layered DAG as in Definition~\ref{def:graph_G}, where the weight of edge $(\beta_j, j)$ to $(\beta_{j+1}, j+1)$ is the $j$-th term of the sum above, with special handling for edges involving the source or the sink. The entire graph  can be computed in polynomial time.
	
	Then the proof shows there is a bijection between the set of bid vectors of the player and the set of paths from the source to the sink in the graph. Thus the solution to the offline problem is obtained by computing a maximum weight path in the graph and returning the corresponding bid vector.
	
	\medskip 
	
	The algorithm for the offline problem is summarized in the next figure.
	
	
	\begin{algorithm}[h!]
		\caption{Optimum Strategy for the Offline Problem. }
		\label{alg:DR-SM}
		\textbf{Input:} Player $i$ with valuation $\vec{v}_i = (v_{i,1}, \ldots, v_{i,\m})$, a history $H_{-i} = (\vec{b}_{-i}^1, \ldots, \vec{b}_{-i}^T)$ of bids submitted by players other than $i$, and a grid  $\mathbb{D}= \{k \cdot \varepsilon \mid k \in \mathbb{N}\}$ with allowed bid values. \\
		\textbf{Output:} $\vec{b}_i^*  =  \arg \max_{\Beta \in \mathbb{D}^{\m}} \; \sum_{t=1}^{T} u_i(\Beta, \vec{b}_{-i}^t)\,.$ 
		\begin{itemize}
			\item Compute   set 	$\allbids{}_i$  as defined in equation \eqref{eq:si}. By Observation~\ref{obs:opt_set_one_player},  player $i$ has  an optimum bid vector $\vec{b}_i^*$ with  $b_{i,j}^* \in \allbids{}_i$ for each $j \in [\m]$.
			\item Construct directed acyclic graph $G_i$ as in  Definition \ref{def:graph_G}. By  Theorem~\ref{thm:opt_offline}, there is a bijective map between  paths from the source to the sink in  $G_i$ and bid vectors $\Beta \in \allbids{}_i$ of player $i$.
			\item  Compute a maximum-weight path in graph $G_i$ and output the corresponding bid vector. 
		\end{itemize}
	\end{algorithm}

	
	

	\section{Online Setting} \label{sec:online_setting}
	
	In the online setting, we consider the repeated auction where in each round $t$ the auctioneer announces $\m$ units for sale. Then the players privately submit the bids $\vec{b}^t$ and  the auctioneer computes the outcome. At the end of  round $t$, the  auctioneer gives \emph{full information} or \emph{bandit} feedback. The information available to player $i$ at the beginning of each round $t$ is  the history: \\$H_i^{t-1} = (\vec{b}_i^1,\cdots,\vec{b}_i^{t-1} )  \cup  H_{-i}^{t-1}$, where \footnote{We omit the allocation information in the history, for it is not informative when other players are adversarial: from the price information, player $i$ knows her own allocation; treating all other players as a giant adversarial player, that player gets the rest of the allocation. }
	
	\begin{align*}
		H_{-i}^{t-1} = \begin{cases}
			\left(\vec{b}_{-i}^1, \cdots, \vec{b}_{-i}^{t-1}\right) & \text{ in the full information setting}, \\
			\left(p(\vec{b}^1), \cdots, p(\vec{b}^{t-1})\right) & \text{ in the bandit setting}.
		\end{cases}
	\end{align*}
	
	A pure \emph{strategy} for player $i$ at time $t$ is a function $\pi_i^t: H_i^{t-1} \to \mathbb{R}^{\m}$, where $H_i^{t-1}$ is the  history available to player $i$ at the beginning of round $t$. Thus the pure strategy tells player $i$ what bid vector to submit next given the information observed by the player  about the previous rounds. A mixed strategy is a probability distribution over the set of pure strategies. Consequently, the bid vector $\vec{b}_i^t$ submitted by player $i$ at time $t$ is the realization of the mixed strategy $\pi_i^t(H_i^{t-1})$. We denote by $\pi_i=(\pi_i^1, \cdots, \pi_i^T)$ the overall strategy of player $i$ over the entire time horizon. 
	
	
	\paragraph{Regret.} From now on we fix an arbitrary player $i$. Given a bidding strategy $\pi_i$ 
	of player $i$, the regret of the player is  defined with respect to a history  $H_{-i}^T$ of bids by other players: 
	\begin{align}\label{eq:regret}
		\text{Reg}_i(\pi_i, H_{-i}^T) &= \max_{\Beta \in \allbids{}_i^K} \sum_{t=1}^T\sum_{j=1}^\m \left(v_{i,j}-p(\Beta, \vec{b}_{-i}^t)\right)\cdot \mathbbm{1}_{\{ x_i(\Beta, \vec{b}_{-i}^t) \geq j \}}  \notag \\  
		&\qquad \qquad - \sum_{t=1}^T \Ex_{\vec{b}_i^t \sim  \pi_i^t(H_{i}^{t-1})} \left[ \sum_{j=1}^{\m} \left(v_{i,j}-p(\vec{b}^t)\right)\cdot \mathbbm{1}_{\{ x_i(\vec{b}^t) \geq j \}}\right] \,. 
	\end{align}
	
	For the purpose of equipping player $i$ with a  bidding algorithm, we will think of the other players as adversarial, and aim to achieve small regret regardless of $H_{-i}^T$. 
	A way to interpret the regret is that player $i$ is competing against an oracle that (a) has the perfect knowledge of the history of bids,  but (b) is restricted to submit a time-invariant bid $\Beta$. The regret quantifies how much worse player $i$ does compared to the oracle. 
	%

	Before studying the feedback models in greater detail, we replace the set $\allbids{}_i$  of candidate bids for player $i$ from equation~\eqref{eq:si} of the offline section  by a coarser set   $\mathcal{S}_\varepsilon = \{\varepsilon, 2\varepsilon, \cdots, \lceil v_{i,1}/\varepsilon\rceil \varepsilon\}$.
	%

	\subsection{Full information}
	In the full information scenario, at the end of each round $t$ the auctioneer announces the bids $\vec{b}^t$ submitted by the players in  round $t$.  
	We define a graph $G^t$ that   forms the basis of the learning algorithm for the player. The graph $G^t$ is the same as in the offline setting (Definition~\ref{def:graph_G}), except for the next differences:
	\begin{itemize}
	\item  The vertex $z_{s,j}$ is indexed by $s\in \mathcal{S}_\varepsilon$, instead of $\mathcal{S}_i$; 
	
	\item  The edge weights are based only on the current round $t$, rather than the sum over all rounds in the offline setting. More specifically, for edge $e=(z_{r,j}, z_{s,j+1})$ or $e=(z_{r,K}, z_+)$, set 
	\begin{align}\label{eq:edge_weight_Gt}
		w^t(e) = \mathbbm{1}_{\{x_{i}(\vec{h}^t)\geq j\}} \left(v_{i,j} - r \right) +  j   \Bigl[\mathbbm{1}_{\{x_{i}(\vec{h}^t)>j\}} \left(r-s \right) + \mathbbm{1}_{\{x_{i}(\vec{h}^t)=j\}}\left(r- p(\vec{h}^t) \right)\Bigr]\,.  
	\end{align}    
\end{itemize}
	
	The formal definition is deferred to Definition \ref{def:graph_G_t} in  Appendix \ref{append:online}. This DAG structure enables us to formulate the learning problem as an online maximum weight path problem, and therefore treat each path as an expert and apply online learning algorithms to compete with the best expert. Theorem \ref{thm:intro_MWU_ub} shows  such an algorithm (Algorithm \ref{alg:weight_pushing}) works for both $K$-th and $(K+1)$-st price auctions. 
	
	\setlength{\textfloatsep}{0pt}
	\begin{algorithm}[h!]
		\caption{Selecting bids using the weight pushing algorithm in \cite{takimoto2003path}.}\label{alg:weight_pushing}
		\textbf{Input:} time horizon $T$, valuation $\vec{v}_i = (v_{i,1},\cdots,v_{i,\m})$ of the  player $i$
		
		\textbf{Output:} bid vectors $(\vec{b}_i^1,\cdots,\vec{b}_i^T)$. 
		
		Set the resolution parameter $\varepsilon = {v_{i,1}} \sqrt{\m / T}$, and learning rate $\eta = \sqrt{\log T}/(v_{i,1} \sqrt{ \m T})$.
		
		For every $t=1, \cdots, T$: 
		\begin{itemize}
			\item Construct a graph $G^t$ as in Definition~\ref{def:graph_G_t}, initially without  weights on the edges. 
			\item If $t = 1$: for each edge $e$,  initialize an edge probability as 
			\begin{small}
				\begin{align*}
					\phi^1(e) = \begin{cases}
						1/|\mathcal{S}_\varepsilon| &\text{if } e = (z_-, z_{r,1}); \\
						1/|\{r'\in \mathcal{S}_\varepsilon \mid r'\le r \}| &\text{if } e = (z_{r,j}, z_{s,j+1}), r \ge s; \\
						1 & \text{if } e = (z_{r,K}, z_+). 
					\end{cases}
				\end{align*}
			\end{small}
			
			\item If $t\ge 2$: 
			\begin{itemize}
				\item starting from $\Gamma^{t-1}(z_+)=1$, recursively compute in the bottom-up order
				\begin{small}
					\begin{align*}
						\Gamma^{t-1}(u) = \sum\limits_{v: e=(u,v)\in E} \phi^{t-1}(e) \cdot \exp(\eta \cdot  w^{t-1}(e))\cdot \Gamma^{t-1}(v), \quad \forall u \in V(G^t); 
					\end{align*}
				\end{small}
				
				\item for each edge $e$, update the edge probability
				\begin{small}
					\begin{align}\label{eq:edge_update}
						\phi^t(e) = \phi^{t-1}(e) \cdot \exp \left(\eta \cdot  w^{t-1}(e)\right)\cdot \frac{\Gamma^{t-1}(v)}{\Gamma^{t-1}(u)}, \; \; \; \text{   for all   } e = (u,v) \in E(G^t) \,. 
					\end{align}
				\end{small}
			\end{itemize}
			
			\item For $j=1,2,\cdots,K$: sample ${b}_{i,j}^t \sim \phi^t(z_{{b}_{i,j-1}^t,j-1}, \cdot)$, where $z_{{b}_{i,0}^t,0} := z_-$.
			
			\item The player observes ${\bf b}_{-i}^t$,  and sets the edge weights $w^t(e)$ in $G^t$ according to \eqref{eq:edge_weight_Gt}. 
		\end{itemize}
		
	\end{algorithm}

	In Algorithm \ref{alg:weight_pushing}, inspired by the path kernel algorithm in \cite{takimoto2003path},
	we define $\phi^t(u,\cdot)$ as a probability distribution over the outneighbors of vertex $u$. To determine the bid vector in each round, we use a random walk starting from the source $z_-$, where the learner selects a random neighbor $z_1$ of $z_-$ using $\phi^t(z_-, \cdot)$, then selects a random neighbor $z_2$ of $z_1$ using $\phi^t(z_1,\cdot)$, and so on, until reaching the sink $z_+$. Remarkably, as we prove in Theorem \ref{thm:intro_MWU_ub}, the distribution of the resulting path $\mathfrak{p}^t$ is the same as that in the Multiplicative Weight Update (Hedge) algorithm \cite{littlestone1994weighted}. 
	
	Thus while the number of paths in the DAG (corresponding to experts in the Hedge algorithm) is exponentially large, we can run the Hedge algorithm with polynomial space and time complexity by leveraging the DAG structure. We do not need to keep track of the weights of every path, but only need to store and update the weights of every \emph{edge} in the DAG using the variables $\phi^t(e)$ and the recursive formula in equation \eqref{eq:edge_update}. To accomplish this, we use the variables $\Gamma^t(u)$,  for every vertex $u$ in the DAG, which can be viewed as the weights of paths starting at vertex $u$.

	We obtain the next theorem , the proof of which can be found in  Appendix~\ref{append:online}.

	
	\bigskip 
	
	\noindent \textbf{Theorem~\ref{thm:intro_MWU_ub}.}
	\emph{For each player $i$ and time horizon $T$, under full-information feedback, Algorithm \ref{alg:weight_pushing}  runs in  time $O(T^2)$ and  guarantees the player's regret  is at most $O(v_{i,1}\sqrt{TK^3\log T})$.}
	
	\bigskip

	\subsection{Bandit feedback}
	The next theorem presents an upper bound on the regret under the bandit feedback. Similar to the full-information case, the  algorithm we design works for both the $K$-th and $(K+1)$-st price auctions. 
	
	
	\bigskip 
	
	\noindent \textbf{Theorem~\ref{thm:intro_bandit_ub}.}
	\emph{For each player $i$ and time horizon $T$, there is an algorithm for bidding in the repeated auction with bandit feedback that runs in  $O(KT + K^{-5/4}T^{7/4})$ and guarantees the player's regret  is bounded by  
		$O(v_{i,1} \cdot \min \{ \sqrt[4]{T^3K^7\log T}, KT\})$.}
	
	\bigskip

	We describe the algorithm below, while the proof of the theorem can be found in Appendix~\ref{append:online}. 
	
	The bidding strategy and its efficient implementation are the same as those in the proof of Theorem \ref{thm:intro_MWU_ub}, with the only difference as follows. Under  bandit feedback, we are not able to compute $w^t(e)$ for every edge $e$ at the end of time $t$; instead, we are able to compute $w^t(e)$ for every edge on the chosen path $\mathfrak{p}^t$ at time $t$. This observation motivates us to use the next  unbiased estimator $\widehat{w}^t(e)$:
	\begin{align}\label{eq:EXP3_estimator}
		\widehat{w}^t(e) = \overline{w}(e) - \frac{\overline{w}(e)-w^t(e)}{p^t(e)}\mathbbm{1}_{\{ e\in \mathfrak{p}^t \}}, \qquad \text{ with } p^t(e) = \sum_{\mathfrak{p}: e\in \mathfrak{p}} P^t(\mathfrak{p}),  
	\end{align}
	where $P^t$ is the player's action distribution over paths, and for each edge $e$: 
	\begin{align*}
		\overline{w}(e) = \begin{cases}
			v_{i,1} - r + j(r-s) &\text{if } e = (z_{r,j}, z_{s,j+1}); \\
			v_{i,1} - r + Kr &\text{if } e = (z_{r,K}, z_+); \\
			0 & \text{if } e = (z_-, z_{r,1}). 
		\end{cases}
	\end{align*}
	
	The resulting algorithm is exactly the same as Algorithm \ref{alg:weight_pushing}, with the only difference being that all $w^t$ are replaced by $\widehat{w}^t$, with an additional step \eqref{eq:EXP3_estimator} for the computation of $\widehat{w}^t$. 
	
	The regret analysis  will appear  similar to that in \cite{gyorgy2007line}, but a key difference is that we no longer have $w^t(e)\le v_{i,1}$ for every edge $e$ in our setting. Instead, we apply an edge-dependent upper bound $w^t(e)\le \overline{w}(e)$, and use the property that $\sum_{e\in \mathfrak{p}} \overline{w}(e)\le Kv_{i,1}$ for every path $\mathfrak{p}$ from  source to sink.

	\subsection{Regret lower bounds}
	The following theorem establishes an $\Omega(K\sqrt{T})$ lower bound on the minimax regret of a single bidder in the full information scenario. Clearly the same lower bound also holds for  bandit feedback. 
	
	
	\bigskip 
	
	\noindent \textbf{Theorem~\ref{thm:intro_lb_regret}} 
	(restated, formal). \emph{Let $K\ge 2$. For any policy $\pi_i$ used by player $i$, there exists a bid sequence $\{\vec{b}_{-i}^t\}_{t=1}^T$ for the other players such that the expected regret in \eqref{eq:regret} satisfies
		\[ 
		\Ex[\text{\rm Reg}_i(\pi_i, \{\vec{b}_{-i}^t\}_{t=1}^T)] \ge cv_{i,1}K\sqrt{T},
		\]
		where $c>0$ is an absolute constant. }
	
	\bigskip

	The proof of Theorem \ref{thm:intro_lb_regret} is  in Appendix \ref{append:online}. The idea is to employ the celebrated Le Cam's two-point method, but the construction of the hard instance is rather delicate to reflect the regret dependence on $K$. We also remark that this regret lower bound still exhibits a gap compared with the current upper bound in Theorem \ref{thm:intro_MWU_ub}, while it seems hard to extend the current two-point construction to Fano-type lower bound arguments. It is an outstanding question to close this gap. 
	
	\section{Concluding Remarks}

	Several open questions arise from this work. For example, is the regret $\Theta(K \sqrt{T})$ in both the full information and bandit setting? In the full information setting, the challenge in obtaining an algorithm with regret $O(K \sqrt{T})$ 
	is that the expert distribution is not a product distribution over the layers, but rather only close to a product.  
	Furthermore, how does  offline and online learning take place in repeated auctions when the units are not identical, the valuations do not necessarily exhibit diminishing returns, or the pricing rule is different (e.g. in the first price setting)? 

	
	
	\bibliographystyle{alpha}
	\bibliography{carbon_bib}
	
	\newpage
	
	\appendix 
	
	\section*{Roadmap to the appendix}  Appendix \ref{append:offline} includes the  proofs  omitted from Section~\ref{sec:offline} (Offline Setting). Appendix \ref{append:online} includes the  proofs omitted from Section \ref{sec:online_setting} (Online Setting). Appendix~\ref{app:equilibrium_analysis} includes the equilibrium analysis. A few results from prior work that we invoke are in Appendix~\ref{app:prior_work_theorems}.
	

	\section{Appendix: Offline Setting}\label{append:offline}
	
	In this section we include the proofs omitted from the main text for the offline setting.
	
	
	\bigskip 
	
	\noindent \textbf{Observation \ref{obs:opt_set_one_player} (restated).}
	\emph{ Player $i$ has an optimum bid vector  $\Beta = (\beta_1, \ldots, \beta_{\m}) \in \mathbb{D}^{\m}$ with  $\beta_j \in \allbids{}_i$ for all $ j\in [\m]$.}
	\begin{proof}
		Let $\vec{c} = (c_1, \ldots, c_{\m})$ be an arbitrary optimum strategy for player $i$. Suppose $\vec{c} \not \in \allbids{}_i^{\m}$.
		
		We use $\vec{c}$ to construct an optimum  bid vector $\Beta \in \allbids{}_i^{\m}$  as follows.
		For each $j \in [\m]$:
		\begin{itemize}
			\item If $c_j \in \allbids{}_i$, then set $\beta_j = c_j$.
			\item Else, set $\beta_j = \max \bigl\{ y \in \allbids{}_i \mid  y \leq c_{j}  \bigr\}$.
		\end{itemize}
		
		Then in each round $t$, player $i$ gets the same allocation when playing $\Beta$ as it does when playing $\vec{c}$ and the others play $\vec{b}_{-i}^t$ since the ordering of the owners of the bids is the same under $(\vec{c}, \vec{b}_{-i}^t)$ as it is under $(\Beta, \vec{b}_{-i}^t)$; moreover, the price weakly decreases in each round $t$.
		Thus player  $i$'s utility weakly improves. Since $\vec{c}$ was an optimal strategy for player $i$, it follows that $\Beta$ is also an optimal strategy for player $i$ and, moreover, $\Beta \in  \allbids{}_i^{\m}$ as required.
	\end{proof}
	
	Next we show how to find an optimal bid vector in polynomial time. The proof uses several lemmas, which are proved after the theorem.
	
	\bigskip 
	
	\noindent \textbf{Theorem~\ref{thm:opt_offline} (restated, formal).}
	\emph{Suppose we are given a number $n$ of players, number $\m$ of units, valuation $\vec{v}_i$ of player $i$, discretization level $\varepsilon > 0$, and  bid history  $H_{-i} = (\vec{b}_{-i}^1, \ldots, \vec{b}_{-i}^T)$ by players other than $i$. Then  
		an optimum bid vector for  player $i$ can be computed in polynomial time in the input parameters.}
	
	\begin{proof}[Proof of Theorem~\ref{thm:opt_offline}]
		Compute the set $\allbids{}_i$ given by equation~\eqref{eq:si} and 
		the graph $G_i$  from Definition~\ref{def:graph_G}. The proof has several steps as follows.

		\paragraph{$G_i$ is a DAG.} All the edges in $G_i$ flow  from the source $z_{-}$ to the nodes from layer $1$, then from the nodes in layer $j$ to those in layer $j+1$ (i.e. of the form $(z_{r,j}, z_{s,j+1})$ for all $j \in [\m-1]$ and $r,s\in \allbids{}_i$ with  $r 
		\geq  s$), and finally from all the nodes in layer $\m$ to the sink $z_{+}$. A cycle would require at least one back edge,  but such edges do not exist.
		
		\paragraph{Bijective map between bid vectors and paths from  source to  sink in $G_i$.} To each bid  vector ${\Beta}=(\beta_{1}, \dots, \beta_{\m}) \in \allbids{}_i^{K}$, associate the following path in $G_i$:
		\[ P(\Beta)= \Bigl( z_{-},  z_{\beta_{1}, 1}, \dots, z_{\beta_{\m}, \m}, z_{+} \Bigr)\,.
		\] 
		
		We show next the map $P$ is a bijection from the set $\allbids{}_i$ of candidate bid vectors for player $i$ to  the set of paths from source to sink in $G_i$. 
		Consider arbitrary bid vector $\bm{\beta}=(\beta_1, \dots, \beta_{\m}) \in \allbids{}_i^{\m}$. By definition of a bid vector, we have  $\beta_1 \geq \ldots \geq  \beta_{\m}$ . Then  $P(\Beta)= (z_{-}, z_{\beta_1, 1}, \dots, z_{\beta_{\m}, \m}, z_{+})$ 
		is a valid path in  $G_i$.  
		
		Conversely, since $G_i$   has an edge $(z_{r,j}, z_{s,j+1})$ if and only if $r \geq s$, each path from source to sink in $G_i$ has the form $Q = (z_{-}, z_{\beta_1, 1}, \ldots, z_{\beta_{\m}, \m}, z_{+})$ for some numbers $\beta_1, \ldots, \beta_{\m} \in \allbids{}_i$, and so it can be mapped to bid vector  $\Beta = (\beta_1, \ldots, \beta_{\m})$. Thus $Q = P(\Beta)$, and so $P$ is a bijective map.
		
		\paragraph{Utility of player $i$ and weight of a path of length $\m$ in $G_i$.} Let  $\Beta =(\beta_1, \dots, \beta_{\m}) \in \allbids{}_i^{\m}$.  Additionally, let $\beta_{\m+1} = 0$. 
		The total utility of player $i$  when bidding $\Beta$ in each round $t$ while the others bid $\vec{b}_{-i}^t$ is		
		$U_i(\Beta) = \sum_{t=1}^T  u_i(\vec{h}^t), \text{ where } \vec{h}^t = (\Beta, \vec{b}_{-i}^t) \; \, \forall t \in [T] \,. $
		
		Let  $P(\Beta)= (z_{-}, z_{\beta_1, 1}, \dots, z_{\beta_{\m}, \m}, z_{+})$  be the  path  in  $G_i$ corresponding to $\Beta$.  The edges of the path $P(\Beta)$ are $(z_{-}, z_{\beta_1,1})$, $(z_{\beta_1, 1}, z_{\beta_2, 2}), \ldots, (z_{\beta_{\m-1}, \m-1}, z_{\beta_{\m}, \m})$, $(z_{\beta_{\m}, \m}, z_{+})$, while the weight of the path   is equal to the sum of its edges.
		Summing equation \eqref{eq:edge_weight_G}, which gives  the weight of an edge, across all edges of  $P(\Beta)$ implies that 
		the weight of  path  $P(\Beta)$ 
		is  equal to 
		\begin{small}
			\begin{align}  \label{eq:weight-P-Beta}
				w(P(\Beta)) = \sum_{j=1}^\m \sum_{t=1}^{T} \Bigl[  \mathbbm{1}_{\{x_{i}(\vec{h}^t)\geq j\}}(v_{i,j} - \beta_j) + j\left(\mathbbm{1}_{\{x_{i}(\vec{h}^t)> j\}} (\beta_j-\beta_{j+1}) + \mathbbm{1}_{\{x_{i}(\vec{h}^t)= j\}}(\beta_j - p(\vec{h}^t))\right) \Bigr] \,.
			\end{align}
		\end{small}

		We claim that $U_i(\Beta) = w(P(\Beta))$. The high level idea is to rewrite the utility to ``spread it'' across the edges of the path corresponding to bid profile $\Beta$. 
		Towards this end, recall the utility of player $i$   at   strategy profile $\vec{h}^t = (\Beta, \vec{b}_{-i}^t)$ is                                 
		\begin{align}\label{eq:kmjefjnekr}
			u_{i}(\vec{h}^t)= \sum_{j=1}^\m  \mathbbm{1}_{\{x_{i}(\vec{h}^t)\geq j\}}(v_{i,j} - p(\vec{h}^t))\,.
		\end{align} 
		
		{By Lemma~\ref{lem:rewrite_u_i}, equation~\eqref{eq:kmjefjnekr} is equivalent to}
		\begin{align}  
			u_i(\vec{h}^t) = \sum_{j=1}^\m\Bigl[ \mathbbm{1}_{\{x_{i}(\vec{h}^t) \geq j\}} (v_{i,j} - \beta_j)  + j \left(\mathbbm{1}_{\{x_{i}(\vec{h}^t)>j\}} (\beta_j - \beta_{j+1}) + \mathbbm{1}_{\{x_{i}(\vec{h}^t)=j\}}(\beta_j- p(\vec{h}^t))\right)\Bigr] \,. \label{eq:lkrfijr}
		\end{align} 
		
		Summing $u_i(\vec{h}^t)$   over all rounds $ t $  gives 
		\begin{align} 
			U_i(\Beta) &= \sum_{t=1}^{T} u_i(\vec{h}^t) \notag \\ 
			& =  \sum_{t = 1}^T \sum_{j=1}^\m \Bigl[  \mathbbm{1}_{\{x_{i}(\vec{h}^t)\geq j\}}(v_{i,j} - \beta_j) + j\left(\mathbbm{1}_{\{x_{i}(\vec{h}^t)> j\}} (\beta_j-\beta_{j+1}) + \mathbbm{1}_{\{x_{i}(\vec{h}^t)= j\}}(\beta_j - p(\vec{h}^t))\right) \Bigr] \explain{By equation~\eqref{eq:lkrfijr}} \\
			&= \sum_{j=1}^\m \sum_{t=1}^{T} \Bigl[  \mathbbm{1}_{\{x_{i}(\vec{h}^t)\geq j\}}(v_{i,j} - \beta_j) + j\left(\mathbbm{1}_{\{x_{i}(\vec{h}^t)> j\}} (\beta_j-\beta_{j+1}) + \mathbbm{1}_{\{x_{i}(\vec{h}^t)= j\}}(\beta_j - p(\vec{h}^t))\right) \Bigr] \explain{Swapping the order of summation}  \\
			& = w(P(\Beta))\,. \explain{By equation \eqref{eq:weight-P-Beta}}
		\end{align}

		Thus  the weight of the path $P(\Beta)$ is equal to the utility of player $i$ from bidding $\bm{\beta}$. The implication is that to find an optimum bid vector,  it suffices to compute a maximum weight path in $G_i$.

		\paragraph{Computing a maximum weight path in $G_i$.} The graph $G_i$ is a DAG with a number  of vertices of $G_i$ that is polynomial in the input parameters. By Lemma~\ref{lem:compute_edge_weights_poly_time_G_i},
		the edge weights of $G_i$ can be computed in polynomial time. 
		
		A maximum weight path in a DAG can be computed in polynomial time by changing every weight to its negation to obtain a graph $-G$. Since $G$ has no cycles, the graph $-G$ has no negative cycles. Thus running a shortest path algorithm on $-G$ will yield a longest (i.e. maximum weight) path on $G$ in polynomial time, as required.
		
		The unique bid vector corresponding to the maximum weight path  found can then be recovered in time $O(\m)$ and it represents an optimum bid vector for player $i$.
	\end{proof}

	\begin{lemma} \label{lem:rewrite_u_i}
		In the setting of Theorem~\ref{thm:opt_offline},  for each bid profile $\Beta \in \allbids{}_i^{\m}$ and round $t \in [T]$, define $\vec{h}^t = (\Beta, \vec{b}_{-i}^t)$.  Then we have 
		\begin{align}  \label{eq:rewrite_u_i}
			u_i(\vec{h}^t) = \sum_{j=1}^\m\Bigl[ \mathbbm{1}_{\{x_{i}(\vec{h}^t) \geq j\}} (v_{i,j} - \beta_j)  + j \left(\mathbbm{1}_{\{x_{i}(\vec{h}^t)>j\}} (\beta_j - \beta_{j+1}) + \mathbbm{1}_{\{x_{i}(\vec{h}^t)=j\}}(\beta_j- p(\vec{h}^t))\right)\Bigr] \,.
		\end{align} 
	\end{lemma}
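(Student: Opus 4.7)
The plan is a direct algebraic identity: both sides are functions of the allocation $x := x_i(\vec{h}^t)$, the price $p := p(\vec{h}^t)$, and the bid vector $\Beta$, and we will show they reduce to the same closed form $V_i(x) - x\cdot p$. The only nontrivial manipulation is a standard Abel (summation by parts) identity applied to the telescoping differences $\beta_j - \beta_{j+1}$, using the convention $\beta_{\m+1}=0$ introduced earlier in the proof of Theorem \ref{thm:opt_offline}.

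First I would note that the definition of $u_i(\vec{h}^t)$ given in equation \eqref{eq:kmjefjnekr} immediately yields
\[
u_i(\vec{h}^t) \;=\; \sum_{j=1}^{x} (v_{i,j} - p) \;=\; V_i(x) - x\cdot p,
\]
so it suffices to show the right-hand side of \eqref{eq:rewrite_u_i} also equals $V_i(x) - x\cdot p$. I would split the RHS into three sums $S_1+S_2+S_3$, where $S_1$ collects the $(v_{i,j}-\beta_j)$ terms, $S_2$ collects the $j(\beta_j-\beta_{j+1})$ terms, and $S_3$ collects the $j(\beta_j-p)$ terms. By the indicators, $S_1 = \sum_{j=1}^{x}(v_{i,j}-\beta_j) = V_i(x) - \sum_{j=1}^{x}\beta_j$, and $S_3$ has at most one nonzero contribution coming from $j=x$, giving $S_3 = x(\beta_x - p)$ when $x\ge 1$ (and $S_3=0$ when $x=0$, in which case $S_1=S_2=0$ as well and the identity is trivial).

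The key step is to simplify $S_2 = \sum_{j=1}^{x-1} j(\beta_j - \beta_{j+1})$ by Abel summation. Re-indexing the second half gives
\[
S_2 \;=\; \sum_{j=1}^{x-1} j\beta_j - \sum_{j=2}^{x}(j-1)\beta_j \;=\; \sum_{j=1}^{x-1}\beta_j - (x-1)\beta_x.
\]
Adding the three pieces,
\[
S_1 + S_2 + S_3 \;=\; V_i(x) - \sum_{j=1}^{x}\beta_j + \sum_{j=1}^{x-1}\beta_j - (x-1)\beta_x + x(\beta_x - p) \;=\; V_i(x) - x\cdot p,
\]
since $-\sum_{j=1}^{x}\beta_j + \sum_{j=1}^{x-1}\beta_j = -\beta_x$ and $-\beta_x - (x-1)\beta_x + x\beta_x = 0$. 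This matches $u_i(\vec{h}^t) = V_i(x) - x\cdot p$, proving the lemma.

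The main obstacle, if any, is bookkeeping around edge cases: the upper summation index $x$ can be $0$ (in which case every indicator vanishes and both sides are $0$) or $\m$ (in which case the ``$x>j$'' indicator is nonzero up to $j=\m-1$, and the convention $\beta_{\m+1}=0$ keeps $S_2$ well-defined). Both are easily handled by the computation above, so once the Abel identity is written down the rest is a one-line verification.
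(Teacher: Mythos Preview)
Your proof is correct. Both approaches reduce to verifying the same algebraic identity, but the routes differ. The paper proceeds unit by unit: it defines the per-unit contribution $u_{i,j}(\vec{h}^t)=\mathbbm{1}_{\{x_i\ge j\}}(v_{i,j}-p)$, rewrites each $u_{i,j}$ as $\mathbbm{1}_{\{x_i\ge j\}}(v_{i,j}-\beta_j)+\sum_{k=j}^{K}[\cdots]$ via a three-case analysis on whether $x_i$ is $=j$, $>j$, or $<j$, and then swaps the order of the resulting double sum to produce the factor of $j$. Your argument is more direct: you fix $x=x_i(\vec{h}^t)$, evaluate both sides as closed-form functions of $x$ (namely $V_i(x)-xp$), and use a one-line Abel summation on $S_2$ to collapse the telescoping $\beta$-terms. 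Your route is shorter and avoids case analysis; the paper's route has the virtue of showing explicitly how each edge-weight term in \eqref{eq:rewrite_u_i} is attributed to a specific unit, which motivates the DAG construction.
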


	\begin{proof}
		Given bid profile $\Beta = (\beta_1, \ldots, \beta_{\m})$, we also define $\beta_{\m+1} = 0$.
		
		If $x_i(\vec{h}^t) = 0$ then both sides of equation \eqref{eq:rewrite_u_i}  are zero, so the statement holds. Thus it remains to prove equation \eqref{eq:rewrite_u_i}  when $x_i(\vec{h}^t) > 0$.
		Let  $u_{i,j}(\vec{h}^t) = \mathbbm{1}_{\{x_{i}(\vec{h}^t)\geq j\}}(v_{i,j} - p(\vec{h}^t))$ for each $j\in [\m]$. The term $u_{i,j}(\vec{h}^t)$ represents the amount of utility obtained from the $j$-th unit acquired by player $i$ at price $p(\vec{h}^t)$, so    
		$u_i(\vec{h}^t) =
		\sum_{j=1}^{\m} u_{i,j}(\vec{h}^t)$. \\
		
		\noindent We first show that for each $j \in [\m]$:
		\begin{align} \label{eq:lem_u_ij_rewritten}
			u_{i,j}(\vec{h}^t) &= \mathbbm{1}_{\{x_{i}(\vec{h}^t)\geq j\}} (v_{i,j} - \beta_j)   +  \sum_{k=j}^{\m} \Bigl[\mathbbm{1}_{\{x_{i}(\vec{h}^t)>k\}} (\beta_k-\beta_{k+1}) + \mathbbm{1}_{\{x_{i}(\vec{h}^t)=k\}}(\beta_{k}- p(\vec{h}^t))\Bigr] \,. 
		\end{align}
		To prove \eqref{eq:lem_u_ij_rewritten}, we will rewrite $u_{i,j}(\vec{h}^t)$ by considering  three cases and writing a unified expression for all of them.
		\begin{description}
			\item[$1.$ $x_i(\vec{h}^t) = j$.] Then 
			\begin{align} 
				u_{i,j}(\vec{h}^t) & = v_{i,j} - p(\vec{h}^t) \notag \\
				& = (v_{i,j} - \beta_j) + (\beta_j - p(\vec{h}^t)) \notag \\
				& = \mathbbm{1}_{\{x_{i}(\vec{h}^t)\geq j\}} (v_{i,j} - \beta_j) +  \mathbbm{1}_{\{x_{i}(\vec{h}^t)= j\}} 
				(\beta_j - p(\vec{h}^t) \notag \\
				& = \mathbbm{1}_{\{x_{i}(\vec{h}^t)\geq j\}} (v_{i,j} - \beta_j) +  \sum_{k=j}^{\m} \Bigl[\mathbbm{1}_{\{x_{i}(\vec{h}^t)>k\}} (\beta_k-\beta_{k+1}) + \mathbbm{1}_{\{x_{i}(\vec{h}^t)=k\}}(\beta_{k}- p(\vec{h}^t))\Bigr] \explain{Since $\mathbbm{1}_{\{x_{i}(\vec{h}^t)>k\}} = 0$ $\forall k \geq j$ and $\mathbbm{1}_{\{x_{i}(\vec{h}^t)=k\}} = 1$ if and only if $k=j$.} \,. \notag 
			\end{align}
			\item[$2.$ $x_i(\vec{h}^t) > j$.] Then $x_i(\vec{h}^t) = j + \ell$, for some $\ell \in \{1, \ldots, \m-j\}$. We have 
			\begin{align} \label{eq:case_x_i_more_than_j_part1}
				u_{i,j}(\vec{h}^t) & = v_{i,j} - p(\vec{h}^t) \notag \\
				& = (v_{i,j} - \beta_j) + (\beta_j - \beta_{j+\ell}) +  (\beta_{j+\ell} - p(\vec{h}^t)) \notag \\
				& = (v_{i,j} - \beta_j) + \left( \sum_{k=j}^{j + \ell - 1} \beta_k - \beta_{k+1}\right) + (\beta_{j+\ell} - p(\vec{h}^t))\,.
			\end{align}
			We are in the case where $\mathbbm{1}_{\{x_{i}(\vec{h}^t)\geq j\}} = 1$, $\mathbbm{1}_{\{x_{i}(\vec{h}^t) > k\}} = 1$ if and only if $k \in \{j, \ldots, j + \ell -1 \}$, and $\mathbbm{1}_{\{x_{i}(\vec{h}^t) =  k\}} = 1$ if and only if $k = j + \ell$. Adding  indicators to the terms in \eqref{eq:case_x_i_more_than_j_part1} gives   \begin{align}
				u_{i,j}(\vec{h}^t) & = \mathbbm{1}_{\{x_{i}(\vec{h}^t)\geq j\}} (v_{i,j} - \beta_j) +  \sum_{k=j}^{\m} \Bigl[\mathbbm{1}_{\{x_{i}(\vec{h}^t)>k\}} (\beta_k-\beta_{k+1}) + \mathbbm{1}_{\{x_{i}(\vec{h}^t)=k\}}(\beta_k- p(\vec{h}^t))\Bigr] \,. \notag 
			\end{align}
			\item[$3.$ $x_i(\vec{h}^t) < j$.] Then $\mathbbm{1}_{\{x_{i}(\vec{h}^t)\geq j\}} = 0$, $\mathbbm{1}_{\{x_{i}(\vec{h}^t) > k\}} = 0$ for all $k \in \{j, \ldots, \m\}$,  and $\mathbbm{1}_{\{x_{i}(\vec{h}^t) =  k\}} = 0$ for all $k \in \{j, \ldots, \m\}$. Thus we trivially have the identity required by the lemma statement since $u_{i,j}(\vec{h}^t) = 0$ and the right hand side of \eqref{eq:lem_u_ij_rewritten} is zero as well.
		\end{description}
		Thus equation \eqref{eq:lem_u_ij_rewritten} holds in all three cases as required.

		For each $i \in [n]$ and $j \in [\m]$, define  
		\begin{align} \label{eq:A_ij}
			A_{i,j} & = \sum_{k=j}^{\m} \Bigl[\mathbbm{1}_{\{x_{i}(\vec{h}^t)>k\}} (\beta_k-\beta_{k+1}) + \mathbbm{1}_{\{x_{i}(\vec{h}^t)=k\}}(\beta_k- p(\vec{h}^t))\Bigr] \,. 
		\end{align}
		
		Then equation \eqref{eq:lem_u_ij_rewritten} is equivalent to 
		\begin{align} 
			u_{i,j}(\vec{h}^t) & =  \mathbbm{1}_{\{x_{i}(\vec{h}^t) \geq j\}} (v_{i,j} - \beta_j)  + A_{i,j}, \label{eq:reminder_u_i_j}
		\end{align}
		
		Summing equation \eqref{eq:A_ij} over all $j \in [\m]$ gives
		\begin{align}
			\sum_{j=1}^{\m} A_{i,j} & =    \sum_{j=1}^{\m} \sum_{k=j}^{\m} \Bigl[\mathbbm{1}_{\{x_{i}(\vec{h}^t)>k\}} (\beta_k-\beta_{k+1}) + \mathbbm{1}_{\{x_{i}(\vec{h}^t)=k\}}(\beta_k- p(\vec{h}^t))\Bigr]  \\
			&\overset{a}{=}  \sum_{k=1}^{K} \sum_{j=1}^{k} \Bigl[\mathbbm{1}_{\{x_{i}(\vec{h}^t)>k\}} (\beta_k-\beta_{k+1}) + \mathbbm{1}_{\{x_{i}(\vec{h}^t)=k\}}(\beta_k- p(\vec{h}^t))\Bigr]  \\
			&= \sum_{k=1}^{\m}  (\beta_k-\beta_{k+1}) \sum_{j=1}^{k} \mathbbm{1}_{\{x_{i}(\vec{h}^t)>k\}}  + \sum_{k=1}^{\m} (\beta_k- p(\vec{h}^t)) \sum_{j=1}^{k} \mathbbm{1}_{\{x_{i}(\vec{h}^t)=k\}}\\
			&= \sum_{k=1}^{\m}  (\beta_k-\beta_{k+1}) \cdot  k \cdot \mathbbm{1}_{\{x_{i}(\vec{h}^t)>k\}}  + \sum_{k=1}^{\m} (\beta_k- p(\vec{h}^t)) \cdot k \cdot  \mathbbm{1}_{\{x_{i}(\vec{h}^t)=k\}}  \label{eq:sum_a_ij}
		\end{align}
		where equation (a) holds because we change the order of the double summations and  in the double summations, we consider any (integer) pair of $(j, k)$ such that $1\le k\le j \le \m$. 
		
		Then we can rewrite the utility $u_i(\vec{h}^t)$ as 
		\begin{align}
			u_i(\vec{h}^t) & = \sum_{j=1}^{\m} u_{i,j}(\vec{h}^t) \notag \\
			& = \sum_{j=1}^{\m} \left[  \mathbbm{1}_{\{x_{i}(\vec{h}^t) \geq j\}} (v_{i,j} - \beta_j)  + A_{i,j} \right] \explain{By equation \eqref{eq:lem_u_ij_rewritten}} \\
			& = \sum_{j=1}^{\m}   \mathbbm{1}_{\{x_{i}(\vec{h}^t) \geq j\}} (v_{i,j} - \beta_j)  +  \sum_{j=1}^{\m}  (\beta_j-\beta_{j+1}) \cdot  j \cdot \mathbbm{1}_{\{x_{i}(\vec{h}^t)> j \}}  + \sum_{k=1}^{\m} (\beta_j - p(\vec{h}^t)) \cdot j \cdot  \mathbbm{1}_{\{x_{i}(\vec{h}^t)= j \}} \explain{By equation \eqref{eq:sum_a_ij}} \\
			& = \sum_{j=1}^{\m} \left[   \mathbbm{1}_{\{x_{i}(\vec{h}^t) \geq j\}} (v_{i,j} - \beta_j)  + j \Bigl( \mathbbm{1}_{\{x_{i}(\vec{h}^t)> j \}} (\beta_j-\beta_{j+1})  + \mathbbm{1}_{\{x_{i}(\vec{h}^t)= j \}} (\beta_j - p(\vec{h}^t)) \Bigr)
			\right], \notag 
		\end{align}
		as required by the lemma statement.
	\end{proof}
	
	\begin{lemma} \label{lem:compute_edge_weights_poly_time_G_i}
		In the setting of Theorem~\ref{thm:opt_offline},  the edge weights of the graph $G_i$ can be computed in polynomial time.
	\end{lemma}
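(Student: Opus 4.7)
The plan is to establish two points: (a) the edge weight $w_e$ in equation \eqref{eq:edge_weight_G} is a function solely of $(r, s, j, \vec{b}_{-i}^t)$, not of the remaining components of any extending bid vector $\Beta$, and (b) under this reduction each $w_e$ can be evaluated by a sum of $T$ polynomial-time operations, so summing over all $O(|\allbids{}_i|^2 \m)$ edges still runs in polynomial time.

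For (a), I would verify that for any $\Beta \in \allbids{}_i^{\m}$ with $\beta_j = r$ and $\beta_{j+1} = s$, the indicator $\mathbbm{1}_{\{x_i(\vec{h}^t) \geq j\}}$ is determined by $(r, \vec{b}_{-i}^t)$ alone. The argument is a rank computation: $x_i(\vec{h}^t) \geq j$ iff $\beta_j$ lies among the top $\m$ bids in the sorted profile of $\vec{h}^t$, and using the tiebreak convention (lexicographic between players, and within a player by slot index), the rank of $\beta_j = r$ equals $(j-1)$ plus the number of bids in $\vec{b}_{-i}^t$ that strictly exceed $r$, plus the number of bids in $\vec{b}_{-i}^t$ equal to $r$ belonging to a lex-smaller player. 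The contribution $j-1$ comes from $\beta_1,\ldots,\beta_{j-1}$, each of which is $\geq r$ and has higher within-player priority than $\beta_j$; crucially the rank does not involve $\beta_{j+1}, \ldots, \beta_{\m}$. The same argument shows $\mathbbm{1}_{\{x_i(\vec{h}^t) > j\}} = \mathbbm{1}_{\{x_i(\vec{h}^t) \geq j+1\}}$ depends only on $(s, \vec{b}_{-i}^t)$, and the difference gives $\mathbbm{1}_{\{x_i(\vec{h}^t) = j\}}$. When $x_i(\vec{h}^t) = j$, the price is either the $\m$-th or $(\m+1)$-st entry of the sorted profile; since the top $\m$ positions are occupied by $\beta_1, \ldots, \beta_j$ together with the $\m-j$ highest-priority bids from $\vec{b}_{-i}^t$ that out-rank $\beta_{j+1} = s$, both the $\m$-th and $(\m+1)$-st entries are determined by $(r, s, \vec{b}_{-i}^t)$ alone.

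For (b), given $\vec{b}_{-i}^t$ we can presort it once in time $O(n\m\log(n\m))$, after which each rank or price query for a fixed pair $(r, s)$ is answered in $O(\log(n\m))$ time. Summing across the $T$ rounds yields $O(T\log(n\m))$ per edge. Since $|\allbids{}_i| = O(n\m T)$ and the number of edges in $G_i$ is $O(|\allbids{}_i|^2 \m + |\allbids{}_i| \m) = O(n^2 \m^3 T^2)$, the total running time to populate all edge weights is polynomial in the input parameters $(n, \m, T, 1/\varepsilon, v_{i,1})$. Edges leaving $z_-$ are weight $0$ by definition, while edges into $z_+$ are handled by plugging $s = 0$ into the same formula.

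The main obstacle is part (a): fixing a within-player tiebreak convention and then carefully verifying that the rank expressions involve only $r$, $s$, and $\vec{b}_{-i}^t$. The natural temptation is to worry that $\beta_1, \ldots, \beta_{j-1}$ or $\beta_{j+2}, \ldots, \beta_{\m}$ could shift the rank of $\beta_j$; but as argued above the former contribute exactly $j-1$ by priority (regardless of their values, since they are all $\geq r$) and the latter have lower priority than $\beta_j$. Once (a) is formalized, part (b) is a direct runtime accounting.
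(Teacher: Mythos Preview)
Your proposal is correct and follows essentially the same approach as the paper: both argue that the indicators $\mathbbm{1}_{\{x_i(\vec{h}^t)\ge j\}}$, $\mathbbm{1}_{\{x_i(\vec{h}^t)> j\}}$ and the price $p(\vec{h}^t)$ (when $x_i(\vec{h}^t)=j$) depend only on $(r,s,\vec{b}_{-i}^t)$ via a rank/counting argument, then conclude that each $w_e$ is computable in time polynomial in the input. The paper packages the rank count into an explicit function $\Gamma^t(x)$ and computes the price by deleting $\beta_1,\ldots,\beta_{j-1}$ and taking the appropriate order statistic of $\vec{b}_{-i}^t\cup\{r,s\}$, whereas you describe the same quantities directly and additionally give a sharper runtime accounting; the only point you leave slightly implicit is that $\beta_{j+2},\ldots,\beta_{\m}$ have rank $>\m+1$ (so cannot be the $(\m{+}1)$-st entry), but this follows immediately from your priority argument.
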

	
	\begin{proof}
		Recall the graph $G_i$ is constructed given as parameters the number $n$ of players, the number of units $\m$, a player $i$ with valuation $\vec{v}_i$, discretization level $\varepsilon > 0$, and a bid history  $H_{-i} = (\vec{b}_{-i}^1, \ldots, \vec{b}_{-i}^T)$ by players other than $i$. Thus the goal is to show the edge weights of $G_i$ can be computed in  polynomial time in the bit length of these parameters.

		Towards this end, we will show there exist efficiently computable values $I_{> j}^t, I_{\geq j}^t, I_j^t \in \{0,1\}$ and $q^t \in \allbids{}_i$  such that the weight $w_e$ of edge $e = (z_{r,j}, z_{s,j+1})$ is equal to 
		\begin{align} \label{eq:rewrite_w_e}
			w_e = \sum_{t=1}^{T} I_{\geq j}^t \left(v_{i,j} - r \right) +  j     \Bigl( I_{> j}^t \left(r-s \right) + I_{j}^t \left(r- q^t) \right)\Bigr)\,.
		\end{align}
		Roughly, $q^t$ will correspond to the price and $I_{> j}^t, I_{\geq j}^t, I_j^t$ will tell whether player $i$ gets  more than $j$ units, at least $j$ units, or  exactly $j$ units, respectively, at some profile $\Beta = (\beta_1, \ldots, \beta_{\m}) \in \allbids{}_i$ with $\beta_ j = r$ and $\beta_{j+1} = s$. 
		The choice of $\Beta$ in will not matter, as long as  $\beta_j = r$ and $\beta_{j+1} = s$. 
		
		\medskip 
		
		We prove equation~\eqref{eq:rewrite_w_e} in several steps:
		
		\paragraph{\em Step (i).} For each $t \in [T]$,  define $\Gamma^t: \mathbb{R} \to \mathbb{R}$, where 
		\begin{align} 
			\Gamma^t(x)  & = \Bigl| \Bigl\{ (\ell,j) \mid \bigl(b_{\ell,j}^t > x \text{ and } \ell \in [n] \setminus \{i\}, j \in [\m] \bigr)\Bigr. \Bigr. \notag  \\
			&   \Bigl. \Bigl. \qquad  \qquad     \text{ or } \bigl( b_{\ell,j}^t  = x \text{ and } \ell \in [n], \ell < i, j \in [\m] \bigr) \Bigr\} \Bigr|  \,.  \notag
		\end{align} 
		Thus $\Gamma^t(x)$ counts the bids in  profile $\vec{b}_{-i}^t$ that would have priority to a bid of value $x$ submitted by player $i$ (i.e. it  counts bids strictly higher  
		than $x$ and submitted by players other than $i$, as well as bids equal to $x$ but submitted by players lexicographically before  $i$).
		\paragraph{\em  Step (ii).} 
		Recall the edge is denoted $e = (z_{r,j}, z_{s,j+1})$. 
		Let $\Beta \in \allbids{}_i$ be an arbitrary bid profile with $\beta_j =r$ and $\beta_j = s$.   Also define $\beta_{\m+1} = 0$.
		
		If $\Gamma^t(s) <  \m - j$, then at   $\vec{h}^t = (\Beta, \vec{b}_{-i}^t)$ player $i$ receives 
		one unit for each of the bids $\beta_{1}, \ldots, \beta_{j+1}$, since there are at most  $\m-j-1$ bids of other players that have  higher priority than player $i$'s highest $j+1$ bids. 
		Else, player $i$ does not get more than $j$ units.
		Thus   $I_{>j}^t = \mathbbm{1}_{\{x_i(\vec{h}^t) > j\}}$.
		\paragraph{\em  Step (iii).} If $\Gamma^t(r)  \leq  \m - j$, then player $i$ receives at least $j$ units at  $(\Beta, \vec{b}_{-i}^t)$. The  corresponding indicator is 
		\[ 
		I_{\geq j}^t = \mathbbm{1}_{\{ \Gamma^t(s)  \leq  \m - j\}} =  \mathbbm{1}_{\{x_i(\vec{h}^t) \geq  j\}}\,.\]
		\paragraph{\em  Step (iv).} If $\Gamma^t(r)  \leq  \m - j$ and $\Gamma^t(s)  >  \m - j$, then player $i$ receives exactly $j$ units  at  $(\Beta, \vec{b}_{-i}^t)$. The indicator is 
		\[ I_{j}^t = \mathbbm{1}_{\{ \Gamma^t(r)  \leq  \m - j\}} \cdot \mathbbm{1}_{\{ \Gamma^t(s)  >  \m - j\}}  = \mathbbm{1}_{\{x_i(\vec{h}^t) =  j\}}\,.
		\]
		Now suppose $I_j^t = 1$. Then we show  the price  $q^t = p(\vec{h}^t)$ can be computed precisely without knowing the whole bid $\Beta$. 
		
		Consider the multiset of bids $B = \vec{b}_{-i}^t \cup \{\beta_1, \ldots, \beta_{j+1}\}$, recalling  $\beta_{\m+1}$ was defined as zero. Sort $B$ in descending order. We know the top $j-1$ bids of player $i$ are winning, so the price is not determined by any of them. Thus the price is  determined by  $\beta_j = r$, $\beta_{j+1} = s$, or by one of the bids in $\vec{b}_{-i}^t$. 
		Remove elements $\beta_1, \ldots, \beta_{j-1}$ from $B$ and set the price as follows:
		
		\begin{itemize}
			\item \emph{\textbf{Case {{(iv.a)}}}.} For  ($\m+1$)-st price auction:  set $q^t$ to  the $(\m +2- j)^{\text{th}}$ highest value in $B$. 
			\item \emph{\textbf{Case {{(iv.b)}}}.} For  $\m$-th price auction:  set $q^t$ to  the $(\m +1 - j)^{\text{th}}$ highest value in $B$. 
		\end{itemize}
		\paragraph{\em  Step (v).} Combining steps (i-iv), the weight of edge $e = (z_{r,j}, z_{s,j+1})$ can be rewritten as
		\begin{align} \label{eq:w_e_min_information}
			w_e & =  \sum_{t=1}^{T} \mathbbm{1}_{\{x_{i}(\vec{h}^t)\geq j\}} \left(v_{i,j} - r \right) +  j   \Bigl[\mathbbm{1}_{\{x_{i}(\vec{h}^t)>j\}} \left(r-s \right) + \mathbbm{1}_{\{x_{i}(\vec{h}^t)=j\}}\left(r- p(\vec{h}^t) \right)\Bigr] \explain{By Definition~\ref{def:graph_G}}\\
			& =  \sum_{t=1}^{T} I_{\geq j}^t \left(v_{i,j} - r \right) +  j   \Bigl[ I_{> j}^t \left(r-s \right) + I_{j}^t \left(r- q^t) \right)\Bigr] \explain{By steps (i-iv)}\,.
		\end{align}
		
		Thus the weight of each edge can be computed in polynomial time as required.
	\end{proof}
	
	\newpage 
	
	\section{Appendix:  Online Setting}\label{append:online}
	
	In this section we include the material omitted from the main text for the online setting. 
	
	Before studying the full information and bandit feedback models in greater detail, we replace the set $\allbids{}_i$  of candidate bids for player $i$ from equation~\eqref{eq:si} of the offline section  by a coarser set   
	\[ 
	\mathcal{S}_\varepsilon = \{\varepsilon, 2\varepsilon, \cdots, \lceil v_{i,1}/\varepsilon\rceil \varepsilon\}\,.
	\]
	
	The regret, which was defined in equation \eqref{eq:regret},  depends on whether the model is the $K$-th or $(K+1)$-st price auction; however the next lemma holds for both variants of the auction.
	
	\begin{lemma}\label{lemma:quantization}
		For all $\varepsilon>0$, let $\text{\rm Reg}_{i}(\pi_i, H_{-i}^T, \varepsilon)$ be the counterpart of \eqref{eq:regret} where $\mathcal{S}_i$ is replaced by the set  $\mathcal{S}_\varepsilon$.
		Then 
		$\text{\rm Reg}_i(\pi_i, H_{-i}^T) \le \text{\rm Reg}_{i}(\pi_i, H_{-i}^T, \varepsilon) + TK\varepsilon.$
	\end{lemma}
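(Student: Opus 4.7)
The two regrets $\text{Reg}_i(\pi_i,H_{-i}^T)$ and $\text{Reg}_{i}(\pi_i,H_{-i}^T,\varepsilon)$ differ only in the comparator domain ($\allbids{}_i^K$ versus $\mathcal{S}_\varepsilon^K$); the algorithm's expected cumulative utility cancels. So the lemma reduces to the deterministic bound $\max_{\Beta\in\allbids{}_i^K}\sum_{t=1}^T u_i(\Beta,\vec{b}_{-i}^t) \le \max_{\Beta\in\mathcal{S}_\varepsilon^K}\sum_{t=1}^T u_i(\Beta,\vec{b}_{-i}^t) + TK\varepsilon$. My plan is, for any $\Beta^*\in\allbids{}_i^K$, to produce a comparator $\tilde\Beta\in\mathcal{S}_\varepsilon^K$ achieving $u_i(\Beta^*,\vec{b}_{-i}^t)-u_i(\tilde\Beta,\vec{b}_{-i}^t)\le K\varepsilon$ in every round $t$, after which summing in $t$ and maximizing over $\Beta^*$ on the left gives the bound. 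First I would perform a preliminary reduction showing that any such $\Beta^*$ can be assumed to satisfy $\beta_j^*\le v_{i,j}$ for all $j$: truncating the largest-index overbid coordinate to $v_{i,j}$ preserves monotonicity (since all later coordinates already obey the inequality and $\vec v_i$ is itself decreasing) and weakly improves each round's utility. I would then round coordinate-wise by $\tilde\beta_j=\max\{\varepsilon,\lceil\beta_j^*/\varepsilon\rceil\varepsilon\}$, which yields $\tilde\Beta\in\mathcal{S}_\varepsilon^K$ and $\tilde\beta_j-\beta_j^*\in[0,\varepsilon]$ coordinate-wise.

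For the per-round comparison, write $x=x_i(\Beta^*,\vec{b}_{-i}^t)$, $\tilde x=x_i(\tilde\Beta,\vec{b}_{-i}^t)$, $p=p(\Beta^*,\vec{b}_{-i}^t)$, $\tilde p=p(\tilde\Beta,\vec{b}_{-i}^t)$. Since $\tilde\Beta\ge\Beta^*$ coordinate-wise, allocation monotonicity in player $i$'s bids gives $\tilde x\ge x$, and $1$-Lipschitzness of the $K$-th (respectively $(K+1)$-st) order statistic of the merged bid list in each entry gives $p\le\tilde p\le p+\varepsilon$. Expanding,
\[
u_i(\tilde\Beta,\vec{b}_{-i}^t)-u_i(\Beta^*,\vec{b}_{-i}^t)=\sum_{j=x+1}^{\tilde x}(v_{i,j}-\tilde p)+x(p-\tilde p).
\]
The second summand is at least $-x\varepsilon$. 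For each $j\in\{x+1,\dots,\tilde x\}$, the bid $\tilde\beta_j$ lies in the top $K$ of the new merged list, so $\tilde\beta_j\ge\tilde p$; combined with $\tilde\beta_j\le\beta_j^*+\varepsilon\le v_{i,j}+\varepsilon$ from the preliminary reduction, this gives $v_{i,j}-\tilde p\ge-\varepsilon$. Adding the two contributions yields the per-round bound $\ge-\tilde x\varepsilon\ge-K\varepsilon$, and summing over $t$ produces the lemma.

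The main obstacle will be the preliminary truncation step: multi-unit uniform pricing is not a VCG mechanism, so ``bidding above one's marginal value is dominated'' is not automatic from the auction format. A careful case analysis is needed to show that when the truncation ejects a winning bid from the top $K$, the unit player $i$ now loses was earning strictly negative marginal utility. The key identity in this case analysis is that $p_{\text{old}}>v_{i,j}\ge v_{i,x_i}$, which follows because the bid that newly enters the top $K$ after truncation must strictly exceed $v_{i,j}$, and that bid is at most the old $K$-th highest $p_{\text{old}}$. Once this step is granted, the rest of the argument is a direct telescoping estimate plus the order-statistic Lipschitz bound.
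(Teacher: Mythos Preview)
Your argument is correct and follows the same high-level route as the paper: take an optimal comparator, round each coordinate up into $\mathcal{S}_\varepsilon$, then use allocation monotonicity together with the $\varepsilon$-Lipschitz bound on the price to lose at most $K\varepsilon$ per round.

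The one substantive difference is your preliminary truncation to $\beta^*_j\le v_{i,j}$, which is stronger than the paper's truncation to $\beta^*_j\le v_{i,1}$. This extra step is not merely cosmetic: it is what makes the ``extra units'' term go through. The paper replaces the indicator $\mathbbm{1}_{\{x_i(\Beta,\vec{b}_{-i}^t)\ge j\}}$ by the larger $\mathbbm{1}_{\{x_i(\Beta^\varepsilon,\vec{b}_{-i}^t)\ge j\}}$ while keeping the old price $p(\Beta,\vec{b}_{-i}^t)$, and this step can fail when the newly won units have $v_{i,j}<p(\Beta,\vec{b}_{-i}^t)$. (For instance, with $K=2$, $v_i=(10,1)$, one opponent bid $3.5$, $\Beta=(3.6,3.5)$ a maximizer, and $\varepsilon=1$, the rounded vector wins both units but $v_{i,2}-p<0$.) Your reduction ensures $v_{i,j}\ge\beta^*_j\ge\tilde\beta_j-\varepsilon\ge\tilde p-\varepsilon$, so each extra unit contributes at least $-\varepsilon$, which is exactly the missing ingredient. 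Your sketch of the truncation case analysis is right in spirit; the only quibble is that the new entrant need only satisfy $p_{\mathrm{old}}\ge v_{i,j}$ (not strict, because of lexicographic tie-breaking), but weak inequality is all you use downstream.
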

	\begin{proof}
		First observe that it is not beneficial to bid above $v_{i,1}$, so we can assume without loss of generality that the maximizer $\Beta$ in \eqref{eq:regret} satisfies $\beta_j\le v_{i,1}$ for all $j\in [K]$. Now we convert $\Beta$ into another bid vector $\Beta^\varepsilon\in \mathcal{S}_\varepsilon^K$ as follows: for each $j\in [K]$, let $$\beta^\varepsilon_j = \min\{ y \in \mathcal{S}_\varepsilon \mid y \ge \beta_j \}\,.$$ 
		Clearly $ \beta_j\le \beta^\varepsilon_j\le  \beta_j+\varepsilon$. 
		
		Then we claim that the next inequalities hold:
		\begin{align}
			p(\Beta^\varepsilon, \vec{b}_{-i}^t) &\le p(\Beta, \vec{b}_{-i}^t) + \varepsilon ; \label{eq:ineq_1_lemma1}\\
			\mathbbm{1}_{
				\{x_i(\Beta^\varepsilon,\vec{b}_{-i}^t) \ge j\} } &\ge \mathbbm{1}_{\{x_i(\Beta,\vec{b}_{-i}^t)\ge j\}} \label{eq:ineq_2_lemma1} \,.
		\end{align}
		Inequality~\eqref{eq:ineq_1_lemma1} follows since
		\begin{itemize}
			\item  $p(\vec{b})$ is either the $K$-th or the $(K+1)$-st largest element of $\vec{b}$, and
			\item increasing each entry of $\vec{b}$ by at most $\varepsilon$ can only increase $p(\vec{b})$ by no more than $\varepsilon$. 
		\end{itemize}
		Inequality~\eqref{eq:ineq_2_lemma1} is due to the fact that bidding a higher price can only help to win the unit. 
		
		Consequently, we have 
		\begin{small}
			\begin{align}
				\sum_{t=1}^T\sum_{j=1}^\m \left(v_{i,j}-p(\Beta, \vec{b}_{-i}^t)\right)\cdot \mathbbm{1}_{\{x_i(\Beta,\vec{b}_{-i}^t)\ge j\}} 
				&\le \sum_{t=1}^T\sum_{j=1}^\m \left(v_{i,j}-p(\Beta, \vec{b}_{-i}^t)\right)\cdot \mathbbm{1}_{
					\{x_i(\Beta^\varepsilon,\vec{b}_{-i}^t) \ge j\} }  \notag  \\
				&\le \sum_{t=1}^T\sum_{j=1}^\m \left(v_{i,j}-p(\Beta^\varepsilon, \vec{b}_{-i}^t) - \varepsilon\right)\cdot \mathbbm{1}_{
					\{x_i(\Beta^\varepsilon,\vec{b}_{-i}^t) \ge j\} }  \notag \\
				&\le \sum_{t=1}^T\sum_{j=1}^\m \left(v_{i,j}-p(\Beta^\varepsilon, \vec{b}_{-i}^t) \right)\cdot \mathbbm{1}_{
					\{x_i(\Beta^\varepsilon,\vec{b}_{-i}^t) \ge j\} } + TK\varepsilon \,.  \notag 
			\end{align}
		\end{small}
		This gives the desired statement of the lemma. 
	\end{proof}
	
	\subsection{Full Information Feedback}
	
	In this section we include the omitted details for the full information setting.
	
	We begin with the formal definition of the graph $G^t = (V, E, w^t)$ used by the online learning algorithm with full information feedback.
	
	\begin{definition}[The graph $G^t$] \label{def:graph_G_t}
		Given valuation $\vec{v}_i$ of player $i$, bid profile $\vec{b}_{-i}^t$ of the players at round $t$, and $\varepsilon > 0$, construct a graph $G^t = (V, E, w^t)$ as follows.
		\begin{itemize}
			\item  \textbf{Vertices.} Create a vertex $z_{s,j}$ for each  $s\in \mathcal{S}_\varepsilon$  and index $j \in [\m]$.  We say vertex $z_{s,j}$ is in layer $j$.  Add  source  $z_{-}$ and  sink $z_{+}$.
			\item \textbf{Edges.} For each  index $j \in [\m-1]$ and pair of bids $r,s \in  \mathcal{S}_\varepsilon$ with $r \geq  s$, create a directed edge from vertex $z_{r,j}$ to vertex $z_{s,j+1}$.  Moreover, add edges from source $z_{-}$ to each node in layer $1$ and from each node in layer $\m$ to the sink $z_{+}$.
			
			\item \textbf{Edge weights.} For each edge $e=(z_{r,j}, z_{s,j+1})$ or $e=(z_{r,K}, z_+)$, let  $\vec{\Beta} = (\beta_1, \ldots, \beta_{\m}) \in \mathcal{S}_\varepsilon^{\m}$ be a  bid vector with $\beta_j = r$ and $\beta_{j+1} = s$ (we define $s=0$ if $j=K$). 
			Define the weight of edge $e$ as 
			\begin{align*}
				w^t(e) = \mathbbm{1}_{\{x_{i}(\Beta, \vec{b}_{-i}^t)\geq j\}} \left(v_{i,j} - r \right) +  j   \Bigl[\mathbbm{1}_{\{x_{i}(\Beta, \vec{b}_{-i}^t)>j\}} \left(r-s \right) + \mathbbm{1}_{\{x_{i}(\vec{h}^t)=j\}}\left(r- p(\Beta, \vec{b}_{-i}^t) \right)\Bigr]\,.  
			\end{align*}
			The   edges incoming from $z_-$  have weight  zero. 
		\end{itemize}
	\end{definition} 
	
	The next observation follows immediately from the definition. 
	\begin{observation} \label{obs:similar_G_t} The next properties hold:
		\begin{itemize}
			\item The weight $w^{t}(e)$ of each edge $e$ of $G^t$ can be computed efficiently  (see Lemma~\ref{lem:compute_edge_weights_poly_time_G_i}).
			\item There is a bijective map between bid vectors $\Beta \in \mathcal{S}_\varepsilon^K$ of player $i$ 
			and paths from the source to the sink of $G^t$ (see proof of Theorem~\ref{thm:opt_offline}).
		\end{itemize}
	\end{observation}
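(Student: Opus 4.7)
The plan is to observe that both claims follow almost immediately from the corresponding results in the offline section, since the graph $G^t$ differs from $G_i$ only in two minor respects: (a) the vertex set is indexed by the coarser candidate set $\mathcal{S}_\varepsilon$ rather than $\allbids{}_i$, and (b) the edge weight is computed from the single round-$t$ bid profile $\vec{b}_{-i}^t$ rather than from a sum over a whole history. Neither change affects the structural properties underlying Lemma~\ref{lem:compute_edge_weights_poly_time_G_i} or the bijection argument inside the proof of Theorem~\ref{thm:opt_offline}.

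For the second bullet (bijection), I would essentially transcribe the bijection argument from the proof of Theorem~\ref{thm:opt_offline}. Given any bid vector $\Beta = (\beta_1, \ldots, \beta_K) \in \mathcal{S}_\varepsilon^K$, arranged without loss of generality so that $\beta_1 \geq \cdots \geq \beta_K$, associate the path
\[ P(\Beta) = \bigl(z_-,\, z_{\beta_1,1},\, z_{\beta_2,2},\, \ldots,\, z_{\beta_K,K},\, z_+\bigr). \]
By construction of $G^t$, consecutive vertices $z_{\beta_j,j}$ and $z_{\beta_{j+1},j+1}$ are connected by an edge exactly because $\beta_j \geq \beta_{j+1}$, so $P(\Beta)$ is a valid source-to-sink path. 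Conversely, every source-to-sink path in $G^t$ visits exactly one vertex per layer and the layer indices are strictly increasing, so it must be of the form $(z_-, z_{\beta_1,1}, \ldots, z_{\beta_K,K}, z_+)$ for a unique decreasing tuple $(\beta_1, \ldots, \beta_K) \in \mathcal{S}_\varepsilon^K$. This establishes the bijection.

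For the first bullet (efficient weight computation), I would reuse the five-step decomposition from the proof of Lemma~\ref{lem:compute_edge_weights_poly_time_G_i}, applied to a single round. The only subtlety that needs to be re-verified is the well-definedness of $w^t(e)$: its formula uses the indicators $\mathbbm{1}_{\{x_i(\vec{h}^t)\ge j\}}$, $\mathbbm{1}_{\{x_i(\vec{h}^t)>j\}}$, $\mathbbm{1}_{\{x_i(\vec{h}^t)=j\}}$, and (in the third term) the price $p(\vec{h}^t)$, all at $\vec{h}^t = (\Beta, \vec{b}_{-i}^t)$ for some $\Beta$ with $\beta_j = r$, $\beta_{j+1} = s$. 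Following Steps (i)--(iv) of Lemma~\ref{lem:compute_edge_weights_poly_time_G_i}, I would define the counting function $\Gamma^t(\cdot)$ for the bids in $\vec{b}_{-i}^t$ and show that the three indicators depend only on the pair $(r,s)$ through the comparisons $\Gamma^t(r) \le K-j$ and $\Gamma^t(s) \le K-j$; and that, when $\mathbbm{1}_{\{x_i(\vec{h}^t)=j\}}=1$, the price $p(\vec{h}^t)$ is determined as the appropriate order statistic of the multiset $\vec{b}_{-i}^t \cup \{r, s\}$ (with the exact index depending on whether the auction is $K$-th or $(K+1)$-st price). Each of these quantities is clearly computable in time polynomial in $n$, $K$, and $\log(1/\varepsilon)$ by sorting, so $w^t(e)$ is computable in polynomial time.

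The only step requiring any attention is verifying that the indicator and price formulas remain correct when the candidate set is $\mathcal{S}_\varepsilon$ rather than $\allbids{}_i$; since the derivation in Lemma~\ref{lem:compute_edge_weights_poly_time_G_i} never used any special property of $\allbids{}_i$ beyond the ability to sort bids and count those above a threshold, this verification is routine. I would conclude by citing the two referenced results for the full details, making this observation essentially a corollary of the offline analysis.
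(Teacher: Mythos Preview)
Your proposal is correct and takes essentially the same approach as the paper: the paper treats this observation as immediate from the definition of $G^t$ together with the cited results, and you have simply spelled out why those references suffice. Your elaboration (noting that the only changes from $G_i$ to $G^t$ are the candidate set $\mathcal{S}_\varepsilon$ versus $\allbids{}_i$ and a single-round weight versus a summed weight, neither of which affects the bijection or the efficient computability argument) is exactly the verification the paper leaves implicit.
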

	
	Next we include the main theorem with its proof for the online learning algorithm under full information feedback.

	\paragraph{Theorem~\ref{thm:intro_MWU_ub} (restated).} 
	\emph{For each player $i$ and time horizon $T$, under full-information feedback, Algorithm \ref{alg:weight_pushing}  runs in  time $O(T^2)$ and  guarantees the player's regret  is at most 
		$O\left(v_{i,1}\sqrt{TK^3\log{T}}\right)$.} 
	\begin{proof}
		The detailed algorithm description is presented as Algorithm \ref{alg:weight_pushing}. At a high level, the proof has three  parts: $(i)$  showing that Algorithm~\ref{alg:weight_pushing} is a correct implementation of the Hedge algorithm where each expert is a path from source to sink in the DAG  $G^t$ for time $t$, $(ii)$   bounding its regret for an appropriate choice of the learning rate, and $(iii)$  bounding the runtime.

		\paragraph{Step I: Algorithm \ref{alg:weight_pushing} is a correct implementation of Hedge.} We show that Algorithm \ref{alg:weight_pushing} is the same as the Hedge algorithm in which every path in the DAG is equivalent to an expert in the Hedge algorithm. 
		
		To do so, 
		for each vertex $u$ in the DAG, let $\phi^t(u,\cdot)$ be a probability distribution over the outneighbors of $u$. Then, given the recursive sampling of the bids based on  the probability distribution $\phi^t$'s, the  probability that a path $\mathfrak{p}$ is chosen in Algorithm \ref{alg:weight_pushing} is equal to 
		\begin{align} \label{eq:product_structure}
			P^t(\mathfrak{p}) = \prod_{e\in \mathfrak{p}} \phi^t(e), 
		\end{align}
		where  $\phi^t$'s are updated in equation \eqref{eq:edge_update}, which is restated below:
		\begin{equation}
			\phi^t(e) = \phi^{t-1}(e) \cdot \exp \left(\eta \cdot  w^{t-1}(e)\right)\cdot \frac{\Gamma^{t-1}(v)}{\Gamma^{t-1}(u)}, \; \; \; \text{   for all   } e = (u,v) \in E(G^t) \,.  \tag{\ref{eq:edge_update} revisited\,.}
		\end{equation}

		On the other hand, in the Hedge algorithm, the path probabilities are updated as follows: Given a learning rate $\eta$, define $P^1_h(\mathfrak{p})=\prod_{e\in \mathfrak{p}} \phi^1(e)$, and for $t\ge 2$, 
		\begin{align}
			\label{eq:MWU_full_info}
			P^t_h(\mathfrak{p}) = \frac{P_h^{t-1}(\mathfrak{p})\exp(\eta \sum_{e\in \mathfrak{p}} w^{t-1}(e) )}{\sum_{\mathfrak{q}} P_h^{t-1}(\mathfrak{q})\exp(\eta \sum_{e\in \mathfrak{q}} w^{t-1}(e) )}\,.
		\end{align}
		The subscript `$h$' in $P^t_h(\mathfrak{p})$  stands for Hedge. 
		To show that the update rule in Algorithm \ref{alg:weight_pushing} is  equivalent to the one in Hedge, we  first prove the following statement: 
		
		\begin{quote}
			$(\dagger)$   For any vertex $u$ in the graph, let $\mathcal{P}(u)$ be the set of all paths from $u$ to $z_+$. Then
			\begin{align}\label{eq:path_kernel}
				\Gamma^{t-1}(u) = \sum_{\mathfrak{q}\in \mathcal{P}(u)} \prod_{e\in \mathfrak{q}}\left[\phi^{t-1}(e)\exp(\eta w^{t-1}(e))\right]. 
			\end{align}
		\end{quote}
		We prove equation \eqref{eq:path_kernel} by induction on $u$, from the bottom layer to the top layer. If $u=z_+$, by definition $\Gamma^{t-1}(z_+)=1$, and \eqref{eq:path_kernel} holds. Now suppose that \eqref{eq:path_kernel} holds for all $u$ in the $(k+1)$-st layer, for some $0\le k\le K$. Then if $u$ is in the $k$-th layer, the recursion of $\Gamma^{t-1}$ gives that
		\begin{align*}
			\Gamma^{t-1}(u) &= \sum\limits_{v: (u,v)\in E} \phi^{t-1}((u,v))  \exp(\eta w^{t-1}((u,v))\cdot \Gamma^{t-1}(v) \\
			&= \sum\limits_{v: (u,v)\in E} \phi^{t-1}((u,v)) \exp(\eta w^{t-1}((u,v)))\cdot \sum_{\mathfrak{q}\in \mathcal{P}(v)} \prod_{e\in \mathfrak{q}}\left[\phi^{t-1}(e)\exp(\eta w^{t-1}(e))\right] \\
			&= \sum_{\mathfrak{q}\in \mathcal{P}(u)} \prod_{e\in \mathfrak{q}}\left[\phi^{t-1}(e)\exp(\eta w^{t-1}(e))\right],
		\end{align*}
		and therefore \eqref{eq:path_kernel} holds. 
		
		Having shown equation \eqref{eq:path_kernel},  we prove  by induction on $t$ our  claim that Algorithm~\ref{alg:weight_pushing} is a correct implementation of Hedge, i.e. that $P^{t}(\mathfrak{p}) = P^{t}_h(\mathfrak{p})$ for all  paths $\mathfrak{p}$ and rounds $t$. For $t=1$, by definition of our initialization we have $P_h^1(\mathfrak{p})=P^1(\mathfrak{p})$. Suppose that at time $t-1$, for every path $\mathfrak{p}$ we have $P_h^{t-1}(\mathfrak{p})=P^{t-1}(\mathfrak{p})=\prod_{e\in\mathfrak{p}} \phi^{t-1}(e)$. Then at time $t$, it holds that
		\begin{align*}
			P^t(\mathfrak{p}) = \prod_{e\in \mathfrak{p}}\phi^t(e) &\stepa{=}\prod_{e=(u,v)\in \mathfrak{p}}\left[ \phi^{t-1}(e) \cdot \exp \left(\eta \cdot  w^{t-1}(e)\right)\cdot \frac{\Gamma^{t-1}(v)}{\Gamma^{t-1}(u)} \right] \\
			&\stepb{=}P_h^{t-1}(\mathfrak{p})\exp\left(\eta \sum_{e\in \mathfrak{p}} w^{t-1}(e) \right)\cdot \frac{\Gamma^{t-1}(z_+)}{\Gamma^{t-1}(z_-)}, 
		\end{align*}
		where step (a) is due to equation \eqref{eq:edge_update}, and step (b) follows using  telescoping and the induction hypothesis $P_h^{t-1}(\mathfrak{p}) = \prod_{e\in \mathfrak{p}}\phi^{t-1}(e)$.
		
		Given equation \eqref{eq:MWU_full_info},  by applying equation \eqref{eq:path_kernel} to $u\in \{z_-,z_+\}$ and the induction hypothesis $P_h^{t-1}(\mathfrak{p}) = \prod_{e\in \mathfrak{p}}\phi^{t-1}(e)$, the induction is complete.
		
		Thus Algorithm \ref{alg:weight_pushing} is a correct implementation of  Hedge. 
		
		\paragraph{Step II: Regret upper bound.} 
		
		
		Let 
		\begin{align}  \label{eq:def_epsilon}
			\varepsilon = v_{i,1}\sqrt{K/T}\,.
		\end{align}
		Applying Lemma~\ref{lemma:quantization} yields  
		\begin{align} \label{eq:relating_regret_s_i_s_epsilon_zero}
			\text{\rm Reg}_i(\pi_i, H_{-i}^T) \le \text{\rm Reg}_{i}(\pi_i, H_{-i}^T, \varepsilon) 
			+ TK \varepsilon = \text{\rm Reg}_{i}(\pi_i, H_{-i}^T, \varepsilon) +  v_{i,1}\sqrt{T\m^{3}},
		\end{align}
		where $\text{\rm Reg}_{i}(\pi_i, H_{-i}^T, \varepsilon)$ is the counterpart of \eqref{eq:regret} where   $\mathcal{S}_i$  is replaced by the set  \[ \mathcal{S}_\varepsilon = \{\varepsilon, 2\varepsilon, \cdots, \lceil v_{i,1}/\varepsilon\rceil \varepsilon\}\,.\]


		

		We also claim that $\sum_{e\in \mathfrak{p}} w^t(e)\le Kv_{i,1}$ for each path $\mathfrak{p}$ from source to sink in $G^t$.
		To see this, let $\Beta$ be the bid vector corresponding to path $\mathfrak{p}$. As shown in  Lemma~\ref{lem:rewrite_u_i} and using the fact that edges outgoing from $z_{-}$ have weight zero, we have $u_i(\Beta, b_{-i}^t) = \sum_{e \in \mathfrak{p}} w^t(e)$. Since player $i$'s utility satisfies $u_i(\Beta, \vec{b}_{-i}^t) \leq V_i(x_i(\Beta, \vec{b}_{-i}^t)) \leq K v_{i,1}$, we obtain  $\sum_{e \in \mathfrak{p}} w^t(e) \leq  K v_{i,1}$.

		By Step I, Algorithm \ref{alg:weight_pushing} is a correct implementation of  Hedge.  To bound the regret of the algorithm, we will invoke    Corollary~\ref{lem:slight_variant}---which is a slight variant of \cite[Theorem 2.2]{cesa2006prediction}---with the following  parameters: 
		\begin{itemize}
			\item $N$ experts, where each expert is a path from source to sink in $G^t$;
			\item learning rate $\eta$, time horizon $T$, and  maximum reward $ L = K v_{i,1}$;
			\item  initial distribution $\sigma$ on the experts, where 
			$\sigma_{\mathfrak{p}} = P^1(\mathfrak{p})$ for each expert (path from source to sink) $\mathfrak{p}$.
		\end{itemize}
		
		By Corollary~\ref{lem:slight_variant}, we obtain 
		\begin{align}  \label{eq:upper_bound_regret_epsilon_cb}
			\text{Reg}_i(\pi_i, H_{-i}^T, \varepsilon) \leq \frac{1}{\eta}\max_{\mathfrak{p} \in [N]}\log \left( \frac{1}{\sigma_{\mathfrak{p}} } \right) + \frac{TL^2 \eta}{8} \,.
		\end{align} 
		Recall Algorithm~\ref{alg:weight_pushing} initially selects a path by starting at the source $z_{-}$ and then performing an unbiased  random walk in the layered DAG until reaching the sink $z_{+}$. Since the number of vertices in each layer is  ${\lceil v_{i,1}/\varepsilon \rceil}$,  the  initial probability of selecting a particular expert (i.e. path from source to sink) $\mathfrak{p}$    is  
		\begin{align}  \label{eq:lb_probability_path_MWU}
			\sigma_{\mathfrak{p}} = P^1(\mathfrak{p}) \ge \frac{1}{{\lceil v_{i,1}/\varepsilon \rceil}^{K}}, \quad \forall \mathfrak{p} \in [N]\,.
		\end{align}
		
		Substituting \eqref{eq:lb_probability_path_MWU} in  \eqref{eq:upper_bound_regret_epsilon_cb} yields 
		\begin{align} 
			\text{Reg}_i(\pi_i, H_{-i}^T, \varepsilon) & \leq  \frac{1}{\eta}\max_{\mathfrak{p} \in [N]}\log \left( \frac{1}{\sigma_{\mathfrak{p}} } \right) + \frac{TL^2 \eta}{8} \leq \frac{\m \log\left({\lceil v_{i,1}/\varepsilon \rceil}\right)}{\eta} + \frac{TL^2 \eta}{8}  \notag \\
			& =\frac{\m \log\left({\left\lceil  \sqrt{\frac{T}{\m}} \right\rceil}\right)}{\eta} + \frac{T \left(\m v_{i,1} \right)^2 \eta}{8}  \explain{Since $\epsilon = v_{i,1} \sqrt{\frac{\m}{T}}$ and $L = \m v_{i,1}$.} \\
			& \leq \frac{\m \log{T}}{\eta} + \frac{T \left(\m v_{i,1} \right)^2 \eta}{8}\,. \label{eq:obtained_regret_epsilon_0_bound}
		\end{align}
		
		For $\eta = {\sqrt{\log{T}}}/(v_{i,1}\sqrt{KT })$, inequality \eqref{eq:obtained_regret_epsilon_0_bound} gives 
		\begin{align} 
			\text{Reg}_i(\pi_i, H_{-i}^T, \varepsilon)  & \leq \frac{\m \cdot \log T}{{\sqrt{\log{T}}}/(v_{i,1}\sqrt{KT})} + \frac{T {\sqrt{\log{T}}}}{8(v_{i,1}\sqrt{KT})} \cdot \left(K v_{i,1}\right)^2 \notag \\
			& = \frac{9 \, v_{i,1}}{8} \,   \sqrt{T \m^3 \log{T}}     \,. \label{eq:refined_upper_bound_regret_bids_in_S_epsilon}
		\end{align}
		
		Combining inequalities \eqref{eq:relating_regret_s_i_s_epsilon_zero} and  \eqref{eq:refined_upper_bound_regret_bids_in_S_epsilon}, we get that the regret of player $i$ when running Algorithm~\ref{alg:weight_pushing} is 
		\begin{align}
			\text{Reg}_i(\pi_i, H_{-i}^T) & \leq \text{Reg}_i(\pi_i, H_{-i}^T, \varepsilon) + v_{i,1} \sqrt{T \m^3} \notag \\
			& \leq  \frac{9 \, v_{i,1}}{8}  \, \sqrt{T \m^3 \log{T}} + v_{i,1} \sqrt{T \m^3} \in O\left( v_{i,1}   \sqrt{T \m^3\log{T}} \right)\,.
		\end{align}
		This completes the proof of the regret upper bound.
		
		
		\paragraph{Polynomial time implementation.}		Finally we analyze the running time of the above algorithm. At every step, the computation of path kernels traverses all edges (note that each edge only appears in the sum once) and could be done in $O(|E|)=O(Kv_{i,1}^2/\varepsilon^2)$ time. Similarly, the update of edge probabilities $\phi^t$ for all edges also takes $O(|E|)=O(Kv_{i,1}^2/\varepsilon^2)$ time. 
		
		Therefore, the overall computational complexity is $O(TKv_{i,1}^2/\varepsilon^2) = O(T^2)$, where we used the choice of $\varepsilon = v_{i,1} \sqrt{\m/T}$. This completes the proof of the theorem.
	\end{proof}

	
	\subsection{Bandit Feedback}
	
	In this section we include the main theorem and proof for the bandit setting. Recall that our algorithm for the bandit setting is the same as Algorithm \ref{alg:weight_pushing}, only with $w^t(e)$ replaced by $\widehat{w}^t(e)$: 
	\begin{align*}
		\widehat{w}^t(e) = \overline{w}(e) - \frac{\overline{w}(e)-w^t(e)}{p^t(e)}\mathbbm{1}_{\{ e\in \mathfrak{p}^t \}}, \qquad \text{ with } p^t(e) = \sum_{\mathfrak{p}: e\in \mathfrak{p}} P^t(\mathfrak{p}),  
	\end{align*}
	with $P^t$ given in \eqref{eq:product_structure}, and
	\begin{align*}
		\overline{w}(e) = \begin{cases}
			v_{i,1} - r + j(r-s) &\text{if } e = (z_{r,j}, z_{s,j+1}), \\
			v_{i,1} - r + Kr & \text{if } e = (z_{r,K},z_+), \\
			0 &\text{if } e= (z_-, z_{r,1}). 
		\end{cases}
	\end{align*}
	The resolution parameter and learning rate are chosen to be
	\begin{align}\label{eq:parameters}
		\varepsilon = v_{i,1}\min\{(K^3\log T/T)^{1/4},1\}, \quad \eta = \min\left\{ \varepsilon\sqrt{\log(v_{i,1}/\varepsilon)/(TK^3v_{i,1}^4)}, 1/(Kv_{i,1})\right\}. 
	\end{align}
	
	\noindent \textbf{Theorem \ref{thm:intro_bandit_ub} (restated).}
	\emph{For each player $i$ and time horizon $T$, under the bandit feedback, there is an algorithm for bidding that runs in time $O(TK+K^{-5/4}T^{7/4})$ and guarantees the player's regret  is at most 
		$O(\min\{v_{i,1}(T^3K^7\log T)^{1/4}, v_{i,1}KT\})$. }
	\begin{proof}
		By Lemma \ref{lemma:quantization} and the choice of $\varepsilon$ in \eqref{eq:parameters}, it suffices to show that the above algorithm $\pi_i$ runs in time $O(TKv_{i,1}^3/\varepsilon^3)$ and  achieves $$\text{Reg}_i(\pi_i, H_{-i}^T, \varepsilon)=O\left(v_{i,1}^2\sqrt{TK^5\log(v_{i,1}/\varepsilon)}/\varepsilon + v_{i,1}K^2\log(v_{i,1}/\varepsilon)\right)\,.$$
		The claimed result then follows from the fact that the regret is always upper bounded by $O(v_{i,1}KT)$. 
		
		Before we proceed to the proof, we first comment on the choice of estimator $\widehat{w}^t(e)$. First of all, this is an unbiased estimator of $w^t(e)$, i.e. $\Ex_{\mathfrak{p}^t\sim P^t}[\widehat{w}^t(e)] = w^t(e)$ for every edge $e$ in $G^t$. Second, instead of using the natural importance-weighted estimator $\widehat{w}^t(e) = w^t(e)\mathbbm{1}_{ \{e\in \mathfrak{p}^t\} }/p^t(e)$, the current form in \eqref{eq:EXP3_estimator} is the loss-based importance-weighted estimator used for technical reasons, similar to \cite[Eqn. (11.6)]{lattimore2020bandit}. Third, by exploiting our DAG structure and the definition of $w^t(e)$ in \eqref{eq:edge_weight_Gt}, we construct an edge-specific quantity $\overline{w}(e)$ which always upper bounds $w^t(e)$.
		
		We now analyze the regret of the algorithm with $\widehat{w}^t(e)$ given by \eqref{eq:EXP3_estimator}. The standard EXP3 analysis (see, e.g. \cite[Chapter 11]{lattimore2020bandit}) gives that
		\begin{align}\label{eq:regret_intermediate}
			\text{Reg}_i(\pi_i, H_{-i}^T, \varepsilon) &\le \frac{1}{\eta}\max_{\mathfrak{p}}\log\frac{1}{P^1(\mathfrak{p})} + \sum_{t=1}^T \Ex\left[\frac{1}{\eta}\log\left( \sum_{\mathfrak{p}} P^t(\mathfrak{p}) e^{\eta \widehat{w}^t(\mathfrak{p})} \right) - \sum_{\mathfrak{p}} P^t(\mathfrak{p}) \widehat{w}^t(\mathfrak{p})\right], 
		\end{align}
		where $\widehat{w}^t(\mathfrak{p}) = \sum_{e\in \mathfrak{p}} \widehat{w}^t(e)$ is the estimated total weight of path $\mathfrak{p}$, and the expectation is with respect to the randomness in the estimator $\widehat{w}^t(e)$. For every path $\mathfrak{p}=(z_-, z_{r_1,1}, \cdots, z_{r_K,K}, z_+)$ from the source to the sink, we have (by convention $r_{K+1}=0$):
		\begin{align*}
			\widehat{w}^t(\mathfrak{p}) = \sum_{e\in \mathfrak{p}}  w^t(e) \le \sum_{e\in \mathfrak{p}} \overline{w}(e) = \sum_{j=1}^K (v_{i,1} - (j-1)r_j + jr_{j+1}) = Kv_{i,1}. 
		\end{align*}
		Since $e^x\le 1+x+x^2$ whenever $x\le 1$ and $\log(1+y)\le y$ whenever $y>-1$, if $\eta\le 1/(Kv_{i,1})$, inequality \eqref{eq:regret_intermediate} gives
		\begin{align*}
			\text{Reg}_i(\pi_i, H_{-i}^T, \varepsilon) &\le \frac{1}{\eta}\max_{\mathfrak{p}}\log\frac{1}{P^1(\mathfrak{p})} + \eta \sum_{t=1}^T \sum_{\mathfrak{p}} P^t(\mathfrak{p}) \Ex[\widehat{w}^t(\mathfrak{p})^2] \\
			&\le \frac{K}{\eta}\log\left\lceil\frac{v_{i,1}}{\varepsilon}\right\rceil + \eta \sum_{t=1}^T \sum_{\mathfrak{p}} P^t(\mathfrak{p})\cdot (K+1)\sum_{e\in \mathfrak{p}}\Ex[\widehat{w}^t(e)^2], 
		\end{align*}
		where the last equality uses $(\sum_{i=1}^n x_i)^2\le n\sum_{i=1}^n x_i^2$, and that each path $\mathfrak{p}$ from the source to the sink has length $K+1$. To proceed, note that
		\begin{align*}
			\Ex[\widehat{w}^t(e)^2] &= \overline{w}(e)^2\cdot (1-p^t(e)) + \left(\overline{w}(e) - \frac{\overline{w}(e)-w^t(e)}{p^t(e)}\right)^2\cdot p^t(e) \\
			&= w^t(e)^2 + (\overline{w}(e)-w^t(e))^2\cdot \left(\frac{1}{p^t(e)}-1\right),
		\end{align*}
		and therefore
		\begin{align*}
			\sum_{\mathfrak{p}} P^t(\mathfrak{p})\sum_{e\in \mathfrak{p}} \Ex[\widehat{w}^t(e)^2] &= \sum_{\mathfrak{p}} P^t(\mathfrak{p}) \sum_{e\in \mathfrak{p}} \left[ w^t(e)^2 + (\overline{w}(e)-w^t(e))^2\cdot \left(\frac{1}{p^t(e)}-1\right) \right] \\
			&=  \sum_{e} \left[ w^t(e)^2 + (\overline{w}(e)-w^t(e))^2\cdot \left(\frac{1}{p^t(e)}-1\right) \right] \sum_{\mathfrak{p}: e\in \mathfrak{p}} P^t(\mathfrak{p}) \\
			&= \sum_{e} \left[ w^t(e)^2p^t(e) + (\overline{w}(e)-w^t(e))^2 (1-p^t(e)) \right] \le \sum_e \overline{w}(e)^2, 
		\end{align*}
		where in the middle we have used the definition $\sum_{\mathfrak{p}:e\in \mathfrak{p}} P^t(\mathfrak{p}) = p^t(e)$ in \eqref{eq:EXP3_estimator}. To further upper bound the above quantity, note that
		\begin{align*}
			\sum_e \overline{w}(e)^2 &\le \sum_{j=1}^{K}\sum_{1\le s\le r\le \lceil v_{i,1}/\varepsilon\rceil} (v_{i,1}+(j-1)r\varepsilon)^2 \\
			&\le \left\lceil\frac{v_{i,1}}{\varepsilon} \right\rceil \sum_{j=1}^{K}\sum_{r=1}^{\lceil v_{i,1}/\varepsilon\rceil} (V+(j-1)r\varepsilon)^2 \\
			&\le 2\left\lceil\frac{v_{i,1}}{\varepsilon} \right\rceil \sum_{j=1}^{K}\sum_{r=1}^{\lceil v_{i,1}/\varepsilon\rceil} (v_{i,1}^2+(j-1)^2r^2\varepsilon^2) = O\left(\frac{K^3v_{i,1}^4}{\varepsilon^2}\right). 
		\end{align*}
		
		A combination of the above inequalities shows that as long as $\eta\le 1/(Kv_{i,1})$, 
		\begin{align*}
			\text{Reg}_i(\pi_i, H_{-i}^T, \varepsilon) = O\left(\frac{K}{\eta}\log\frac{v_{i,1}}{\varepsilon} + \frac{\eta TK^4v_{i,1}^4}{\varepsilon^2}\right). 
		\end{align*}
		Next, by the choice of $\eta$ in \eqref{eq:parameters}, we have
		\begin{align*}
			\text{Reg}_i(\pi_i, H_{-i}^T, \varepsilon) = O\left(\frac{v_{i,1}^2\sqrt{TK^5\log(v_{i,1}/\varepsilon)}}{\varepsilon} + K^2v_{i,1}\log\frac{v_{i,1}}{\varepsilon} \right). 
		\end{align*}
		
		As for the computational complexity, the only difference from the Algorithm \ref{alg:weight_pushing} is the additional computation of the estimated weights $\widehat{w}^t(e)$ in \eqref{eq:EXP3_estimator}, or equivalently, the marginal probability $p^t(e)$. As $P^t$ has a product structure in \eqref{eq:product_structure}, the celebrated message passing algorithm in graphical models \cite[Section 2.5.1]{wainwright2008graphical} takes $O(|E|v_{i,1}/\varepsilon) = O(Kv_{i,1}^3/\varepsilon^3)$ time to compute all edge marginals $\{p^t(e)\}_{e\in E}$. Therefore the overall computational complexity is $O(TKv_{i,1}^3/\varepsilon^3)$. 
	\end{proof}
	
	\subsection{Regret Lower Bound}
	
	In this section we include the theorem and proof for the regret lower bound. The construction uses the  $(\m+1)$-st highest price, but is similar for the $\m$-th highest price.
	
	\bigskip 
	
	\noindent \textbf{Theorem~\ref{thm:intro_lb_regret} (restated, formal).} \emph{Let $K\ge 2$. For any policy $\pi_i$ used by player $i$, there exists a bid sequence $\{\vec{b}_{-i}^t\}_{t=1}^T$ for the other players such that the expected regret in \eqref{eq:regret} satisfies
		\[ \Ex[\text{\rm Reg}_i(\pi_i, \{\vec{b}_{-i}^t\}_{t=1}^T)] \ge cv_{i,1}K\sqrt{T},\]
		where $c>0$ is an absolute constant. }
	
	\begin{proof}
		Without loss of generality assume that $K=2k$ is an even integer, and by scaling we may assume that $v_{i,1}=1$. Consider the following two scenarios: 
		\begin{itemize}
			\item the utility of the bidder $i$ is $v_{i,j} \equiv 1$, for all $j\in [K]$; 
			\item at scenario $1$, for every $t\in [T]$, the other bidders' bids are
			\begin{align*}
				\vec{b}_{-i}^t = \begin{cases}
					\left(\frac{2}{3}, \frac{2}{3}, \cdots, \frac{2}{3}, 0, \cdots, 0\right) &\text{ with probability } 0.5 + \delta, \\
					\left(\frac{2}{3}, \frac{2}{3}, \cdots, \frac{2}{3}, \frac{2}{3}, \cdots, \frac{2}{3}\right) &\text{ with probability } 0.5 - \delta, 
				\end{cases}
			\end{align*}
			where the number of non-zero entries is $k$ in the first line and $2k$ in the second line, and $\delta\in (0,1/4)$ is a parameter to be determined later. The randomness used at different times is independent. 
			\item at scenario $2$, for every $t\in [T]$, the other bidders' bids are
			\begin{align*}
				\vec{b}_{-i}^t = \begin{cases}
					\left(\frac{2}{3}, \frac{2}{3}, \cdots, \frac{2}{3}, 0, \cdots, 0\right) &\text{ with probability } 0.5 - \delta, \\
					\left(\frac{2}{3}, \frac{2}{3}, \cdots, \frac{2}{3}, \frac{2}{3}, \cdots, \frac{2}{3}\right) &\text{ with probability } 0.5 + \delta, 
				\end{cases}
			\end{align*}
			where the number of non-zero entries is $k$ in the first line and $2k$ in the second line, and $\delta\in (0,1/4)$ is a parameter to be determined later. The randomness used at different times is independent. 
		\end{itemize}
		
		We denote by $P$ and $Q$ the distributions of $\{\vec{b}_{-i}^t\}_{t=1}^T$ under scenarios 1 and 2, respectively, then
		\begin{align*}
			D_{\text{KL}}(P\|Q) &= T\cdot D_{\text{KL}}(\text{Bern}(0.5+\delta)\| \text{Bern}(0.5-\delta)) \\
			&\le T\cdot \chi^2(\text{Bern}(0.5+\delta)\| \text{Bern}(0.5-\delta)) \\
			&= T\cdot \frac{(2\delta)^2}{(0.5+\delta)(0.5-\delta)} \le \frac{64}{3}T\delta^2. 
		\end{align*}
		Consequently, by \cite[Lemma 2.6]{Tsybakov2009}, 
		\begin{align*}
			1 - \text{TV}(P,Q) \ge \frac{1}{2}\exp\left(-D_{\text{KL}}(P\|Q)  \right) \ge \frac{1}{2}\exp\left( -\frac{64T\delta^2}{3}\right). 
		\end{align*}
		
		Next we investigate the separation between these two scenarios. It is clear that
		\begin{align*}
			\max_{\vec{b}_i}\Ex_P\left[ u_i(\vec{b}_i; \vec{b}_{-i}^t)\right] &\ge \Ex_P\left[ u_i((1,1,\cdots,1,0,\cdots,0); \vec{b}_{-i}^t)\right] \\
			&= \left(\frac{1}{2}+\delta \right)\cdot k + \left(\frac{1}{2}-\delta\right)\cdot \frac{k}{3} = \frac{2+2\delta}{3}\cdot k; \\
			\max_{\vec{b}_i}\Ex_Q\left[ u_i(\vec{b}_i; \vec{b}_{-i}^t)\right] &\ge \Ex_Q\left[ u_i((1,1,\cdots,1,1,\cdots,1); \vec{b}_{-i}^t)\right] \\
			&= \left(\frac{1}{2}-\delta\right)\cdot \frac{2k}{3} + \left(\frac{1}{2}+\delta\right)\cdot \frac{2k}{3} = \frac{2k}{3}. 
		\end{align*}
		Moreover, under $(P+Q)/2$ (i.e. $\vec{b}_{-i}^t$ follows a $\text{Bern}(1/2)$ distribution), suppose that the vector $\vec{b}_i$ has $k'$ components smaller than $2/3$. Distinguish into two scenarios: 
		\begin{itemize}
			\item if $k'< k$, then 
			\begin{align*}
				\Ex_{(P+Q)/2}\left[ u_i(\vec{b}_i; \vec{b}_{-i}^t) \right] \le \frac{1}{2}\cdot \frac{2k-k'}{3} + \frac{1}{2}\cdot \frac{2k-k'}{3} \le \frac{2k}{3}; 
			\end{align*}
			\item if $k' \ge k$, then
			\begin{align*}
				\Ex_{(P+Q)/2}\left[ u_i(\vec{b}_i; \vec{b}_{-i}^t) \right] \le \frac{1}{2}\cdot (2k-k') + \frac{1}{2}\cdot \frac{2k-k'}{3} \le \frac{2k}{3}. 
			\end{align*}
		\end{itemize}
		Therefore, it always holds that
		\begin{align*}
			\max_{\vec{b}_i} \Ex_{(P+Q)/2}\left[u_i(\vec{b}_i; \vec{b}_{-i}^t)\right] \le \frac{2k}{3}. 
		\end{align*}
		Consequently, for each $\vec{b}_i$, 
		\begin{align*}
			&\max_{\vec{b}_i^\star}\Ex_P\left[ u_i(\vec{b}_i^\star; \vec{b}_{-i}^t) -   u_i(\vec{b}_i; \vec{b}_{-i}^t)\right] + \max_{\vec{b}_i^\star}\Ex_Q\left[ u_i(\vec{b}_i^\star; \vec{b}_{-i}^t) -   u_i(\vec{b}_i; \vec{b}_{-i}^t)\right] \\
			&\ge \max_{\vec{b}_i^\star}\Ex_P\left[ u_i(\vec{b}_i^\star; \vec{b}_{-i}^t)\right] + \max_{\vec{b}_i^\star}\Ex_Q\left[ u_i(\vec{b}_i^\star; \vec{b}_{-i}^t)\right] - 2\max_{\vec{b}_i}\Ex_{(P+Q)/2}\left[ u_i(\vec{b}_i; \vec{b}_{-i}^t)\right] \\
			&\ge \frac{2+2\delta}{3}k + \frac{2k}{3} - 2\cdot \frac{2k}{3} = \frac{2\delta k}{3}. 
		\end{align*}
		In other words, any bid vector $\vec{b}_i$ either incurs a total regret $(\delta kT)/3$ under $P$, or incurs a total regret $(\delta kT)/3$ under $Q$. 
		
		Now the classical two-point method (see, e.g. \cite[Theorem 2.2]{Tsybakov2009}) gives 
		\begin{align*}
			\Ex_{(P+Q)/2}[\text{\rm Reg}_i(\pi_i,\{\vec{b}_{-i}^t\}_{t=1}^T)] \ge \frac{\delta kT}{3}\cdot (1-\text{TV}(P,Q)) \ge \frac{\delta kT}{6}\exp\left(-\frac{64 T\delta^2}{3}\right)
		\end{align*}
		for every $\delta\in (0,1/4)$. Choosing $\delta = 1/(8\sqrt{T})$ gives that 
		\begin{align*}
			\Ex_{(P+Q)/2}[\text{\rm Reg}_i(\pi_i,\{\vec{b}_{-i}^t\}_{t=1}^T)] \ge \frac{K\sqrt{T}}{96e^{1/3}}, 
		\end{align*}
		i.e. the claimed lower bound holds with $c=1/(96e^{1/3})$. 
	\end{proof}
	
	
	\newpage 
	
	\section{Appendix: Equilibrium Analysis} \label{app:equilibrium_analysis}
	
	In this section we show that the pure Nash equilibria with price zero of the $(\m+1)$-st auction are the only ones that are robust to deviations by groups of players, captured through the notion of the core in the game among the bidders.  The core of a game was formulated by Edgeworth~\cite{Edgeworth_81} and  brought into game theory by Gillies~\cite{Gillies_59}. There is an extensive body of literature on the core of various games, including for auctions; see, e.g., analysis of collusion (cartel behavior) in first price auctions in~\cite{FPACollusion2000}.

	Our focus here is the core of the game among the bidders, where the auctioneer first sets the auction format and then the bidders can strategize and collude among themselves, without the auctioneer. Roughly speaking, a strategy profile is core-stable if   no group $C$ of players  can  deviate  simultaneously (i.e. each player $i \in C$  deviates to some alternative strategy profile), such that each player in $C$ weakly improves and the improvement is strict for at least one player. For a  deviation to take place, the players in $C$ coordinate and switch  their strategies simultaneously, while the players outside $C$ keep their previous strategies. 
	Every core-stable strategy profile is also a Nash equilibrium, since a strategy profile that is stable against deviations by groups of players is also stable against deviations by individuals.

	We consider two variants of the core, with  and without monetary transfers~\cite{SS_69,Bondareva_63,Shapley_67}. In the case with transfers,  the players can make monetary payments   to each other, and so a strategy profile consists of a tuple of bids and transfers. In the case without transfers, the players cannot make such transfers and their strategy is the bid vector; thus the only agreement they can make  in this case is to coordinate their bids.

	\subsection{Core with transfers} A strategy profile in this setting is described by a tuple $(\vec{b}, \vec{t})$, where $\vec{b}$ is a bid profile and $\vec{t}$ is a profile of payments (aka monetary transfers), such that  $t_{i,j} \geq 0$ is the monetary payment of player $i$ to player $j$. At this strategy profile, the auctioneer runs the   auction with bids  $\vec{b}$ and returns the outcome (price and allocation), while the players make the monetary transfers $\vec{t}$ to each other. 
	\medskip 
	
	For each profile of monetary transfers $\vec{t}$, let $m_i(\vec{t})$ be the net amount of money that player $i$ gets after all the transfers are made: $$m_i(\vec{t}) = \sum_{j=1}^n t_{j,i} - \sum_{j=1}^n t_{i,j}\,.$$ 
	
	The utility of player $i$ at profile $(\vec{b}, \vec{t})$ is 
	$$u_i(\vec{b}, \vec{t}) = m_i(\vec{t}) + \left( \sum_{j=1}^{x_i(\vec{b})} v_{i,j} \right) - p \cdot x_i(\vec{b})\,.$$

	\paragraph{Deviations.}Since in the case of auctions the actions (e.g. bids) of a group of players 
	can affect the utility of the players outside of the group, it is necessary to model how the players outside $S$ react to the deviation. Such reactions have been studied in the  literature on the core with externalities (see, e.g., \cite{Koczy_recursive,Koczy_sequential}).
	
	We consider neutral reactions, where non-deviators (i.e. players outside $S$) have a mild reaction to the deviation: they maintain the same bids as before the deviation and the monetary transfer of each player $i \in [n] \setminus S$  to each player $j \in S$ is non-negative.
	The core where the deviators assume the non-deviators will have neutral reactions to the deviation is known as the neutral core~\cite{BMRLJ13}. Several other variants of the core exist, such as pessimistic core, where each deviator assumes that they will be punished in the worst possible way by the non-deviators.  Such variants are also interesting to study, but next we focus on the basic case of neutral reactions.

	A group of players will alternatively be called a \emph{coalition}. The set of all players, $[n]$, will also be called sometimes the grand coalition.
	
	A group of players that agree on a deviation are known as a blocking coalition, formally defined next.
	
	\begin{definition}[Blocking coalition, with transfers]
		Let $(\vec{b}, \vec{t})$ be a tuple of bids and monetary transfers. 
		A group  $S \subseteq [n]$ of players is a \emph{blocking coalition}  if there exists a profile $(\vec{\widetilde{b}}, \vec{\widetilde{t}})$,  at which each player  $i \in S$ weakly improves their utility,  the improvement is strict for at least one player in $S$, and  
		\begin{itemize}
			\item ${\widetilde{b}}_{i,j} = 
			b_{i,j}$ if $ i \in [n] \setminus S,  j \in [\m]$. 
			\item $ {\widetilde{t}}_{i,j} = 0$  if $i \in [n] \setminus S$ and $j \in S$.
		\end{itemize}
	\end{definition}
	
	In other words, the blocking coalition $S$ needs to agree on their bids and transfers to each other, such that they improve their utility when the players outside $S$ maintain their existing bids but stop payments to players in $S$. In fact our characterization holds even if the players outside $S$ make any non-negative transfers to the players in $S$; the case where the transfers are zero is the extreme case. If a coalition $S$ deviates with zero transfers from players outside $S$, it also deviates for any transfers that are non-negative.
	
	\begin{definition}[The core with transfers]
		The \emph{core with transfers} consists of profiles $(\vec{b}, \vec{t})$ at which there are no blocking coalitions. Such profiles are  \emph{core-stable}.
	\end{definition}
	
	We have the next characterization of the core with transfers.
	
	\bigskip 
	
	\noindent \textbf{Theorem~\ref{thm:core_TU_intro} (Core with transfers; restated).} \emph{Consider $\m$ units and $n > \m$ hungry players.
		The core with transfers of the $(\m+1)$-st auction can be characterized as follows:  
		\begin{itemize}
			\item Let $(\vec{b}, \vec{t})$ be an arbitrary tuple of bids and transfers that is core stable. Then the   allocation $\vec{x}(\vec{b})$ maximizes  social welfare, the  price is zero (i.e. $p(\vec{b}) = 0$), and there are no transfers between the players (i.e. $\vec{t} = 0$).
	\end{itemize}}
	\begin{proof}
		The proof has three steps as follows. 
		\paragraph{Step I: core transfers are zero.}  
		
		Assume towards a contradiction that $(\vec{b}, \vec{t})$ is core stable and $\vec{t} \neq \vec{0}$. Consider the directed weighted graph $G = ([n],E,\vec{t})$, where $E$ consists of all the directed edges $(i,j)$ and the weight of each edge is $t_{i,j}$. The net amount of money that  each player $i$ gets from   transfers  is  $m_i(\vec{t}) =  \sum_{j=1}^n t_{j,i} - \sum_{j=1}^n t_{i,j}$.
		
		If there is a cycle $C = (i_1, \ldots, i_k)$ such that the payments along the cycle are strictly positive: $t_{i_1,i_2} > 0, \ldots, t_{i_{k-1},i_k} > 0$, and $t_{i_k,i_1}>0$, then some cancellations take place. That is, by subtracting $\min\{t_{i_1,i_2}, \ldots,   t_{i_{k-1},i_k}, t_{i_k,i_1}\}$ from the weight of each edge $(i,j) \in C$, we obtain a weighted directed graph without cycle $C$ and where each player has the same net amount of money as in the original graph.
		Iterating the operation of removing cycles, we obtain a directed acyclic graph where the players have the same net amount of money as in the original graph. 
		
		Thus we can in fact assume  the transfers $\vec{t}$ are such that $G$ is acyclic.
		
		Consider a topological ordering $(i_1, \ldots, i_n)$ of the vertices of $G$. Let $j$ be the minimum index for which vertex $i_j$ has no incoming edges and at least one outgoing edge. Then $m_{i_j}(\vec{t}) < 0$. 
		The utility of player $i_j$ can be upper bounded as follows:
		\[
		u_{i_j}(\vec{b}, \vec{t}) = m_{i_j}(\vec{t}) + \left( \sum_{k=1}^{x_{i_j}(\vec{b})} v_{i_j,k} \right) - p(\vec{b}) \cdot x_{i_j}(\vec{b}) <  \left( \sum_{k=1}^{x_{i_j}(\vec{b})} v_{i_j,k} \right) - p(\vec{b})  \cdot x_{i_j}(\vec{b})\,.
		\]
		We claim that player $i_j$ has an improving deviation by keeping its bid vector  $\vec{b}_{i_j}$ and stopping all payments to other players. Since the other players have neutral reactions to the deviations, we obtain an outcome $(\vec{\widetilde{b}}, \vec{\widetilde{t}})$ such that  $\vec{\widetilde{b}} = \vec{b}$, $\widetilde{t}_{i_j,k} = 0$ for all $k \in [n]$, and $\widetilde{t}_{k,i_j} \leq t_{k,i_j}$ for all $k \neq i_j$. The gain of player $i_j$ from the deviation can be bounded by:
		\begin{align} 
			& u_{i_j}(\vec{\widetilde{b}}, \vec{\widetilde{t}}) - u_{i_j}(\vec{b}, \vec{t})  = u_{i_j}(\vec{{b}}, \vec{\widetilde{t}}) - u_{i_j}(\vec{b}, \vec{t}) \explain{Since $\vec{\widetilde{b}} = \vec{b}$.} \\
			& \qquad  \qquad \;  = \left[ m_{i_j}(\vec{\widetilde{t}}) + \left( \sum_{k=1}^{x_{i_j}(\vec{b})} v_{i_j,k} \right) - p(\vec{b}) \cdot x_{i_j}(\vec{b}) \right] - \left[ m_{i_j}(\vec{{t}}) + \left( \sum_{k=1}^{x_{i_j}(\vec{b})} v_{i_j,k} \right) - p(\vec{b}) \cdot x_{i_j}(\vec{b}) \right] \notag \\
			& \qquad  \qquad \;  = m_{i_j}(\vec{\widetilde{t}})  - m_{i_j}(\vec{{t}}) \notag \\
			& \qquad  \qquad \;  = - m_{i_j}(\vec{{t}}) \explain{Since $m_{i_j}(\vec{\widetilde{t}}) = 0$.} \\
			& \qquad  \qquad \;   > 0 \,. \explain{Since $m_{i_j}(\vec{{t}}) < 0$ by choice of $i_j$.}
		\end{align}
		Thus player $i_j$ has a strictly improving deviation, which contradicts the choice of $(\vec{b}, \vec{t})$ as core-stable with $\vec{t} \neq \vec{0}$. Thus the assumption must have been false. It follows that the only core stable profiles  (if any) have $\vec{t} = \vec{0}$.
		\paragraph{Step II: the social welfare is maximized.} Next we show that if $(\vec{b},\vec{t})$ is a core-stable outcome, then the allocation induced by $\vec{b}$ is welfare maximizing.  
		Assume towards a contradiction this is not the case. 
		By Step I, we have $\vec{t} = \vec{0}$. 
		
		Let $w_1 \geq \ldots \geq w_{n \cdot  \m} $ be the bids sorted in decreasing order (breaking ties lexicographically) at the truth-telling bid  profile $\vec{v}$. For each $j \in [n \cdot \m]$, let $\pi_j$ be the player that submitted bid $w_j$ in this ordering. 
		
		Let  $\widetilde{w}_1 \geq \ldots \geq \widetilde{w}_{n \cdot  \m} $ be the bids sorted in decreasing order (breaking ties lexicographically) at the bid profile $\vec{b}$. Let $\widetilde{\pi}_j$ be the player that submitted bid $\widetilde{w}_j$ in this ordering.
		
		Consider an undirected bipartite graph $G = (L, R, E)$, where $L$ is the left part, $R$ the right part, and $E$  the set of edges. Define
		\begin{itemize}
			\item $L = \{ (i, j) \mid i \in [n], j \in [\m], \text{ and }  x_{i}(\vec{v}) \geq j\} $. For example, if $x_{i}(\vec{v}) = 2$, then $L$ has nodes $(i, 1)$ and $(i,2)$. If on the other hand $x_{i}(\vec{v})=0$, then $L$ has no nodes  $(i, j)$, for any $j$. 
			\item $R = \{ (i, j) \mid i \in [n], j \in [\m], \text{ and }  {x}_{i}(\vec{b}) \geq j\} $. 
			\item $E = (s_1, s_2)$, for all $s_1 \in L$ and $s_2 \in R$.
		\end{itemize}
		
		Since both allocations $\vec{x}(\vec{v})$ and $\vec{{x}}(\vec{b})$ allocate exactly $\m$ units, we have $|L| = |R| = \m$.
		Consider now a graph  $G_1 = (L_1, R_1, E_1)$ obtained from $G$ as follows. Set $G = G_1$. Then for each node $(i, j) \in L$: if the node also appears in $R$, then delete both copies of the node, together with any edges containing them.
		
		Thus in $G_1$, the left side $L_1$ consists of nodes $(i, j)$ with $x_{i}(\vec{v}) \geq j$ but ${{x}}_i(\vec{b}) < j$. For each such node $(i,j)$, let $k$ be the rank of the valuation $v_{i,j}$ in the ordering $\pi$. By definition of $G_1$, we have that $R_1$ does not have any node of the form $(i,s)$ for any $s$. To see this, observe that 
		\begin{itemize} 
			\item nodes $(i,s)$ with $s < j$ were deleted when constructing $G_1$ from $G$, and 
			\item nodes $(i,s)$ with $s \geq j$ do not exist even in $G$ (if they did, then  $(i,j)$ would exist in both $L$ and $R$ and so would have been deleted when constructing $G_1$).
		\end{itemize}
		Let $d(i,j)$ be the player that displaces player $i$'s bid for the $j$-th unit at the bid profile $\vec{b}$. Formally, $d(i,j)$ is the owner of bid $\widetilde{w}_k$ when considering the bids $\vec{b}$ in descending order. Let $\omega(i,j) \in \mathbb{N}$ be such that player $d(i,j)$ obtains an $\omega(i,j)$-th unit in their bundle at $\vec{b}$.
		
		Since ${x}_i(\vec{b}) < j$, we have $i \neq d(i,j)$.
		Since at the truth-telling profile player $i$ gets a $j$-th unit  but player $d(i,j)$ does not get an $\omega(i,j)$-th unit, we have  $v_{i,j} \geq v_{d(i,j),\omega(i,j)}$. Moreover, since the allocation $\vec{x}(\vec{v})$ maximizes welfare  but $\vec{{x}}(\vec{b})$ does not,  the inequality is strict for some $(i,j) \in L_1$.
		
		We claim that  $[n]$ is a blocking coalition. To show this, we will argue there is a bid profile $\vec{b}^*$ and vector of transfers $\vec{t}^*$ such that at  $(\vec{b}^*, \vec{t}^*)$ the utility of each player $i \in [n]$ is weakly improved and the improvement is strict for at least one player. For each $i\in [n], k \in [\m]$, let  
		\[
		b_{i,j}^* = \begin{cases} 
			v_{i,j} & \text{if} \; x_i(\vec{v}) \geq j \\
			0 & \text{otherwise.}
		\end{cases}
		\]
		
		Then $\vec{x}(\vec{b}^*) = \vec{x}(\vec{v})$ and $p(\vec{b}^*) = 0$. Also define monetary transfers $\vec{t}^*$  as follows:
		\begin{itemize}
			\item Initialize $\vec{t}^* = 0$. Let $\varepsilon = \min_{(i,j) \in L_1} \left( {v_{i,j} - v_{d(i,j),\omega(i,j)}}\right)/{2}$. Then $v_{i,j} \geq v_{d(i,j),\omega(i,j)} + \varepsilon$  for each $(i,j) \in L_1$.
			\item For each $(i,j) \in L_1$, let $t_{i,d(i,j)}^* := t_{i,d(i,j)}^* + v_{d(i,j),\omega(i,j)} + \varepsilon$. 
		\end{itemize}
		Thus each player $i$ that got a $j$-th unit in their bundle at the  truth-telling profile did so because their bid for the $j$-th unit had rank $k \leq \m$. Since $i$ does not get the $j$-th unit at bid profile $\vec{b}$, there is a player  $d(i,j)$ whose bid for the $j$-th unit had rank $k$ and who received this way a $\omega(i,j)$-th unit in their bundle. 
		
		We argue that all the players weakly improve their utility at $(\vec{b}^*, \vec{t}^*)$, and the improvement is strict for at least one of them.
		\begin{itemize} 
			\item For each  pair $(i,j) \in L_1$, under the bid profile $\vec{b}^*$ player $i$ receives the $j$-th unit at a cost of zero   and transfers an amount of $v_{d(i,j),\omega(i,j)} + \varepsilon$ to player $d(i,j)$. 
			
			The component of the utility that player $i$ gets from  unit $j$, counting the value, price, and transfer related to unit $j$, is  
			$v_{i,j} - 0 -  (v_{d(i,j),\omega(i,j)} + \varepsilon) \geq 0$, where the inequality holds by choice of $\varepsilon$.
			This is a weak improvement compared to the utility that player $i$ gets from unit $j$ at  profile $(\vec{b}, \vec{0})$, which is zero. 
			Moreover, the improvement is strict for at least one pair $(i,j) \in L_1$, since the bid profile  $\vec{b}$ does not induce a welfare maximizing allocation.
			\item For each pair $(i,j) \in L \setminus L_1$, at the profile $(\vec{b}, \vec{0})$ player $i$ gets utility $v_{i,j} - {p}(\vec{b})$ from unit $j$, since it makes no transfers. At profile $\vec{b}^*$, player $i$ gets unit $j$ at a price of zero  and makes no transfers (towards other players) related to unit $j$. Thus the component of the utility related to unit $j$ is $v_{i,j} - 0 - 0 = v_{i,j}$. Since ${p}(\vec{b}) \geq 0$, we have 
			$v_{i,j} -{p}(\vec{b}) \leq v_{i,j}$, a weak improvement for player $i$ with respect to unit $j$.
			\item For each pair $(i,j) \not \in L$: if $(i,j) \not \in R$, then player $i$ does not get a $j$-th unit under either $\vec{b}$ or $\vec{b}^*$, so its utility from unit $j$ is zero at both profiles. If on the other hand $(i,j) \in R$, since $(i,j) \not \in L$, it must be  that $(i,j) \in R_1$. At profile $(\vec{b},\vec{t})$ the utility of player $i$ from unit $j$ is $v_{i,j} - {p}(\vec{b})$ since it  gets the unit and receives no transfers. At profile $(\vec{b}^*, \vec{t}^*)$ player $i$ does not receive the unit but receives a transfer of $v_{i,j} + \varepsilon$ from the player that gets the unit instead. This is again a weak improvement.
		\end{itemize}
		Thus  all  players weakly improve their utility at $(\vec{b}^*, \vec{t}^*)$ and the improvement is strict for at least one player,  so the profile  $(\vec{b}, \vec{0})$ is not stable. This is a contradiction, thus the assumption that $\vec{x}(\vec{b})$ is not welfare maximizing must have been false.
		\paragraph{Step III: the price is zero.} Next we show that if profile $(\vec{b}, \vec{0})$ is core-stable, then $p(\vec{b}) = 0$. By Step II, the bid profile $\vec{b}$ induces a welfare maximizing allocation.
		
		Suppose  towards a contradiction that $p(\vec{b}) > 0$. Then we show there exists a blocking coalition. Let $w_1 \geq \ldots \geq w_{n \cdot \m}$ be the bids sorted in decreasing order (breaking ties lexicographically) at bid profile $\vec{b}$. For each $i \in [n \cdot \m]$, let $\pi_i$ be the owner of bid $w_i$. 
		
		We match the players as follows. Create a bipartite graph with left part $L = (\pi_1, \ldots, \pi_{\m})$ (allowing repetitions) and right part $R = (\pi_{\m+1}, \ldots, \pi_{2 \m})$ (allowing repetitions). For each $i \in [\m]$, create edge $(\pi_i, \pi_{\m + i})$. Consider the profile $(\vec{b}^*, \vec{t}^*)$, where 
		\[
		b_{i,j}^* = \begin{cases} 
			v_{i,j} & \text{if} \; x_i \geq j \\
			0 & \text{otherwise.}
		\end{cases}
		\]
		Define the transfers as follows:
		\begin{itemize}
			\item Initialize  $\vec{t}^* = \vec{0}$. Set $\varepsilon = p/(2\m)$. For each $i \in [\m]$, let player $\pi_i$ pay an additional amount of  $\varepsilon$ to player $\pi_{\m+i}$:  $t^*_{\pi_i, \pi_{\m+i}} = t^*_{\pi_i, \pi_{\m+i}} + \varepsilon$.
		\end{itemize}
		
		Then each player $i$ with $x_i(\vec{b}) > 0$ gets the same allocation at $\vec{b}^*$ as at $\vec{b}$, but pays a price of zero for the units and makes a transfer of at most $p/2$ to other players, resulting in improved utility compared to the utility at profile $(\vec{b}, \vec{0})$. 
		
		On the other hand, each player $i$ with $x_i(\vec{b}) = 0$ on the other hand gets the same allocation at $\vec{b}^*$ as at $\vec{b}$ (i.e. no units), but receives a non-negative amount of money from other players, and the net amount of money received is strictly positive for all the players $\pi_{\m+1}, \ldots, \pi_{2 \m}$. 
		
		Thus there is an improving deviation, which contradicts the choice of $(\vec{b}, \vec{t})$ as core-stable. Thus  the assumption must have been false, and $p(\vec{b})=0$.
	\end{proof}

	\subsection{Core without transfers}
	
	A strategy profile in this setting is described by a bid profile $\vec{b}$, where $\vec{b}_i = (b_{i,1}, \ldots, b_{i,\m})$ is the bid vector of player $i$. In the setting without transfers,  given a profile $\vec{b}$ of bids, a blocking coalition $S$ needs to agree on simultaneously changing their bids, such that when the players outside $S$ still bid according to $\vec{b}$, the players in $S$ weakly improve their utility and the improvement is strict for at least one player in $S$. 
	
	The core stable profiles are those that have no such blocking coalitions. Formally, we have:
	
	\begin{definition}[Blocking coalition, without transfers]
		Let $\vec{b}$ be a bid profile. 
		A group  $S \subseteq [n]$ of players is a \emph{blocking coalition}  if there exists a bid profile $\vec{\widetilde{b}}$,  at which each player  $i \in S$ weakly improves their utility,  the improvement is strict for at least one player in $S$, and    ${\widetilde{b}}_{i,j} = 
		b_{i,j}$ for all $ i \in [n] \setminus S,  j \in [\m]$. 
	\end{definition}

	\begin{definition}[The core with transfers]
		The \emph{core with transfers} consists of bid profiles $\vec{b}$ at which there are no blocking coalitions. Such profiles are  \emph{core-stable}.
	\end{definition}

	The non-transferable utility core can be characterized as follows.  \\
	
	\noindent \textbf{Theorem~\ref{thm:core_NTU_intro} (Core without transfers; restated).} 
	\emph{Consider $\m$ units and $n > \m$  hungry players. The  core without transfers  of the $(\m+1)$-st auction  can be characterized as follows: 
		\begin{itemize} 
			\item every bid profile $\vec{b}$ that is core stable has price zero (i.e. $p(\vec{b}) = 0$);    
			\item each  allocation  $\vec{z}$ where all the units are allocated can be supported in  a core stable  bid profile $\vec{b} $ with  price  zero (i.e.  $\vec{x}(\vec{b}) = \vec{z}$ and $p(\vec{b}) = 0$).
		\end{itemize}
	}
	\begin{proof}
		
		The proof is in two parts. 
		\paragraph{Part I: the price in the core is zero.} Consider a bid profile $\vec{b}$ that is core-stable.  Suppose towards a contradiction that $p(\vec{b}) > 0$.

		Let $M = \sum\limits_{\ell_1=1}^n \sum\limits_{\ell_2 = 1}^{\m} v_{\ell_1, \ell_2}$.
		Define a  bid profile $\widetilde{\vec{b}}$ such that for all $i \in [n], j \in [\m]$:
		\begin{equation}  \label{eq:implement_allocation_zero_price}
			\widetilde{b}_{i,j} = 
			\begin{cases}
				M &  \text{ if } j \leq  x_i(\vec{b}) \\
				\varepsilon & \text{ otherwise.}
			\end{cases}
		\end{equation}
		At $\widetilde{\vec{b}}$, the players only submit bids equal to  $M > 0$ for the units they are supposed to get at allocation $\vec{x}(\vec{b})$, and moreover, there are exactly $\m$ strictly positive bids. Thus  $\vec{x}(\widetilde{\vec{b}}) = \vec{x}(\vec{b})$. Moreover, $p(\vec{b}) = 0$ since the  $(\m+1)$-st highest bid is  $0$.
		
		Then the grand coalition $C = [n]$ is blocking with the profile $\widetilde{\vec{b}}$, in contradiction with $\vec{b}$ being core stable. Thus the assumption must have been false, so $p(\vec{b}) = 0$.
		
		\paragraph{Part II: every allocation can be implemented at a core-stable bid profile.}
		Let $\vec{z}$ be an arbitrary allocation at which all the units are allocated.
		\smallskip 
		
		Define $\vec{b}$ such that for all $i \in [n], j \in [\m]$, we have $b_{i,j} = M$ if $j \leq z_i$ and $b_{i,j} =0$ otherwise.
		Then  $\vec{x}(\vec{b}) = \vec{z}$ and $p(\vec{b}) = 0$.  Let $W = \{i \in [n] \mid z_i > 0\}$ be the set of ``winners'' at $\vec{b}$. 
		
		Assume towards a contradiction  that $\vec{b}$ is not stable. Then there is a blocking coalition $C  = \{i_1, \ldots, i_k\} \subseteq [n]$  with  alternative bid profile  $\vec{d} = (\vec{d}_{i_1}, \ldots,  \vec{d}_{i_k})$. Denote $\widetilde{\vec{b}} = (\vec{d}, \vec{b}_{-C})$  the  profile where each player $i \in C$ bids $\widetilde{\vec{b}}_i = \vec{d}_i$ and each player $i \not \in C$ bids $\widetilde{\vec{b}}_i = \vec{b}_i$. We must have     $u_i(\widetilde{\vec{b}}) \geq u_i(\vec{b})$ for all $i \in C$, with strict inequality for some player $i \in C$.
		
		If $C \cap W = \emptyset$, then the only way for at least one of the players in $C$ to change their allocation is to make some of their bids at least $H$.  But this increases the price from zero to at least $H$, which yields negative utility for everyone. Thus $C \cap W \neq \emptyset$. We consider two cases:
		
		\begin{enumerate}
			\item Case $p(\widetilde{\vec{b}}) > 0$. Each player $i \in C \cap W$ requires strictly more units at $\widetilde{\vec{b}}$ than at $\vec{b}$, to compensate for the higher price at $\widetilde{\vec{b}}$. Thus $\vec{x}_i(\widetilde{\vec{b}}) > z_i$ for all $i \in C \cap W$ $(\dag)$.
			If $x_i(\widetilde{\vec{b}}_i) < z_i$ for some player $i \in W \setminus C$, there would have to exist at least $\m + 1$ bids with value at least $H$, and so $p(\widetilde{\vec{b}} ) \geq H$, which would give negative utility to all the players  including the deviators. Thus $x_i(\widetilde{\vec{b}}_i) \geq  z_i$ for all $i \in W \setminus C$ $(\ddag)$. 
			
			Combining $(\dag)$ and $(\ddag)$ gives a contradiction: $$\sum_{i \in C \cup (W\setminus C)} x_i(\widetilde{\vec{b}}) > \sum_{i \in C \cup (W\setminus C)}  z_i = \m\,.$$ Thus $p(\widetilde{\vec{b}}) > 0$ cannot hold.
			\item Case $p(\vec{b}) = 0$. Then each player $i \in C \cap W$ requires $u_i(\widetilde{\vec{b}}) \geq z_i$.   
			Since $p(\widetilde{\vec{b}}) = 0$, the top $\m$ bids are strictly positive and the remaining bids are zero. Then each player $i \in C \cap W$ submits exactly $z_i$ strictly positive bids. It follows that $\vec{x}_i(\widetilde{\vec{b}}) = \vec{x}_i(\vec{b})$ and $p(\widetilde{\vec{b}}) = p(\vec{b})$ for each $i \in [n]$, which means no player in $C$ strictly improves. Thus $C$ cannot be blocking. 
		\end{enumerate}
		In both cases $1$ and $2$ we obtained a contradiction, so $\vec{b}$ is core-stable, $p(\vec{b}) = 0$, and $\vec{x}(\vec{b}) = \vec{z}$ as required. 
	\end{proof}
	
	\section{Theorems from prior work} \label{app:prior_work_theorems}	
	In this section we include the theorem from \cite{cesa2006prediction} that we use. In the problem of prediction under expert advice, there are $N$ experts in total, and at each round $t\in [T]$: 
	\begin{itemize}
		\item learner chooses a probability distribution $p_t$ over $[N]$; 
		\item nature reveals the losses $\{\ell_{t,i}\}_{i\in [N]}$ of all experts at time $t$, where $\ell_{t,i}\in [0,L]$. 
	\end{itemize}
	For a given sequence of probability distributions $(p_1,\cdots,p_T)$, the learner's regret is defined to be
	\begin{align*}
		\text{Reg}(T, (p_1,\cdots,p_T)) = \sum_{t=1}^T \sum_{i=1}^N p_t(i)\ell_{t,i} - \min_{i^\star \in [N]}\sum_{t=1}^T \ell_{t,i^\star}. 
	\end{align*}
	
	For this problem, the exponentially weighted average forecaster, also known as the Hedge algorithm, is defined as follows. The algorithm initializes $p_1 = \sigma$, an arbitrary prior distribution over the experts $[N]$ such that each expert $i$ is selected with probability  $\sigma_i > 0$. For $t\ge 2$, the forecaster updates 
	\begin{align*}
		p_t(i) = \frac{p_{t-1}(i) \exp(-\eta \ell_{t-1,i})}{\sum_{j=1}^N p_{t-1}(j) \exp(-\eta \ell_{t-1,j})}, \quad \forall i\in [N], 
	\end{align*}
	where $\eta>0$ is a learning rate.

	
	Next we include the statement of Theorem 2.2 of \cite{cesa2006prediction} in our notation.
	
	\begin{theorem}[Theorem 2.2 of \cite{cesa2006prediction}] \label{thm:cesa_bianchi_book}
		Consider the 
		exponentially weighted average forecaster with  $N$ experts, learning rate $\eta > 0$, time horizon $T$, and rewards in $[0,1]$. Suppose the initial distribution $\sigma$ on the experts is uniform, that is, $\sigma = (1/N, \ldots, 1/N)$. The regret of the forecaster is 
		\[ 
		\text{\rm Reg}(T, (p_1,\cdots,p_T)) \leq  \frac{\log N}{\eta} + \frac{T\eta}{8} \,.
		\]
	\end{theorem}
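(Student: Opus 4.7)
The plan is to follow the classical exponential potential argument of Littlestone--Warmuth and Freund--Schapire. I would define the potential $W_t = \sum_{i=1}^N \sigma_i \exp(-\eta L_{t-1,i})$, where $L_{t,i} = \sum_{s=1}^t \ell_{s,i}$ denotes the cumulative loss of expert $i$ through round $t$, so that $W_1 = \sum_i \sigma_i = 1$. The forecaster's update rule gives the identity $p_t(i) = \sigma_i \exp(-\eta L_{t-1,i}) / W_t$, so $p_t$ is the Gibbs distribution induced by the current potential. This identification is what allows per-round changes in $\log W_t$ to be related to the learner's expected loss in a clean way.

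The core of the proof is a two-sided bound on $\log W_{T+1}$. For the lower bound, I retain only one term in the defining sum: for any fixed expert $i^\star$, $W_{T+1} \ge \sigma_{i^\star} \exp(-\eta L_{T,i^\star}) = (1/N)\exp(-\eta \sum_{t=1}^T \ell_{t,i^\star})$ under the uniform prior, so $\log W_{T+1} \ge -\log N - \eta \sum_{t=1}^T \ell_{t,i^\star}$. For the upper bound, I apply Hoeffding's lemma to the distribution $p_t$ viewed as a law on losses $\ell_{t,\cdot}\in[0,1]$: for any $X\in[0,1]$ with law $\mu$, $\log \mathbb{E}_\mu[\exp(-\eta X)] \le -\eta \mathbb{E}_\mu[X] + \eta^2/8$. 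Applying this gives $\log(W_{t+1}/W_t) = \log \sum_i p_t(i)\exp(-\eta \ell_{t,i}) \le -\eta \sum_i p_t(i)\ell_{t,i} + \eta^2/8$, and telescoping over $t=1,\dots,T$ together with $\log W_1 = 0$ yields $\log W_{T+1} \le -\eta \sum_{t=1}^T \langle p_t, \ell_t\rangle + T\eta^2/8$.

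Combining the two bounds and dividing by $\eta$ produces $\sum_{t=1}^T \langle p_t,\ell_t\rangle - \sum_{t=1}^T \ell_{t,i^\star} \le (\log N)/\eta + T\eta/8$ for every expert $i^\star$; taking the minimum over $i^\star$ gives the claimed regret bound. The only delicate ingredient is the Hoeffding lemma step, which is exactly where the loss range $[0,1]$ is used and where the constant $1/8$ originates; in the rescaled version with rewards in $[0,L]$ it becomes $TL^2\eta/8$, consistent with the $L = Kv_{i,1}$ scaling invoked in \eqref{eq:upper_bound_regret_epsilon_cb}. Everything else is routine bookkeeping, and no property of the initial distribution beyond strict positivity is used, so the uniform prior enters only through the quantitative factor $\max_i \log(1/\sigma_i) = \log N$.
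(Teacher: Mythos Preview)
Your argument is correct and is exactly the classical potential/Hoeffding proof that the cited reference \cite{cesa2006prediction} gives; the paper itself does not reprove this statement but merely quotes it from that source, so there is nothing further to compare.
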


	The following corollary is a well known variant of the above theorem, to allow  rewards in an interval $[0,L]$  and  an arbitrary initial distribution $\sigma$ over the experts. 
	
	\begin{corollary}\label{lem:slight_variant}
		Consider the 
		exponentially weighted average forecaster with $N$ experts, learning rate $\eta > 0$, time horizon $T$, and rewards in $[0,L]$. Suppose the  initial distribution on the experts is $\sigma$.  The regret of the forecaster is 
		\[ 
		\text{\rm Reg}(T, (p_1,\cdots,p_T)) \leq  \frac{1}{\eta}\max_{i\in [N]}\log \left( \frac{1}{\sigma_i } \right) + \frac{TL^2 \eta}{8} \,.
		\]
	\end{corollary}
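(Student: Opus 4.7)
The plan is to mimic the potential function argument that underlies Theorem~\ref{thm:cesa_bianchi_book}, modified in exactly two places to accommodate (i) rewards in $[0,L]$ rather than $[0,1]$, and (ii) an arbitrary initial distribution $\sigma$ rather than the uniform one. I will phrase the proof in the reward formulation (so the update becomes $\exp(\eta r)$ in place of $\exp(-\eta \ell)$); the loss version is symmetric.

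First I would introduce the potential $W_t = \sum_{i=1}^N \sigma_i \exp\!\bigl(\eta \sum_{s<t} r_{s,i}\bigr)$, noting that $W_1 = \sum_i \sigma_i = 1$ and that the Hedge distribution satisfies $p_t(i) = \sigma_i \exp(\eta \sum_{s<t} r_{s,i})/W_t$. Then I would sandwich $\log W_{T+1}$ from both sides. For the upper bound, observe that $W_{t+1}/W_t = \Ex_{i\sim p_t}[\exp(\eta r_{t,i})]$, and Hoeffding's lemma for a random variable taking values in $[0,L]$ yields $\log(W_{t+1}/W_t) \le \eta \Ex_{p_t}[r_t] + \eta^2 L^2 / 8$; telescoping over $t=1,\dots,T$ gives $\log W_{T+1} \le \eta \sum_{t=1}^T \Ex_{p_t}[r_t] + T\eta^2 L^2 / 8$. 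For the lower bound, dropping all but one summand in the definition of $W_{T+1}$ yields $\log W_{T+1} \ge \log \sigma_{i^\star} + \eta \sum_{t=1}^T r_{t,i^\star}$ for each expert $i^\star \in [N]$.

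Combining the two inequalities, dividing by $\eta$, and taking the maximum over $i^\star$ produces the claimed regret bound. There is no real obstacle here: the only changes relative to the proof of Theorem~\ref{thm:cesa_bianchi_book} are that Hoeffding's lemma now contributes $\eta^2 L^2 / 8$ per round instead of $\eta^2 / 8$, and that the lower bound on $\log W_{T+1}$ naturally carries the term $\log \sigma_{i^\star}$ whenever $\sigma$ is non-uniform (the uniform case $\sigma_{i^\star} = 1/N$ recovers Theorem~\ref{thm:cesa_bianchi_book} exactly). One could alternatively derive the corollary as a reduction, by rescaling rewards by $1/L$ and the learning rate by $L$ to land in the $[0,1]$ setting and by encoding the prior $\sigma$ via ``phantom'' pre-game reward offsets of magnitude $(\log \sigma_i)/\eta$, but the direct potential-function argument above is cleaner and makes the provenance of each term transparent.
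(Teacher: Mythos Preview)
Your proposal is correct and matches the paper's approach essentially line for line: the paper also redefines the potential as $W_t = \sum_i \sigma_i \exp(-\eta \sum_{s\le t}\ell_{s,i})$ to obtain the $\max_i \log(1/\sigma_i)$ term in the lower bound, and invokes Hoeffding's lemma (\cite[Lemma~2.2]{cesa2006prediction}) on the interval $[0,L]$ to obtain the $L^2$ factor. The only cosmetic difference is that the paper stays in the loss formulation while you switch to rewards, which, as you note, is immaterial.
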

	
	The proof of Corollary \ref{lem:slight_variant} is identical to that of Theorem~\ref{thm:cesa_bianchi_book}, except for the following two differences: 
	\begin{itemize}
		\item Instead of $W_t = \sum_{i=1}^N \exp(-\eta \sum_{s\le t}\ell_{s,i})$, we define $W_{t} = \sum_{i=1}^N \sigma_i\exp(-\eta \sum_{s\le t} \ell_{s,i})$. In this way, 
		\begin{align*}
			\log \frac{W_T}{W_0} = \log W_T \ge -\eta \min_{i^\star \in [N]}\sum_{t=1}^T \ell_{t,i^\star} - \max_{i\in [N]}\log\frac{1}{\sigma_i }. 
		\end{align*}
		\item When applying \cite[Lemma 2.2]{cesa2006prediction}, we use the interval for the rewards as $[0,L]$ instead of $[0,1]$. 
	\end{itemize}

\end{document}